\newtheorem{theorem}{Theorem}
\newtheorem{lemma}[theorem]{Lemma}
\newtheorem{standing assumption}[theorem]{Standing Assumption}
\newtheorem{corollary}[theorem]{Corollary}
\newcommand{\X}{\mathcal{X}}
\let\P\relax
\DeclareMathOperator{\P}{\mathbb{P}}
\newcommand{\Pn}{\mathbb{P}_n}
\DeclareMathOperator{\PX}{\mathcal{P}}
\newcommand{\PkX}{\mathcal{P}_k}
\newcommand{\Pcp}{\mathcal{P}_{c,p}}
\DeclareMathOperator{\R}{\mathbb{R}}
\DeclareMathOperator{\U}{\mathbb{U}}
\DeclareMathOperator{\calU}{\mathcal{U}}
\newcommand{\Hk}{\mathcal{H}_k}
\DeclareMathOperator{\PTheta}{\mathcal{P}_{\Theta}}
\DeclareMathOperator{\Ptheta}{\mathbb{P}_{\theta}}
\newcommand{\thetamapl}{\theta^*_l}
\DeclareMathOperator{\thetamap}{\theta^*}
\DeclareMathOperator{\truedist}{\P^\ast}
\newcommand{\truedistn}{\P^\ast_n}
\DeclareMathOperator{\simiid}{\overset{iid}{\sim}}
\DeclareMathOperator{\Q}{\mathbb{Q}}
\DeclareMathOperator{\E}{\mathbb{E}}
\DeclareMathOperator{\F}{\mathbb{F}}
\DeclareMathOperator{\GEM}{\operatorname{GEM}}
\DeclareMathOperator{\MMD}{\operatorname{MMD}}
\DeclareMathOperator{\Dir}{\operatorname{Dir}}
\DeclareMathOperator{\DP}{\operatorname{DP}}
\DeclareMathOperator{\nplposteriortheta}{\Pi_{\text{NPL}}}
\DeclareMathOperator{\nplposterior}{\hat{\nu}}
\DeclareMathOperator{\nplexactposterior}{\nu}
\newenvironment{talign*}
{\csname align*\endcsname}
{\endalign}
\newenvironment{talign}
{\align}
{\endalign}
\newcommand\given[1][]{\:#1\vert\:}
\begin{document}

%
\runningtitle{Robust Bayesian Inference for Simulator-based Models via the MMD Posterior Bootstrap}
%

\twocolumn[

\aistatstitle{Robust Bayesian Inference for Simulator-based Models \\ via the MMD Posterior Bootstrap}

\aistatsauthor{ Charita Dellaporta \And Jeremias Knoblauch \And Theodoros Damoulas \And Fran\c{c}ois-Xavier Briol }

\aistatsaddress{ University of Warwick \And University College London \And University of Warwick \And  University College London} ]

\begin{abstract}
   Simulator-based models are models for which the likelihood is intractable but simulation of synthetic data is possible. They are often used to describe complex real-world phenomena, and as such can often be misspecified in practice. Unfortunately, existing Bayesian approaches for simulators are known to perform poorly in those cases. In this paper, we propose a novel algorithm based on the posterior bootstrap and maximum mean discrepancy estimators. This leads to a highly-parallelisable Bayesian inference algorithm with strong robustness properties. This is demonstrated through an in-depth theoretical study which includes generalisation bounds and proofs of frequentist consistency and robustness of our posterior. The approach is then assessed on a range of examples including a g-and-k distribution and a toggle-switch model.
\end{abstract}

\section{INTRODUCTION}

A key assumption of standard Bayesian inference is that the data-generating mechanism lies within the family of models defined by our choice of likelihood \parencite{bernardo2009bayesian}.
In reality, this assumption is often violated, and this can lead to inconsistent and misleading inference outcomes; see for example \textcite{grunwald2017inconsistency,owhadi2015brittleness} for the cases of linear or nonparametric regression.

The need for robust inference methods is even more apparent when dealing with simulator-based models.  In these models, independent sampling is possible, but evaluating the likelihood is intractable. Simulators are used in a range of fields including population genetics \parencite{riesselman2018deep}, ecology \parencite{beaumont2010approximate} and telecommunication engineering \parencite{bharti2021general}; see \textcite{Cranmer2020frontier} for a recent review. In each of these examples, the model is at best a rough approximation of a complex physical or biological phenomenon, and will most likely not capture all of the key characteristics of the underlying data generating process. Unfortunately, the main approaches to Bayesian inference for simulators, including approximate Bayesian computation (ABC) \parencite{beaumont2002approximate}, tend to approximate the classical Bayesian posterior, which itself lacks robustness to misspecification \parencite{Frazier2020}. This leads ABC posteriors to  exhibit poor coverage; see  \textcite{Hermans2021} for an empirical study.

One promising approach for addressing model misspecification is generalised Bayesian inference (GBI; e.g. \textcite[][]{hooker2014bayesian, ghosh2016robust, bissiri2016general,Jewson2018,Knoblauch2019}) and the closely-related approaches of quasi-Bayesian inference \parencite[e.g.][]{Chernozhukov2003}, PAC-Bayesian inference \parencite[e.g.][]{ShaweTaylor,catoniPACBound}, Safe-Bayes \parencite{Grunwald2011}, Split-BOLFI approach for higher-dimensional inference \parencite[][]{thomas2020split} and the Focused Bayesian Prediction approach \parencite[][]{loaiza2021focused}. 
These approaches typically work directly with the original model, but attempt to correct for possible misspecification by scoring observations robustly. 
Recent work has focused on GBI for intractable likelihood models \parencite{schmon2020generalized,pacchiardi2020score, thomas2020generalised, Matsubara2021}.
While some of these techniques address robustness concerns in simulator-based methods, they have two main drawbacks:
Firstly, they require setting a crucial hyperparameter that determines the relative importance of the prior for the posterior inferences.
Secondly, some
make use of Monte Carlo methods, which imposes a substantial computational burden.

Another recent approach for inference under misspecification is  Bayesian nonparametric learning (NPL) \parencite[see][]{lyddon2018nonparametric, lyddon2019general, fong2019scalable}.
Unlike GBI, NPL does not address misspecification by robustly scoring the statistical model.
Instead, robustness is achieved by obtaining a non-parametric posterior directly on the data-generating process.
This posterior then implies a  posterior on the parameter of interest through the use of a robust loss function.

Our paper's contribution is the first NPL-based algorithm for simulators. 
Specifically, we leverage the NPL framework to obtain a posterior belief distribution about the  parameters that minimise the maximum mean discrepancy (MMD) \parencite{gretton2012kernel} between our model and the  data-generating mechanism.
The MMD is a probability metric that is not only robust, but also easy to approximate through simulation---and therefore suitable for simulators. 
Further, the MMD has numerous desirable theoretical robustness and generalisation properties \parencite[see][]{briol2019statistical, cherief2019finite, cherief2020mmd}. One of the main achievements of this paper is to show that these hold for our method whenever we use a bounded kernel. Additionally, unlike ABC or GBI, our approach is computationally efficient: it is trivially parallelisable, and never requires discarding parameter samples or using inherently sequential Monte Carlo methods.

The paper is structured as follows. In \Cref{sec:background}, we recall the details of NPL and introduce MMD estimators. In \Cref{sec:methodology}, we propose our novel algorithm which combines these concepts to create a scalable approach for robust Bayesian inference with simulators. Then, \Cref{sec:theory} provides theoretical results including consistency and robustness. Finally, \Cref{sec:experiments} studies the algorithm on a range of benchmark problems for simulators including the g-and-k distribution, and a toggle-switch model describing the interaction of genes through time.

\section{BACKGROUND}\label{sec:background}

Let $\X$ denote our data space, $\PX$ the space of Borel distributions on $\X$, and $\truedist \in \PX$ the true data-generating mechanism of our data. 
In Bayesian statistics, given observations $x_{1:n} = x_1,\ldots,x_n \simiid \truedist$, one chooses a parametric model $\PTheta = \{\P_{\theta} : \theta \in \Theta\} \subset \PX$. Given $\PTheta$, the Bayesian now places a prior on the parameter $\theta$, and then conditions on $x_{1:n}$ to obtain a posterior on $\theta$. For standard Bayesian inference to be well-behaved, we have to assume that $\PTheta$ is well-specified; i.e  $\exists{\theta_0} \in \Theta$ such that $\P_{\theta_0} = \truedist$. 
When this assumption is violated, we call the model misspecified.

Misspecification in the Bayesian context has recently seen increasing interest  since the Bayesian posterior does not provide robust parameter inferences \parencite[see][]{safeBayes}.
This has led to GBI and NPL approaches aimed at rectifying the issue.
Another approach, albeit not specifically for simulators, is the BayesBag algorithm \parencite[][]{huggins2019robust}. In this paper, we combine the strengths of GBI and NPL approaches for robust inference with simulators.

\subsection{Simulator-Based Inference} \label{sec:likelihoodfree}

The problem of simulator-based models is a significant challenge for Bayesians.  Consider some $\P_\theta \in \PTheta$ with fixed $\theta \in \Theta$, whose density is the likelihood $p(\cdot \given \theta)$ and suppose we have a prior $\pi(\theta)$ on the parameter. The corresponding posterior density is given by 
\begin{talign*}
\pi(\theta \given x_{1:n}) =  \frac{ \prod_{i=1}^n p(x_i \given \theta) \pi(\theta)}{\int_{\Theta} \prod_{i=1}^n p (x_i \given \theta) \pi(\theta) d \theta} \propto  \prod_{i=1}^n p(x_i \given \theta) \pi(\theta).
\end{talign*}
The latter is intractable whenever the likelihood $p(\cdot \given \theta)$ cannot be evaluated pointwise. This has led to the development of simulator-based inference methods (sometimes also called likelihood-free inference methods).

In this paper, we focus on simulator-based models (also called generative models), which are parametric families.
This means that $\P_{\theta}$ can be represented using a distribution $\U$ on a space $\calU$ and a simulator $G_{\theta}: \calU \rightarrow \X$
so that a sample $y \sim \P_\theta$ from the model can be obtained by first sampling $u \sim \U$, and then applying the simulator $y := G_{\theta}(u) \in \X$.

\paragraph{Approximate Bayesian Computation}

ABC algorithms are arguably the most popular family of techniques for tackling Bayesian posteriors of simulator-based models
\parencite[see ][]{beaumont2002approximate, sisson2018abc, beaumont2019approximate}.
Most ABC algorithms are a variation of the following steps:
\begin{enumerate}
\itemindent=0pt
\itemsep=0em 
    \item [(i)] For $b=1,2,...B$ and prior $\pi$, sample $\theta_{b} \simiid \pi$;
    \item [(ii)] For each  $\theta_b$, sample $m$ realisations from $\P_{\theta_b}$ (i.e. simulate $u_{1:m} \simiid \U$ and set $y^{(b)}_i = G_{\theta_b}(u_i)$ $\forall i$);
    \item [(iii)] Compare $y^{(b)}_{1:m}$ with the true data $x_{1:n}$ using a discrepancy $D:\PX \times \PX \rightarrow \R$: $D(\P_n,\widehat{\P}_{\theta_b})$
    where $\P_n = \frac{1}{n}\sum_{i=1}^n \delta_{x_i}$ and $\widehat{\P}_{\theta_b} = \frac{1}{m}\sum_{j=1}^m \delta_{y^{(b)}_{j}}$.
    \item [(iv)] Weight $\theta_b$ as an approximate sample from the posterior according to $D(\P_n,\widehat{\P}_{\theta_b})$.
\end{enumerate}

The function $D$ is often chosen to be a discrepancy comparing summary statistics of the two datasets, but could also be a probability metric. For example, \textcite{bernton2019approximate} used the Wasserstein distance, whilst \textcite{park2016k2} used the MMD.
Step (iv) is usually implemented by verifying whether the discrepancy is smaller than some threshold $\varepsilon$, accepting the sample $\theta_b$ if so, and rejecting it otherwise. This leads to the following approximation of the standard Bayesian posterior density:
\begin{talign*}
    \pi_{\text{ABC}}(\theta \mid x_{1:n}) & \propto \int_{\X} \ldots \int_{\X} \prod_{j=1}^m 1_{\{D(\truedistn,(\Ptheta)_m) < \varepsilon\}}(\theta) \\
    & \qquad \qquad \times p(y_{j}|\theta)\pi(\theta) dy_{1} \ldots dy_{m}.
\end{talign*}
Smaller values of $\varepsilon$ imply an increased computational cost since more parameter samples will be rejected. However, as $\varepsilon \rightarrow 0$, the ABC posterior approaches the standard Bayesian posterior $\pi(\theta|x_{1:n})$. The latter is usually seen as a desirable property of ABC, but it can lead to poor performance because the Bayesian posterior itself lacks robustness to misspecification.

Although the prototype ABC algorithm above is parallelisable, it will tend to be inefficient because many samples will be rejected whenever $\epsilon$ is small. As a result, it is common to use inherently sequential algorithms such as population Monte Carlo \parencite{Beaumont2009} or sequential Monte Carlo with adaptive resampling \parencite{DelMoral2012}.

\paragraph{Alternative Approaches}
There are a number of perhaps less prominent alternative approaches for Bayesian inference for simulators, including Bayesian synthetic likelihoods \parencite{Price2017} and techniques relying on neural density estimation \parencite[see e.g.][]{papamakarios2016fast, papamakarios19a, Cranmer2020frontier}. Just like ABC, all these approaches are non-robust since they approximate the standard Bayesian posterior.

\subsection{Generalised Bayesian Inference (GBI)}

To address the robustness concern for standard Bayesian inference, GBI approaches have recently been proposed.
In GBI, $l_n: \X^n \times \Theta \to \R$ denotes any (empirical) loss function and $\beta>0$ a learning rate. Then, the associated GBI posterior's density is
\begin{talign*}
\pi_{\text{GBI}}(\theta \given x_{1:n}) =  \frac{  \exp\{- \beta l_n( x_{1:n},\theta)\} \pi(\theta)}{\int_{\Theta}  \exp\{- \beta l_n(x_{1:n},\theta)\} \pi(\theta) d \theta}.
\end{talign*}
Here, the loss is typically chosen in relation to the
likelihood model $p(x_{1:n} \given \theta) = \prod_{i=1}^np(x_i\given\theta)$.
Indeed, it is easy to see that for the standard Bayesian posterior, one chooses $l_n(x_{1:n}, \theta) = -\log p(x_{1:n}|\theta)$.
From this, it also becomes clear that the Bayes posterior's lack of robustness is intimately related to choosing the loss function $l_n(x_{1:n}, \theta) = -\log p(x_{1:n}|\theta)$.
To address this, a host of generalised posteriors have been derived by choosing a discrepancy $D$ with desirable robustness properties, and then seeking to find a loss so that $l_n(x_{1:n},\theta) \approx D(\truedist, \P_{\theta})$ \parencite[see][]{Jewson2018}. 

GBI for robustness in simulator-based inference has been studied in \textcite{pacchiardi2021generalized} and \textcite{schmon2020generalized}. 
The main drawback of the proposals in both papers are two-fold: firstly,  the uncertainty quantification properties of $\pi_{\text{GBI}}(\theta \given x_{1:n})$ depend on $\beta$, which can only be chosen based on rough heuristics \parencite[see][]{wu2020comparison}.
Furthermore, the methods require techniques that are computationally more cumbersome: \textcite{pacchiardi2021generalized} use an 
inherently sequential  sampling method based on pseudo-marginal MCMC \parencite{Andrieu2009}, and \textcite{schmon2020generalized} relies on standard ABC methods. 

In this paper, we perform Bayesian inference under model misspecification through a robust discrepancy like in GBI methods---but without relying on computationally burdensome methods like in ABC.

\subsection{Bayesian Nonparametric Learning (NPL)} \label{sec-NPL}
While the Bayesian nonparametric learning (NPL) framework of \textcite{lyddon2018nonparametric} was introduced to deal with misspecification, it also possesses attractive computational properties.
NPL defines a nonparametric prior directly on the true data-generating process $\truedist$, which in turn leads to a nonparametric posterior on $\truedist$.
Any posterior on the parameter space $\Theta$ of $\PTheta$ induced by NPL thus derives from this nonparametric posterior on $\PX$. 
 \begin{figure}[t!] 
  \centering
     \includegraphics[width=\linewidth]{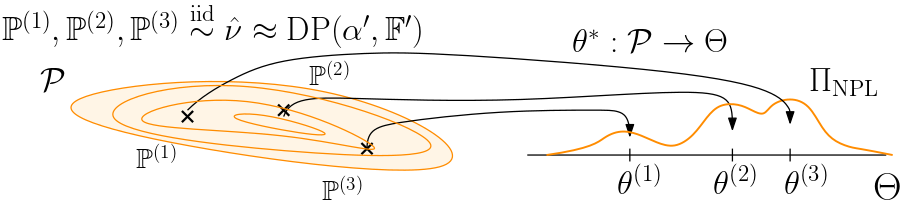}
  \caption{\emph{Sketch of the posterior bootstrap.} Samples from the NPL posterior on $\Theta$ are obtained by mapping realisations of the posterior $\nplposterior$ through the map $\thetamap$.}
  \label{fig:MMDbootstrap_illustration}
 \end{figure}
Note that this is different, but closely related to, standard Bayesian inference where conditioning occurs at the level of parameters as opposed to the level of the data-generating process.

Following  \textcite{lyddon2018nonparametric} and \textcite{fong2019scalable}, we use a Dirichlet Process (DP) prior on the data-generating process: $\P \sim \DP(\alpha, \F)$. 
Here, $\alpha > 0$ is a concentration parameter and $\F \in \PX$ a centering measure. Given $x_{1:n} \simiid \truedist$, it follows by conjugacy that
\begin{talign} 
\begin{split}
\P \given x_{1:n} & \sim \DP\left(\alpha', \F' \right), \\ \alpha' = \alpha+n,\qquad & \F' = \frac{\alpha}{\alpha + n} \F + \frac{n}{\alpha + n} \Pn \label{eq-DPPosterior}
\end{split}
\end{talign}
where $\Pn := \frac{1}{n} \sum_{i=1}^n \delta_{{x}_i}$ and $\delta_x$ is a Dirac measure at $x \in \X$. The size of $\alpha$ quantifies our confidence in the quality of the prior centering measure $\F$ and regulates the influence of the prior  on the posterior.
Accordingly, the limiting case of $\alpha =0$ corresponds to a non-informative prior and results in the posterior  $\DP(n, \Pn)$.
Hence, unlike the hyperparameter $\beta$ in GBI, the heuristic $\alpha = 0$ has a clear interpretation and yields reliable uncertainty quantification \parencite[see][]{fong2019scalable, knoblauch2020robust, galvani2021bayesian}.

The posterior on the data-generating process is readily translated into a posterior on a parameter space $\Theta$.
Suppose we know the \textit{true} data-generating process $\truedist$ and a loss $l:\X \times \Theta \rightarrow \R$.
Clearly, having access to $\truedist$ is equivalent to having access to infinitely many data points.
Consequently, no Bayesian uncertainty is needed, and one can simply compute
\begin{talign} \label{eq-thetaoff}
\thetamapl(\truedist) := \arg\inf_{\theta \in \Theta} \E_{X \sim \truedist}[l(X, \theta)].
\end{talign}
In practice, we do \textit{not} know the true data-generating process.
However, we have a posterior belief over it, so that the 
simple push-forward measure $(\thetamapl)_{\#}(\DP(\alpha', \F' ))$ gives a posterior on $\Theta$ denoted by $\nplposteriortheta$.
This is illustrated in \Cref{fig:MMDbootstrap_illustration}.

For the reader who is not familiar with pushforward distributions, $\nplposteriortheta$ is best understood through the following sampling mechanism to obtain independent realisations from $\nplposteriortheta$.
At iteration $j$, 
\\[-0.75cm]
\begin{enumerate}
    \item Sample $\P^{(j)}$ from the posterior DP in \eqref{eq-DPPosterior};
    \\[-0.65cm]
    \item Compute $\theta^{(j)} = \thetamapl(\P^{(j)})$ using \eqref{eq-thetaoff}.
    \\[-0.7cm]
\end{enumerate}
This procedure is trivially parallelisable and does not discard any samples.
It therefore overcomes the computational inefficiencies of ABC-based methods, whose rejection rate for samples is typically quite high---especially as $\varepsilon$ is moved closer to zero.

 Exact sampling from a DP as in step 1 above is usually infeasible.
 The most common approximation is the truncated stick-breaking  procedure \parencite{sethuraman1994constructive}.
 This procedure in turn can be approximated by the Dirichlet approximation of the stick breaking process \parencite{muliere1996bayesian,ishwaran2002exact}. In our case, this leads to
\begin{talign} \label{eq-DP-post_approx}
 \tilde{x}^{(j)}_{1:T} \simiid \F&, 
 \quad (w^{(j)}_{1:n}, \tilde{w}^{(j)}_{1:T}) \sim \Dir(1, \dots, 1, \frac{\alpha}{T}, \dots, \frac{\alpha}{T}). \nonumber \\
 \P^{(j)} & \approx \sum_{i=1}^n w_i^{(j)} \delta_{x_i} + \sum_{k=1}^{T} \tilde{w}_k^{(j)} \delta_{\tilde{x}_k^{(j)}}.
 \end{talign}
 $\nplposterior$ denotes the probability measure on $\PX$ defined by  \eqref{eq-DP-post_approx}, so that $\P = \sum_{i=1}^n w^{(j)} \delta_{x_i} + \sum_{k=1}^T \tilde{w}_k^{(j)} \delta_{\tilde{x}_k^{(j)}} \sim \nplposterior$. 

It is generally not possible to obtain the minimiser in \eqref{eq-thetaoff} in closed form, and this objective may not even be convex. 
This necessitates the use of numerical optimisers like stochastic gradient descent, so that step 2 above is typically only performed approximately.

In the current paper, we use the NPL framework with a loss $l$ that corresponds to the Maximum Mean Discrepancy (MMD)---a robust discrepancy popular in GBI methods  \parencite{cherief2020mmd, pacchiardi2021generalized}.
This implies that computationally, the second step in our NPL algorithm amounts to minimum distance estimation as introduced in \textcite{briol2019statistical}.

\subsection{Minimum Distance Estimation with Robust Discrepancies}\label{sec:minimumMMDestimators}

Since NPL depends on a minimisation step, we will revisit a branch of frequentist statistics whose theory and methodology we extensively draw on for our algorithm:  Minimum distance estimators (MDEs) \parencite{Parr1980}.
MDEs are a frequentist approach to parameter estimation. Given a discrepancy $D:\PX \times \PX \rightarrow \R_+$, the MDE is
\begin{talign}\label{eq:MDE}
\hat{\theta}_n := \arg\inf_{\theta \in \Theta} D(\truedistn,\P_{\theta}).
\end{talign}
MDEs can be robust to model misspecification when the underlying discrepancy is chosen with this property in mind.
A common choice of discrepancy $D$ is integral pseudo-probability metrics \parencite{muller1997integral}:
\begin{talign*}
\text{IPM} &(\P, \Q) = \sup_{f \in \mathcal{F}}  \left |\int_{\X} f(x) \P(dx) - \int_{\X} f(y) \Q(dy) \right |.
\end{talign*}
IPMs can be thought of as comparing a family of summary statistics indexed by the class $\mathcal{F}$. There are two common IPMs in the context of simulators:

\textbf{(i) Wasserstein Distance} Let $c:\X \times \X \rightarrow [0,\infty)$ be a metric and let $p \geq 1$. Furthermore, let $\mathcal{P}_{c,p} = \{ \P \in \PX \given \int_{\X} c^p(x,y) \P(dx) < \infty \; \forall y \in \X \}$. Then the Wasserstein distance is a map $W: \Pcp \times \Pcp \rightarrow \R_+$ obtained by considering an IPM with $\mathcal{F}_W := \{f:\X \rightarrow \R \given \forall x,y \in \X, |f(x)-f(y)| \leq c(x,y)\}$. The Wasserstein distance was used for MDE by \textcite{Bassetti2006,Bernton2017} and, as previously mentioned, was used for ABC by \textcite{bernton2019approximate}.

\textbf{(ii) Maximum Mean Discrepancy} Let $\Hk$ be a reproducing kernel Hilbert space (RKHS) with kernel $k: \X \times \X \rightarrow \R$ and norm $\| \cdot \|_{\Hk}$. Let $\PkX = \{ \P \in \PX \given \int_{\X} \sqrt{k(x,x)} \P(dx) < \infty \}$. Then the MMD is a map $\MMD:\PkX \times \PkX \rightarrow \R_+$ obtained by considering an IPM with $\mathcal{F}_{\text{MMD}} := \{ f:\X \rightarrow \R \given \|f\|_{\Hk} \leq 1\}$. When the kernel $k$ is characteristic \parencite{sriperumbudur2010hilbert}, the MMD is a probability metric.  A common characteristic kernel is the Gaussian kernel  $k(x, x') = \exp(-\|x-x'\|_2^2 / (2 l^2))$ where $l >0$.

The MMD was first considered for MDE by \textcite{briol2019statistical} and was further explored in \textcite{cherief2019finite,Alquier2020,Niu2021}. It is closely related to the use of MMD in generative adversarial networks \parencite{dziugaite2015training,li2015generative}, and was used for goodness-of-fit testing with composite hypotheses in \textcite{Key2021}. As previously mentioned, it has also been used for ABC  \parencite{park2016k2} and GBI \parencite{cherief2020mmd,pacchiardi2021generalized}. 

The Wasserstein distance and MMD are both popular in the context of simulators because they can both be computed or approximated for empirical measures. One can therefore use the simulator to sample $m$ realisations from $\Ptheta$, then use $D(\widehat{\P}_{\theta_b},\truedistn)$ as an approximation of $D(\Ptheta,\truedistn)$ in \eqref{eq:MDE}. In this context, the MMD has the advantage over the Wasserstein distance in that it is a robust distance \parencite{briol2019statistical,cherief2019finite}, and can be computed in quadratic, rather than cubic, time in $n$ and $m$.

\section{METHODOLOGY}\label{sec:methodology}

Our paper proposes \emph{Bayesian nonparametric learning  with the MMD} for robust and scalable inference in simulator models. 
Compared to ABC, our method has two main computational advantages: it does not discard {any} samples, and it is trivially parallelisable.
Furthermore, we will show in \Cref{sec:theory} that the approach  inherits the robustness properties of both the NPL framework and the MMD; and therefore satisfies a number of desirable properties---including finite-sample generalisation bounds, robustness guarantees, and frequentist consistency. Notably, all of these guarantees hold under model misspecification,
highlighting the approach's usefulness for simulator models of complex data generating mechanisms.

Assume we have observed data $x_{1:n} \simiid \truedist$ and are interested in inference with a parametric family $\PTheta$ of simulator-based models.
Our approach is to use the NPL framework with the loss given by the kernel scoring rule $l_k$ \parencite{Eaton1982,Dawid2007}:
\begin{talign*}
l_k(x,\theta) &= k(x,x) - 2 \int_{\X} k(x,y) \P_{\theta}(dy)\\
& \qquad + \int_{\X \times \X} k(y,z) \P_{\theta}(dy)\P_{\theta}(dz).
\end{talign*}
For this loss, the minimiser in \eqref{eq-thetaoff} becomes the MMD estimator of \textcite{briol2019statistical} given by:
\begin{talign}
\thetamap(\P) & := \arg\inf_{\theta \in \Theta} \MMD^2(\P,\P_{\theta}) \nonumber\\
\begin{split}
& = \arg\inf_{\theta \in \Theta} \int_{\X \times \X} k(x,y) \P_{\theta}(dx)\P_{\theta}(dy)  \\
 &\qquad - \frac{2}{n} \sum_{i=1}^n \int_{\X} k(x_i,x)\P_{\theta}(dx). \label{eq:thetamap}
 \end{split}
\end{talign}
This objective can easily be  approximated by sampling from both measures and using a U-statistic.
The same holds for its gradient, which naturally leads to a stochastic gradient descent algorithm. Full details are provided in \Cref{app:experiments}.

\IncMargin{1em}
\begin{algorithm}[h] 
\SetAlgoLined
\SetKwInOut{Input}{input}
 \Input{$x_{1:n}$, $T$, $B$, $\alpha$, $\F$, $\U$, $G_{\theta}$.}
 \For{$j\gets1$ \KwTo $B$}{
 Sample $\tilde{x}_{1:T}^{(j)} \simiid \F$ and \\
 $(w_{1:n}^{(j)}, \tilde{w}_{1:T}^{(j)} ) \sim \Dir\left(1, \dots, 1, \frac{\alpha}{T}, \dots, \frac{\alpha}{T}\right)$. \\
 Set $\P^{(j)} = \sum_{i=1}^n w_i^{(j)} \delta_{x_i} + \sum_{k=1}^{T} \tilde{w}_{k}^{(j)}\delta_{\tilde{x}_k^{(j)}}$. \\
 Obtain $\theta^{(j)} = \theta^\ast(\P^{(j)})$ using numerical optimisation. \\
 }
  \Return Posterior bootstrap sample $\theta^{(1:B)}$
 \caption{MMD Posterior Bootstrap}
 \label{alg:MMD_posterior_bootstrap}
\end{algorithm}
\DecMargin{1em}

Pseudo code of our approach is given in Algorithm \ref{alg:MMD_posterior_bootstrap}, and
we call the resulting procedure \textit{MMD posterior bootstrap} to pay homage to the fact that in the limiting case of $\alpha \rightarrow 0$, it is computationally similar to the Generalised Bayesian Bootstrap of \textcite{lyddon2019general}. Incorporating prior information in the Posterior Bootstrap has also recently been discussed in \textcite{pompe2021introducing}.
From a practical standpoint, this limiting case is particularly interesting because it  eliminates  dependence on the centering measure $\F$. 
This corresponds to a non-informative prior  \parencite[see][]{fong2019scalable, knoblauch2020robust}, and reflects our uncertainty about the model in a misspecified setting.

\section{THEORY}\label{sec:theory}

Before presenting our experiments, we provide a theoretical study for which we impose:
\begin{standing assumption} \label{as1}
$| k(x,y) | \leq 1 \quad \forall x,y \in \X$.
\end{standing assumption}
While the kernel is required to be bounded, the choice of upper bound $1$ is without loss of generality, since we only consider the minimiser of the MMD, which does not depend on the bound's magnitude.
Note that the condition also ensures that $\F$, $\truedist$, and any element of $\PTheta$ are in $\PkX$.
While no other assumption on the kernel is needed for our theory, it is desirable for the kernel to be characteristic---as this guarantees that the MMD is a metric on $\PkX$ and hence that $\truedist$ can be recovered in well-specified models. 

Since the conjugacy of the DP posterior in the NPL setting relies on the assumption that the data are i.i.d, we also inherit this requirement:
\begin{standing assumption} \label{as2}
 $x_{1:n} \simiid \truedist$.
\end{standing assumption}
Our results provide the first generalisation, robustness, and consistency guarantees for NPL posteriors.
Our first result is a generalisation bound in terms of the MMD expected under $\nplposterior$ (\Cref{gen_error_bound}). 
Beyond that, we also prove that consistency in the frequentist sense (\Cref{post_consistency_miss}), and robustness to outliers (\Cref{cor-contam-mis}). 
A particular characteristic is that all of these results hold for the misspecified setting where $\truedist \not\in \PTheta$. The well-specified counterparts are Corollaries \ref{cor-gen-error-well}, \ref{post_consistency_well}, and \ref{cor-contam-well} in Appendix \ref{app:proofs}, and are obtained by noting that in this case  $\inf_{\theta \in \Theta} \MMD(\Ptheta, \truedist) = 0$. 

\subsection{Generalisation Error}
First, we bound the  generalisation error of our procedure, i.e. the error we expect based on unseen data from the true data-generating mechanism.
Here, the notion of error considered is the MMD to be expected under the (approximated) posterior, which is given by $\E_{\P \sim \nplposterior} \left[\MMD(\truedist, \P_{\theta^\ast(\P)})\right]$. We therefore bound the expected value of this quantity under unseen data from the data-generating mechanism:
\begin{theorem} \footnote{An earlier version of the paper contained a small error in this bound. The overall rate of the bound and discussions hold the same.}\label{gen_error_bound}
\begin{talign*}
    &\E_{x_{1:n} \simiid \truedist} \left[ \E_{\P \sim \nplposterior} \left[\MMD(\truedist, \P_{\theta^\ast(\P)}) \right] \right] \\
    &\leq \inf_{\theta \in \Theta}  \MMD( \truedist,\P_{\theta}) 
    + \frac{2}{\sqrt{n}} 
    + 2\sqrt{\frac{2(n - 1) + \alpha(\alpha+1)}{(\alpha+n) (n + \alpha + 1)}} \\
    &\quad \quad+ 2\sqrt{\frac{\alpha (1 + \alpha)}{(\alpha +n)(\alpha+n+1)}}.
\end{talign*}
\end{theorem}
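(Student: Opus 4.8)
The plan is to combine the optimality (variational) characterisation of $\thetamap$ with the triangle inequality for the MMD. Writing $\mu_\P := \int_\X k(\cdot,x)\,\P(dx) \in \Hk$ for the kernel mean embedding and recalling that $\MMD(\P,\Q) = \|\mu_\P - \mu_\Q\|_{\Hk}$, the triangle inequality for MMD is just the one in $\Hk$. For any fixed realisation $\P$ and any $\theta \in \Theta$, optimality of $\thetamap(\P)$ gives $\MMD(\P, \P_{\thetamap(\P)}) \le \MMD(\P, \Ptheta)$, which combined with two applications of the triangle inequality yields $\MMD(\truedist, \P_{\thetamap(\P)}) \le 2\,\MMD(\truedist, \P) + \MMD(\truedist, \Ptheta)$. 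Since the left-hand side is independent of $\theta$, I take the infimum over $\theta$ on the right and then expectations over both $\P \sim \nplposterior$ and $x_{1:n} \simiid \truedist$, reducing the claim to bounding $2\,\E_{x_{1:n}}\E_{\P\sim\nplposterior}[\MMD(\truedist,\P)]$ by the last three terms of the statement.

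Next I would insert the empirical measure as an intermediate point, $\MMD(\truedist,\P) \le \MMD(\truedist,\Pn) + \MMD(\Pn,\P)$, and handle the pieces separately. For the first, $\mu_{\Pn}$ is an i.i.d.\ average with mean $\mu_{\truedist}$, so a direct variance computation gives $\E_{x_{1:n}}[\MMD^2(\truedist,\Pn)] = \tfrac1n\bigl(\E[k(X,X)] - \|\mu_{\truedist}\|^2_{\Hk}\bigr) \le \tfrac1n$ using $|k|\le1$; Jensen's inequality then produces the term $\tfrac{2}{\sqrt n}$. For the second I exploit the exact decomposition $\mu_\P - \mu_{\Pn} = u + v$ with $u := \sum_{i=1}^n (w_i - \tfrac1n) k(\cdot,x_i)$ and $v := \sum_{k=1}^T \tilde w_k\, k(\cdot,\tilde x_k)$, so that $\MMD(\Pn,\P) \le \|u\|_{\Hk} + \|v\|_{\Hk}$, and bound $\E\|u\|_{\Hk} \le (\E\|u\|^2_{\Hk})^{1/2}$ and $\E\|v\|_{\Hk} \le (\E\|v\|^2_{\Hk})^{1/2}$ by Jensen. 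This separates the data-reweighting and augmentation contributions, which become the third and fourth terms respectively.

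The computational core, and the main obstacle, is evaluating $\E\|u\|^2_{\Hk}$ and $\E\|v\|^2_{\Hk}$. Expanding the squared norms turns these into double sums of kernel evaluations weighted by the first and second moments of the Dirichlet weights $(w_{1:n},\tilde w_{1:T}) \sim \Dir(1,\dots,1,\tfrac\alpha T,\dots,\tfrac\alpha T)$; the relevant quantities are the variances $\tfrac{m_i(1-m_i)}{\alpha+n+1}$, the \emph{negative} off-diagonal covariances $\tfrac{-a_i a_{i'}}{(\alpha+n)^2(\alpha+n+1)}$, and the mean shift $m_i - \tfrac1n = -\tfrac{\alpha}{n(\alpha+n)}$, where $m_i = a_i/(\alpha+n)$. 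The delicate point is that these covariances have mixed signs, so one cannot merely bound $|k|\le1$ termwise; instead I would split each moment into a centred (covariance) part and a mean-shift part, take expectations over $\tilde x_{1:T}\simiid\F$ and $x_{1:n}\simiid\truedist$ so that diagonal kernel terms become $\E[k(X,X)]$ and off-diagonal ones become squared embedding norms $\|\mu_{\truedist}\|^2_{\Hk}$ or $\|\mu_{\F}\|^2_{\Hk}$, and only then apply $|k|\le1$ with the correct orientation (discarding nonnegative off-diagonal masses, bounding diagonals by $1$). After simplification---during which the $T$-dependent contributions to $\E\|v\|^2_{\Hk}$ cancel exactly, making the bound uniform in $T$---this yields $\E\|u\|^2_{\Hk} \le \tfrac{2(n-1)+\alpha(\alpha+1)}{(\alpha+n)(\alpha+n+1)}$ and $\E\|v\|^2_{\Hk} \le \tfrac{\alpha(1+\alpha)}{(\alpha+n)(\alpha+n+1)}$. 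Collecting the three contributions and restoring the factor $2$ from the first step gives the claim. I expect the sign bookkeeping in the $u$-term, together with verifying that the stated closed form dominates the exact moment expression (a polynomial inequality in $n$ and $\alpha$, which holds with slack of order $(n-1)(n+2\alpha)$), to be the most error-prone steps.
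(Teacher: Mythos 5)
Your proposal is correct and follows essentially the same route as the paper's proof: the same triangle-inequality/optimality reduction to bounding $2\,\E[\MMD(\truedist,\P)]$, the same insertion of $\Pn$, the same splitting of $\mu_{\P}-\mu_{\Pn}$ into a data-reweighting term and a prior-augmentation term, and the same Dirichlet-moment computations combined with Jensen's inequality (the paper delegates the $2/\sqrt{n}$ and remaining terms to its Lemmas \ref{lemma7.1} and \ref{ptildephat}). The only difference is in handling the mixed-sign off-diagonal terms: you split into covariance and mean-shift parts and discard sign-definite pieces, whereas the paper simply bounds the off-diagonal coefficient by its absolute value (relaxing $|\alpha^2+\alpha-n|$ to $\alpha^2+\alpha+n$), and both treatments land on the identical closed form.
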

Since the expectation over $\E_{\P \sim \nplposterior} \left[\MMD(\truedist, \P_{\theta^\ast(\P)})\right]$ is lower-bounded by $\inf_{\theta \in \Theta}  \MMD( \P_{\theta}, \truedist)$, 
this result tells us that, in expectation, the error is at most ${2}/{\sqrt{n}} + {2\sqrt{2(n-1)+\alpha(\alpha+1)}/{\sqrt{(\alpha + n)(\alpha + n + 1)}}}
    + {2 \sqrt{\alpha(1 + \alpha)}}/{\sqrt{(\alpha + n)(\alpha + n + 1)}}$. 
All terms vanish as $n\to\infty$, which implies a $\sqrt{n}$-rate as the first two terms dominate the third. This is the same rate as for the frequentist MMD estimators of \textcite{briol2019statistical} (Theorem 1), as well as the MMD-based GBI of \textcite{cherief2019finite} (Theorem 3.1).

 \begin{figure*}[t!]
\centering
\includegraphics[width=0.99\textwidth]{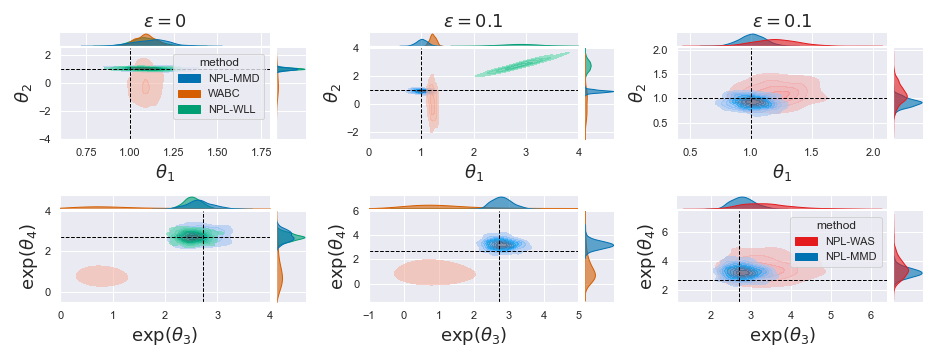}
\vspace{-5mm}
\caption{Posterior marginal distributions for the Gaussian location model in $d=4$. The true parameter $\theta_0$ is indicated by dotted lines. (Left \& middle) Comparison against WABC and NPL-WLL methods for the well-specified case and $\epsilon=0.1$. (Right) Comparison against NPL with the Wasserstein distance for $\epsilon = 0.1$. We note that in the low middle panel, the NPL-WLL method is not visible as the samples lie significantly away from $\theta_0$. }
\label{fig-post-marg-gauss}
\end{figure*}
 \begin{figure*}[t!]
\centering
\includegraphics[width=0.75\textwidth]{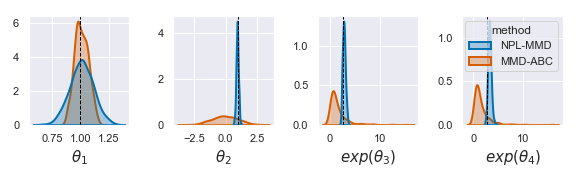}
\caption{Marginal densities of the posterior obtained by NPL-MMD and ABC with the MMD for the Gaussian location model in $d=4$ for $\epsilon = 0.1$. The true parameter $\theta_0$ is indicated by dotted lines. }
\label{fig-mabc}
\end{figure*}
\subsection{Posterior Consistency}
To deepen our analysis, we consider consistency in the frequentist sense.
This guarantees that the posterior contracts around the optimal value $\theta^{\ast}(\truedist)$
as $n\to\infty$. 
For standard Bayesian inference with posterior measure $\Pi_n$ (whose density was previously denoted $\pi(\cdot \given x_{1:n})$) defined on $\Theta$ directly,
and where our model is misspecified so that $\exists C > 0$ such that  $\inf_{\theta \in \Theta} \MMD(\P_{\theta}, \truedist) = C$,
this would amount to 
\begin{talign} \label{Pi-case-consistency}
    \Pi_n\left( 
            \theta \in \Theta:
            \MMD(\P_{\theta}, \truedist) > 
         C+ \frac{M_n}{n^{1/2}}
    \right)
     \overset{n\to\infty}{\longrightarrow}  0
\end{talign}
for a sequence $M_n \rightarrow + \infty$ so that $M_nn^{-\frac{1}{2}} \rightarrow 0$ as $n \rightarrow \infty$.
In other words, the posterior measure over regions of $\Theta$ that induces large values for $\MMD(\P_{\theta}, \truedist)$ goes to $C$ as we obtain more data so that it must ultimately concentrate around increasingly small neighbourhoods of  optimal MMD-minimising values for $\theta$. 

In our case, the posterior $\nplposteriortheta$ is defined implicitly: given the function $\theta^\ast(\P)$ and
the approximate posterior $\nplposterior$ on $\P$ constructed via equation \eqref{eq-DP-post_approx}, our posterior on $\theta$ is obtained by the push-forward operation.
Thus, the equivalent statement for our case concerns $\nplposterior$:
\begin{theorem} \label{post_consistency_miss}
Suppose our model is misspecified so that for some $C > 0$ we have  $\inf_{\theta \in \Theta} \MMD(\P_{\theta}, \truedist) = C$. Then, we have that for any $M_n \rightarrow + \infty$:
\begin{talign*}
    \nplposterior \left(  \P \in \PX :
    \MMD(\P_{\theta^\ast(\P)}, \truedist) > C + \frac{M_n}{n^{1/2}} \right) \overset{n\to\infty}{\longrightarrow} 0.
\end{talign*}
\end{theorem}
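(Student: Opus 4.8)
The plan is to obtain the stated contraction directly from the generalisation bound of \Cref{gen_error_bound} through a single application of Markov's inequality, exploiting the fact that $\MMD(\P_{\theta^\ast(\P)}, \truedist)$ is bounded below by $C$ along the entire range of the map $\theta^\ast$. I treat $\nplposterior$ as the data-dependent posterior and read the convergence to $0$ as convergence in $L^1$ (hence in probability) under $x_{1:n} \simiid \truedist$.

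First I would record a deterministic lower bound: since $\theta^\ast(\P) \in \Theta$ for every $\P$, we have $\MMD(\P_{\theta^\ast(\P)}, \truedist) \ge \inf_{\theta \in \Theta} \MMD(\P_\theta, \truedist) = C$ for $\nplposterior$-almost every $\P$. Consequently $\MMD(\P_{\theta^\ast(\P)}, \truedist) - C$ is non-negative under $\nplposterior$, which is precisely what licenses the Markov step.

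Next, conditionally on the data, I would apply Markov's inequality to this non-negative quantity with threshold $M_n / \sqrt{n}$, giving
\begin{talign*}
\nplposterior\!\left( \MMD(\P_{\theta^\ast(\P)}, \truedist) > C + \tfrac{M_n}{\sqrt{n}} \right)
\le \frac{\sqrt{n}}{M_n}\, \E_{\P \sim \nplposterior}\!\left[ \MMD(\P_{\theta^\ast(\P)}, \truedist) - C \right].
\end{talign*}
Taking the expectation over $x_{1:n} \simiid \truedist$ and inserting \Cref{gen_error_bound} with $\inf_{\theta \in \Theta} \MMD(\truedist, \P_\theta) = C$ bounds the right-hand side by $\sqrt{n}/M_n$ times the three residual terms of the generalisation bound. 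A short asymptotic inspection shows each is $O(1/\sqrt{n})$: the $2/\sqrt{n}$ term is exactly of this order, the term with numerator $2(n-1) + \alpha(\alpha+1)$ behaves like $2\sqrt{2}/\sqrt{n}$, and the final term is $O(1/n)$. Multiplying by $\sqrt{n}/M_n$ thus leaves a quantity of order $1/M_n$, which vanishes as $M_n \to \infty$.

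Since the left-hand probability is non-negative and its expectation tends to $0$, it converges to $0$ in $L^1$ and hence in probability, as claimed. I do not expect a substantive obstacle here, because the genuinely hard estimate---controlling the average posterior MMD---is already supplied by \Cref{gen_error_bound}. The only delicate points are bookkeeping: keeping straight which expectation is taken under $\nplposterior$ versus under the sampling distribution of the data, verifying that the correction terms really are $O(1/\sqrt{n})$ so that the $\sqrt{n}$ produced by the Markov threshold is absorbed, and fixing the precise mode of convergence intended by the arrow in the statement.
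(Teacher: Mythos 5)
Your proposal is correct and takes essentially the same route as the paper's proof: a single application of Markov's inequality to the non-negative quantity $\MMD(\P_{\theta^\ast(\P)}, \truedist) - C$, followed by \Cref{gen_error_bound} with $\inf_{\theta \in \Theta}\MMD(\P_{\theta},\truedist) = C$ and the same asymptotic check that the residual terms are $O(n^{-1/2})$, so that the Markov threshold's factor of $\sqrt{n}/M_n$ leaves a quantity of order $1/M_n$. Your explicit justification of non-negativity (via $\theta^\ast(\P) \in \Theta$) and your reading of the convergence as $L^1$/in-probability under the data merely make precise what the paper leaves implicit.
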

\subsection{Robustness to Outliers}

To assess robustness against the presence of outliers in the dataset, we consider Huber's contamination model \parencite{huber1992robust}.
In this setting, a proportion $1-\varepsilon$ of the observed data is  generated from the distribution of interest $\tilde{\P}$, 
and the rest follows a contaminating noise distribution; i.e.
$
\truedist = (1 - \epsilon) \tilde{\P} + \epsilon \Q
$
for $\tilde{\P}, \Q \in \PX$ and $\epsilon \in [0,1]$. 
Here, $\Q$ is the contaminant, and so the goal is to place most posterior mass on values of $\theta$ for which $\P_{\theta}  \approx \tilde{\P}$, where closeness is measured via the MMD.
\begin{corollary}\label{cor-contam-mis} Suppose $\truedist = (1 - \epsilon) \tilde{\P} + \epsilon \Q$. Then
\begin{talign*}
     &\E_{x_{1:n} \simiid \truedist} \big[ \E_{\P \sim \nplposterior} \big[ \MMD \big( \tilde{\P}, \P_{\theta^\ast(\P)}\big) \big] \big]  \\
     &\leq \inf_{\theta \in \Theta} \MMD(\tilde{\P} , \P_{\theta})
     + 4 \epsilon + \frac{2}{\sqrt{n}} 
    + 2\sqrt{\frac{2(n - 1) + \alpha(\alpha+1)}{(\alpha+n) (n + \alpha + 1)}} \\
    &\quad \quad+ 2\sqrt{\frac{\alpha (1 + \alpha)}{(\alpha +n)(\alpha+n+1)}}. 
\end{talign*}
\end{corollary}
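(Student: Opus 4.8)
The plan is to reduce the claim to the generalisation bound of \Cref{gen_error_bound} by two applications of the triangle inequality for the MMD, paying a price of $2\epsilon$ each time. The only genuinely new ingredient is a bound on $\MMD(\truedist, \tilde{\P})$, the discrepancy between the contaminated sampling distribution and the target $\tilde{\P}$; everything else is inherited from \Cref{gen_error_bound}, whose outer expectation $\E_{x_{1:n}\simiid\truedist}$ already accounts for data drawn from the contaminated mixture $\truedist = (1-\epsilon)\tilde{\P} + \epsilon\Q$, so the DP posterior $\nplposterior$ and all the $n$- and $\alpha$-dependent machinery carry over unchanged.

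First I would control $\MMD(\truedist, \tilde{\P})$. Writing $\mu_{\P} := \int_{\X} k(\cdot, x)\,\P(dx) \in \Hk$ for the kernel mean embedding, linearity in the measure gives $\mu_{\truedist} = (1-\epsilon)\mu_{\tilde{\P}} + \epsilon\mu_{\Q}$, hence $\mu_{\truedist} - \mu_{\tilde{\P}} = \epsilon(\mu_{\Q} - \mu_{\tilde{\P}})$ and therefore $\MMD(\truedist, \tilde{\P}) = \epsilon\,\MMD(\Q, \tilde{\P})$. Under \Cref{as1} we have $\|\mu_{\P}\|_{\Hk}^2 = \E_{X,X'\sim\P}[k(X,X')] \le 1$ for every $\P \in \PkX$, so $\MMD(\Q,\tilde{\P}) \le \|\mu_{\Q}\|_{\Hk} + \|\mu_{\tilde{\P}}\|_{\Hk} \le 2$ and thus $\MMD(\truedist,\tilde{\P}) \le 2\epsilon$.

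Next I would assemble the bound. For any fixed $\P$, the triangle inequality gives $\MMD(\tilde{\P}, \P_{\theta^\ast(\P)}) \le \MMD(\tilde{\P}, \truedist) + \MMD(\truedist, \P_{\theta^\ast(\P)})$; taking $\E_{\P\sim\nplposterior}$ and then $\E_{x_{1:n}\simiid\truedist}$ leaves the first (constant) term untouched and lets me apply \Cref{gen_error_bound} to the second. In the resulting bound the leading term $\inf_{\theta\in\Theta}\MMD(\truedist, \P_\theta)$ is itself shifted to the desired target via $\inf_{\theta}\MMD(\truedist, \P_\theta) \le \MMD(\truedist, \tilde{\P}) + \inf_{\theta}\MMD(\tilde{\P}, \P_\theta)$. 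The two appearances of $\MMD(\truedist, \tilde{\P})$ then contribute $2\,\MMD(\truedist, \tilde{\P}) \le 4\epsilon$, while the remaining $n$- and $\alpha$-dependent terms transfer verbatim from \Cref{gen_error_bound}, yielding the stated inequality.

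There is no substantial obstacle here: the result is essentially a corollary of \Cref{gen_error_bound}. The one point requiring care is the bookkeeping — ensuring the same $2\epsilon$ gap is charged exactly twice (once inside the posterior expectation and once to the infimum term) so that the constants combine to precisely $4\epsilon$ — together with invoking the bounded-kernel assumption \Cref{as1} to cap $\MMD(\Q,\tilde{\P})$ by $2$, rather than leaving the contaminant-to-target discrepancy uncontrolled.
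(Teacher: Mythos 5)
Your proof is correct and takes essentially the same route as the paper's: a $2\epsilon$ charge to pass from $\MMD(\tilde{\P},\P_{\theta^\ast(\P)})$ to $\MMD(\truedist,\P_{\theta^\ast(\P)})$, an application of \Cref{gen_error_bound}, and a second $2\epsilon$ charge to move the infimum term from $\truedist$ back to $\tilde{\P}$. The only difference is that where the paper invokes Lemma \ref{outliers_ineq} (Lemma 3.3 of \textcite{cherief2019finite}) as a black box, you derive the underlying fact $\MMD(\truedist,\tilde{\P}) = \epsilon\,\MMD(\Q,\tilde{\P}) \leq 2\epsilon$ directly from mean-embedding linearity and Standing Assumption \ref{as1}, which is precisely how that cited lemma is proved.
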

Similarly to the generalisation bound discussed above, the rate at which this bound goes to zero is  $\max\{n^{-\frac{1}{2}}, \epsilon\}$. 
Since there are at most $\lceil \epsilon n \rceil$ contaminated data points in a dataset, the maximum number of outliers the dataset can have while maintaining the same rate is of order $\sqrt{n}$, which agrees with \textcite{cherief2019finite}, Corollary 3.4, who studied the frequentist minimum MMD estimator.
\section{EXPERIMENTS}\label{sec:experiments}
We now study the performance of our method using three examples. Throughout, we use the Gaussian kernel, which satisfies Standing Assumption \ref{as1}. We also use a non-informative prior by setting $\alpha = 0$. Further  experimental details and results are reported in \Cref{app:experiments}. \Cref{app:additional-exps} provides additional experiments which examine sensitivity to hyperparameters and provide comparison of our method with the MMD-Bayes method in \textcite{pacchiardi2021generalized}. We further consider an example of misspecification which is not based in a contamination model by wrongly fitting a Gaussian location model to Cauchy distributed data. The code for all experiments can be found at \url{https://github.com/haritadell/npl_mmd_project.git}. 

\subsection{Gaussian Location Model}
We start by considering a toy example, the Gaussian location model. While the likelihood of this model is available, we treat it as a simulator to study the properties of our proposed method.
We take $\P_{\theta} = \mathcal{N}(\theta, I_{d \times d})$ and use a true data generating process $\truedist = (1- \epsilon) \P_{\theta_0} + \epsilon \P_{\theta'}$, with $\theta_0 = (1,\dots, 1) \in \R^d$ and $\theta' = (20, \dots, 20) \in \R^d$ in $d= 4$. We assess robustness by considering both the well-specified case $\epsilon = 0$ and the case $\epsilon = 0.1$ and $n = 200$ realisations from $\truedist$.

Our simulation study will illustrate how robustness is inherited from both the NPL framework, which is more robust to model misspecification than standard Bayes or ABC, and the MMD being more robust than alternative losses such as the Wasserstein distance or negative log-likelihood.
\paragraph{Wasserstein ABC}
We first compare our method against the Wasserstein ABC (WABC) \parencite{bernton2019approximate} algorithm, which uses Sequential Monte Carlo (SMC). 
We compare against WABC since it is a popular algorithm which, unlike standard ABC, does not require  hand-crafting of summary statistics. 
As observed in \Cref{fig-post-marg-gauss}, NPL-MMD outperforms WABC in both the well-specified and misspecified settings. The former is explained by the fact that the Wasserstein distance exhibits poor sample complexity for $d>1$ \parencite{Fournier2015}, whereas the MMD can be estimated at a $\sqrt{n}-$rate. The latter is explained by the fact that the WABC is an approximation of the exact Bayesian posterior, which is not robust.

\paragraph{NPL with  Wasserstein distance } \label{app:npl-was}
Secondly, we consider the NPL framework using the Wasserstein distance (NPL-WAS) such that for $W:\P^2 \to \R_{\geq 0}$ the 2-Wasserstein distance,
$
\theta^{*}_{W}(\P) := \arg\inf_{\theta \in \Theta} W(\P, \P_{\theta}).
$
We use the \texttt{POT} package \parencite{flamary2021pot} for approximating the Wasserstein distance and the \textit{Powell} optimiser from \texttt{Scipy} \parencite{virtanen2020scipy}. 
We focus specifically on the misspecified setting, and notice that NPL-MMD outperforms the NPL-WAS in that case. This clearly demonstrates the advantage of using a robust loss function, even when using a robust inference framework such as NPL.

\begin{figure*}[t!]
\centering
\includegraphics[width=0.99\textwidth]{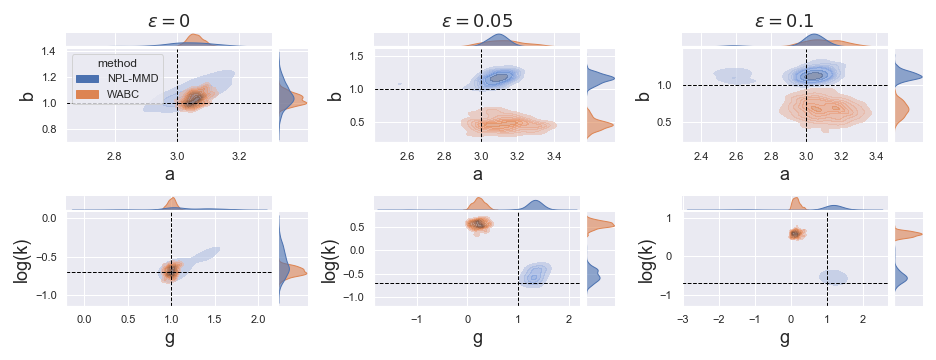}
\vspace{-5mm}
\caption{Posterior marginal distributions for $\theta$ in the univariate g-and-k model for an increasing percentage of outliers present in the dataset. The true parameter $\theta_0$ is indicated by the dotted lines.}
\label{fig-post-marg-gnk}
\end{figure*}
\paragraph{NPL with log-likelihood}
Thirdly, the availability of the likelihood in this toy problem allows for comparison with the original NPL posterior bootstrap using $l(x,\theta) = \frac{1}{\sqrt{2 \pi}} \exp\{- \frac{1}{2} (x - \theta)^2\}$ as in \textcite{lyddon2018nonparametric}, which we call the weighted log-likelihood NPL (NPL-WLL). 
\Cref{fig-post-marg-gauss} shows that NPL-WLL and NPL-MMD perform similarly in the well-specified case, but NPL-MMD significantly outperforms NPL-WLL in the misspecified case. This is once again due to the fact that the MMD is a robust loss, whereas the negative log-likelihood is not.

\paragraph{ABC with the MMD}
Finally, we further explore ABC method with the SMC samplers considered in \textcite{bernton2019approximate}, this time using the MMD instead of the Wasserstein distance. The ABC with the MMD has also been previously explored in \textcite[][]{park2016k2}. The posterior marginals are visualised in Figure \ref{fig-mabc} for $\epsilon = 0.1$. This figure clearly demonstrates the advantage of the NPL framework over ABC.
\subsection{Simulator-based Models}
We now consider two more complex numerical examples, for which likelihood-based inference is impossible. We compare solely against the Wasserstein ABC for simplicity by directly using the experimental setup in \textcite{bernton2019approximate}. 
 \begin{figure*}[t!]
\centering
\includegraphics[width=\textwidth]{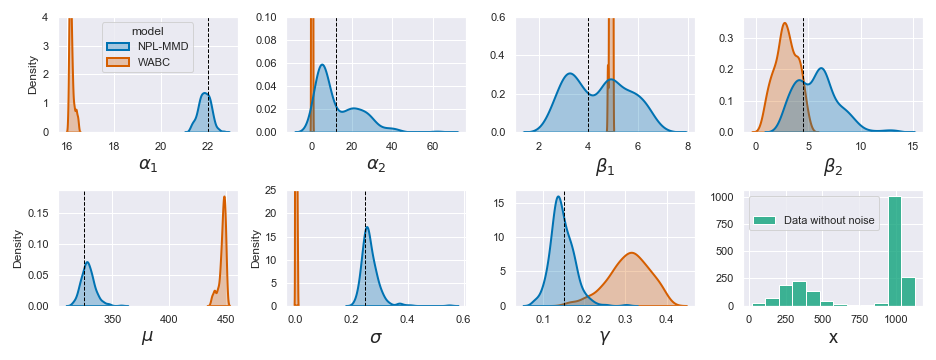}
\vspace{-5mm}
\caption{Posterior marginal distributions for $\theta$ in the toggle switch model with $10\%$ of noisy data in the dataset. Bottom right figure presents 2000 samples from the well-specified toggle switch model.}
\label{fig-post-marg-togswitch}
\end{figure*}
\paragraph{G-and-k Distribution Model} \label{sec-gandk}
First, consider $G_{\theta} : [0,1]^2 \rightarrow \R$ with  $\theta = (a, b, g, k)$ where
\begin{talign*}
G_{\theta}(u) &= a + b \left(1 + 0.8 \frac{1 - \exp(-g z(u))}{1 + \exp(-g z(u))} \right)(1 + z(u)^2)^k z(u),
\end{talign*}
$z(u) = \sqrt{-2 \log(u_1)} \cos(2 \pi u_2)$ and $\U = \text{Unif}([0,1]^2)$. Parameters $a,b,g$ and $k$ control the location, scale, skewness and kurtosis respectively. For computational convenience, we reparametrise the last parameter by setting $k' = \exp(k)$. Although $\P_\theta$ is one-dimensional, it is a popular baseline for simulator-based models \parencite{prangle2020gk} because of the challenge of inferring the four parameters simultaneously. It has also been used extensively in applications; for example to model the price of AirBnB rentals \parencite{Rodrigues2020}, the air pollution \parencite{Rayner2002} or for non-life insurance modelling \parencite{Peters2016}. 

Our data consists of $n = 2^{11}$ realisations from $\truedist = (1 - \epsilon) \P_{\theta_0} + \epsilon \Q$ where $\P_{\theta_0}$ denotes the g-and-k with $\theta_0 = (3,1,1,-\log(2))$, and  $\Q$ is the shifted distribution $\Q = \P_{\theta_0} \pm 50$ with an equal number of points shifted to either direction. The resulting posteriors are shown in Figure \ref{fig-post-marg-gnk} displayed as bivariate plots. The WABC method appears more sensitive to contamination in the dataset---in contrast to the NPL-MMD, which concentrates significantly closer to the true parameter values for an increasing proportion of outliers. This is particularly the case for the last two parameters $g$ and $\log(k)$ which are well-known to be more challenging to estimate. 
\paragraph{Toggle Switch Model}

Finally, we consider the toggle-switch model arising in Systems Biology  \parencite{bonassi2011bayesian,bonassi2015sequential}. This is a dynamic model used to study cellular networks; more precisely, the network describes the interaction of two genes $u$ and $v$ over time. The simulator is too complex to include in the main text, but is given in \Cref{app:experiments}. The data is one-dimensional, but the model has $7$ parameters and the latent space is $601$-dimensional. 

We consider inference on $\theta$ for $n=2000$ data points simulated from the toggle-switch model with true parameter $\theta_0 = (22,12,4,4.5,325,0.25,0.15)$ in which $10\%$ of the data have some added Cauchy noise of location parameter 0 and scale parameter 10. Such noise can be interpreted as measurement error in the collection of data. The posterior marginal distributions for $\theta$ are shown in Figure \ref{fig-post-marg-togswitch}, and indicate that the NPL-MMD method is successful in concentrating around $\theta_0$ despite the Cauchy noise.

\subsection{Computational Time} 
We provide the computational time recorded for each experiment over multiple independent runs. The Gaussian location and G-and-k experiments were repeated 10 times and the Toggle-switch was repeated 5 times due to the higher computational cost. For each run, a new dataset was generated and $B$ posterior samples were obtained for $\theta$. The average time for the generation of $B$ samples is recorded for each method. Our method is compatible with the use of GPU so in Table \ref{tab:times} we provide the average time recorded when run in \texttt{Google Colaboratory Pro}.
\begin{table}[h] 
\caption{Average Clock Time In Seconds.}
\begin{center}
\begin{tabular}{l||l|l|l}
      \textbf{Method}  & {\small \textbf{Gaussian}} & {\small \textbf{G-and-k}} & {\small \textbf{Toggle-S}} \\
      \hline \hline
     {\tiny NPL-MMD} & \textbf{$1.09 \times 10^2$} & $3.02 \times 10$ & $7.42 \times 10^3$  \\ 
     \hline
     {\tiny WABC} & $3.34 \times 10^3$ & $1.05 \times 10^2$ & $3.08 \times 10^4$ \\
    \end{tabular}
    \end{center}
\label{tab:times}
\end{table}
\vspace{-0.6cm}
\section{CONCLUSION}

Our paper proposes the very first posterior bootstrap method in the  simulator setting.
Unlike ABC, our method does not discard any samples and can run in parallel---leading to a substantially decreased computational burden.
Further, the approach is based on MMD estimators, and therefore inherits both the robustness of the NPL framework and that of the MMD. 
We support this claim with three theoretical results that hold even in the presence of misspecification.
The results include a generalisation bound, a robustness guarantee, and a consistency result.
We verified the practical utility of our theory through deploying it on three models, which highlighted both robustness and computational advantages of our method.

A particular strength of the method is that no assumption is required on $\X$. Hence, the results apply directly to any set $\X$ on which one can define kernels (e.g. graphs, strings, or discrete spaces).

In future work, we would like to tackle the challenges posed by the optimisation step. 
Specifically, since the MMD-objective is usually non-convex, a particularly interesting question would be if the  kernel can be used to enforce a more well-behaved objective.
Another direction for  improvement is the cost of computing the MMD minimiser.
While naively, this scales quadratically in the number of observations, it could be reduced to linear time using approaches like in Lemma 14 of \textcite{gretton2012kernel}.

\subsubsection*{Acknowledgements}
CD was funded by EPSRC grant [EP/T51794X/1] as part of the Warwick CDT in Mathematics and Statistics. 
JK was funded through the Biometrika Fellowship courtesy of the Biometrika Trust. TD acknowledges support from a UKRI Turing AI Fellowship [EP/V02678X/1]. TD and FXB were supported by the Lloyd’s Register Foundation Programme on Data-Centric Engineering and The Alan Turing Institute under the EPSRC grant [EP/N510129/1]. The authors thank all reviewers for their useful comments. 

\subsubsection*{References}

\printbibliography[heading=none]

@article{grunwald2017inconsistency,
  title={Inconsistency of Bayesian inference for misspecified linear models, and a proposal for repairing it},
  author={Gr{\"u}nwald, P and Van Ommen, Thijs},
  journal={Bayesian Analysis},
  volume={12},
  number={4},
  pages={1069--1103},
  year={2017},
  publisher={International Society for Bayesian Analysis}
}

@article{Eaton1982,
author = {Eaton, M. L.},
file = {:Users/fxbriol/Library/Application Support/Mendeley Desktop/Downloaded/Eaton - 1982 - A method for evaluating improper prior distributions.pdf:pdf},
journal = {Statistical Decision Theory and Related Topics III},
pages = {329--352},
title = {{A method for evaluating improper prior distributions}},
year = {1982}
}

@article{Dawid2007,
abstract = {A decision problem is defined in terms of an outcome space, an action space and a loss function. Starting from these simple ingredients, we can construct: Proper Scoring Rule; Entropy Function; Divergence Function; Riemannian Metric; and Unbiased Estimating Equation. From an abstract viewpoint, the loss function defines a duality between the outcome and action spaces, while the correspondence between a distribution and its Bayes act induces a self-duality. Together these determine a "decision geometry" for the family of distributions on outcome space. This allows generalisation of many standard statistical concepts and properties. In particular we define and study generalised exponential families. Several examples are analysed, including a general Bregman geometry.},
author = {Dawid, A. P.},
file = {:Users/fxbriol/Library/Application Support/Mendeley Desktop/Downloaded/Dawid - 2007 - The geometry of proper scoring rules.pdf:pdf},
journal = {Annals of the Institute of Statistical Mathematics},
keywords = {Bregman geometry,Decision geometry,Generalised exponential family,Information geometry,Proper scoring rule,Unbiased estimating equation},
number = {1},
pages = {77--93},
title = {{The geometry of proper scoring rules}},
volume = {59},
year = {2007}
}

@article{Niu2021,
abstract = {Intractable generative models are models for which the likelihood is unavailable but sampling is possible. Most approaches to parameter inference in this setting require the computation of some discrepancy between the data and the generative model. This is for example the case for minimum distance estimation and approximate Bayesian computation. These approaches require sampling a high number of realisations from the model for different parameter values, which can be a significant challenge when simulating is an expensive operation. In this paper, we propose to enhance this approach by enforcing "sample diversity" in simulations of our models. This will be implemented through the use of quasi-Monte Carlo (QMC) point sets. Our key results are sample complexity bounds which demonstrate that, under smoothness conditions on the generator, QMC can significantly reduce the number of samples required to obtain a given level of accuracy when using three of the most common discrepancies: the maximum mean discrepancy, the Wasserstein distance, and the Sinkhorn divergence. This is complemented by a simulation study which highlights that an improved accuracy is sometimes also possible in some settings which are not covered by the theory.},
author = {Niu, Z. and Meier, J. and Briol, F-X},
file = {:Users/fxbriol/Docs/Work/Papers/Other Statistics/Generative models/(2021, Niu) Discrepancy based inference for intractable generative models with QMC.pdf:pdf},
journal = {arXiv:2106.11561},
title = {{Discrepancy-based inference for intractable generative models using quasi-Monte Carlo}},
year = {2021}
}

@article{owhadi2015brittleness,
  title={{Brittleness of Bayesian inference under finite information in a continuous world}},
  author={Owhadi, Houman and Scovel, Clint and Sullivan, Tim},
  journal={Electronic Journal of Statistics},
  volume={9},
  number={1},
  pages={1--79},
  year={2015},
  publisher={Institute of Mathematical Statistics and Bernoulli Society}
}

@article{beaumont2019approximate,
  title={{Approximate Bayesian computation}},
  author={Beaumont, Mark A},
  journal={Annual Review of Statistics and Its Application},
  volume={6},
  pages={379--403},
  year={2019}
}

@article{sisson2018abc,
  title={ABC samplers},
  author={Sisson, SA and Fan, Y},
  journal={Handbook of Approximate Bayesian Computation},
  pages={87--123},
  year={2018},
  publisher={Chapman and Hall/CRC}
}

@article{Jewson2018,
abstract = {When it is acknowledged that all candidate parameterised statistical models are misspecified relative to the data generating process, the decision maker must concern themselves with the KL-divergence minimising parameter in order to maintain principled statistical practice Walker (2013). However, it has long been known that the KL-divergence places a large weight on correctly capturing the tails of the data generating process. As a result traditional inference can be very non-robust. In this paper we advance recent methodological developments in general Bayesian updating (Bissiri, Holmes, & Walker, 2016) to propose a statistically well principled Bayesian updating of beliefs tar-geting the minimisation of any statistical divergence. We improve both the motivation and the sta-tistical foundations of existing Bayesian minimum divergence estimation (Hooker and Vidyashankar (2014), Ghosh and Basu (2016)), for the first time allowing the well principled Bayesian to target predictions from the model that are close to the data generating process in terms of some alternative divergence measure to the KL-divergence. We argue that defining this divergence measure forms an important, subjective part of any statistical analysis. We here illustrate our method a broad array of divergence measures. We then compare the performance of the different divergence measures for conducting simple inference tasks on both simulated and real data sets, and discuss then how our methods might apply to more complicated, high dimensional models.},
author = {Jewson, J. and Smith, J. Q. and Holmes, C.},
file = {:Users/fxbriol/Library/Application Support/Mendeley Desktop/Downloaded/Jewson, Smith, Holmes - 2018 - Principled Bayesian minimum divergence inference.pdf:pdf},
journal = {Entropy},
number = {6},
pages = {442},
title = {{Principled Bayesian minimum divergence inference}},
volume = {20},
year = {2018}
}

@article{Matsubara2021,
abstract = {Generalised Bayesian inference updates prior beliefs using a loss function, rather than a likelihood, and can therefore be used to confer robustness against possible misspecification of the likelihood. Here we consider generalised Bayesian inference with a Stein discrepancy as a loss function, motivated by applications in which the likelihood contains an intractable normalisation constant. In this context, the Stein discrepancy circumvents evaluation of the normalisation constant and produces generalised posteriors that are either closed form or accessible using standard Markov chain Monte Carlo. On a theoretical level, we show consistency, asymptotic normality, and bias-robustness of the generalised posterior, highlighting how these properties are impacted by the choice of Stein discrepancy. Then, we provide numerical experiments on a range of intractable distributions, including applications to kernel-based exponential family models and non-Gaussian graphical models.},
author = {Matsubara, T. and Knoblauch, J. and Briol, F.-X. and Oates, C. J.},
file = {:Users/fxbriol/Docs/Work/Papers/Probability Distributions/Distances between distributions/minimum distance estimation/(2021, Matsubara) robust generalised Bayesian inference for intractable likelihoods.pdf:pdf},
journal = {arXiv:2104.07359},
keywords = {intractable likelihood,kernel methods,robust statistics,s method,stein},
title = {{Robust generalised Bayesian inference for intractable likelihoods}},
year = {2021}
}

@article{Knoblauch2019,
author = {Knoblauch, J. and Jewson, J. and Damoulas, T.},
journal = {arXiv:1904.02063},
title = {{Generalized variational inference: three arguments for deriving new posteriors}},
year = {2019}
}

@article{Chernozhukov2003,
abstract = {This paper studies computationally and theoretically attractive estimators called here Laplace type estimators (LTEs), which include means and quantiles of quasi-posterior distributions defined as transformations of general (nonlikelihood-based) statistical criterion functions, such as those in GMM, nonlinear IV, empirical likelihood, and minimum distance methods. The approach generates an alternative to classical extremum estimation and also falls outside the parametric Bayesian approach. For example, it offers a new attractive estimation method for such important semi-parametric problems as censored and instrumental quantile regression, nonlinear GMM and value-at-risk models. The LTEs are computed using Markov Chain Monte Carlo methods, which help circumvent the computational curse of dimensionality. A large sample theory is obtained and illustrated for regular cases. {\textcopyright} 2003 Elsevier Science B.V. All rights reserved.},
author = {Chernozhukov, V. and Hong, H.},
file = {:Users/fxbriol/Docs/Work/Papers/Bayesian Statistics/generalised Bayes/(2001, Chenozukhov) An MCMC method for estimation.pdf:pdf},
journal = {Journal of Econometrics},
keywords = {Bayes,Censored quantile regression,Empirical likelihood,GMM,Instrumental quantile regression,Instrumental regression,Laplace,Markov Chain Monte Carlo,Value-at-risk},
number = {2},
pages = {293--346},
title = {{An MCMC approach to classical estimation}},
volume = {115},
year = {2003}
}

@article{Grunwald2011,
abstract = {We extend Bayesian MAP and Minimum Description Length (MDL) learning by testing whether the data can be substantially more compressed by a mixture of the MDL/MAP distribution with another element of the model, and adjusting the learning rate if this is the case. While standard Bayes and MDL can fail to converge if the model is wrong, the resulting "safe" estimator continues to achieve good rates with wrong models. Moreover, when applied to classification and regression models as considered in statistical learning theory, the approach achieves optimal rates under, e.g., Tsybakov's conditions, and reveals new situations in which we can penalize by (-log PRIOR) /n rather than (√- log PRIOR)/n. {\textcopyright} 2011 P. Grunwald.},
author = {Gr{\"{u}}nwald, Peter},
file = {:Users/fxbriol/Docs/Work/Papers/Bayesian Statistics/generalised Bayes/(2011, Grunwald) Safe learning - bridging the gap between Bayes MDL and statistical learning theory via empirical convexity.pdf:pdf},
journal = {Journal of Machine Learning Research},
pages = {397--419},
title = {{Safe learning: bridging the gap between Bayes, MDL and statistical learning theory via empirical convexity}},
volume = {19},
year = {2011}
}

@inproceedings{fong2019scalable,
  title={Scalable nonparametric sampling from multimodal posteriors with the posterior bootstrap},
  author={Fong, Edwin and Lyddon, Simon and Holmes, Chris},
  booktitle={International Conference on Machine Learning},
  pages={1952--1962},
  year={2019},
  organization={PMLR}
}

@inproceedings{lyddon2018nonparametric,
  title={{Nonparametric learning from Bayesian models with randomized objective functions}},
  author={Lyddon, Simon and Walker, Stephen and Holmes, Chris},
  booktitle={Proceedings of the 32nd International Conference on Neural Information Processing Systems},
  pages={2075--2085},
  year={2018}
}

@article{Frazier2020,
abstract = {We analyze the behavior of approximate Bayesian computation (ABC) when the model generating the simulated data differs from the actual data generating process; i.e., when the data simulator in ABC is misspecified. We demonstrate both theoretically and in simple, but practically relevant, examples that when the model is misspecified different versions of ABC can yield substantially different results. Our theoretical results demonstrate that even though the model is misspecified, under regularity conditions, the accept/reject ABC approach concentrates posterior mass on an appropriately defined pseudo-true parameter value. However, under model misspecification the ABC posterior does not yield credible sets with valid frequentist coverage and has non-standard asymptotic behavior. In addition, we examine the theoretical behavior of the popular local regression adjustment to ABC under model misspecification and demonstrate that this approach concentrates posterior mass on a completely different pseudo-true value than accept/reject ABC. Using our theoretical results, we suggest two approaches to diagnose model misspecification in ABC. All theoretical results and diagnostics are illustrated in a simple running example.},
author = {Frazier, D. T. and Robert, C. P. and Rousseau, J.},
journal = {Journal of the Royal Statistical Society B: Statistical Methodology},
number = {2},
pages = {421--444},
title = {{Model misspecification in ABC: consequences and diagnostics}},
volume = {82},
year = {2020}
}

@article{DelMoral2012,
author = {{Del Moral}, Pierre and Doucet, Arnaud and Jasra, Ajay},
journal = {Bernoulli},
number = {1},
pages = {252--278},
title = {{On adaptive resampling strategies for sequential Monte Carlo methods}},
volume = {18},
year = {2012}
}

@inproceedings{Key2021,
abstract = {Model misspecification can create significant challenges for the implementation of probabilistic models, and this has led to development of a range of inference methods which directly account for this issue. However, whether these more involved methods are required will depend on whether the model is really misspecified, and there is a lack of generally applicable methods to answer this question. One set of tools which can help are goodness-of-fit tests, where we test whether a dataset could have been generated by a fixed distribution. Kernel-based tests have been developed to for this problem, and these are popular due to their flexibility, strong theoretical guarantees and ease of implementation in a wide range of scenarios. In this paper, we extend this line of work to the more challenging composite goodness-of-fit problem, where we are instead interested in whether the data comes from any distribution in some parametric family. This is equivalent to testing whether a parametric model is well-specified for the data.},
author = {Key, O. and Fernandez, T. and Gretton, A. and Briol, F-X.},
booktitle = {arXiv:2111.10275},
file = {:Users/fxbriol/Docs/Work/Papers/Kernel Methods/Hypothesis Testing/(2021, Key) Composite goodness of fit testing with kernels.pdf:pdf},
title = {{Composite goodness-of-fit tests with kernels}},
year = {2021}
}

@article{Hermans2021,
abstract = {We present extensive empirical evidence showing that current Bayesian simulation-based inference algorithms are inadequate for the falsificationist methodology of scientific inquiry. Our results collected through months of experimental computations show that all benchmarked algorithms -- (S)NPE, (S)NRE, SNL and variants of ABC -- may produce overconfident posterior approximations, which makes them demonstrably unreliable and dangerous if one's scientific goal is to constrain parameters of interest. We believe that failing to address this issue will lead to a well-founded trust crisis in simulation-based inference. For this reason, we argue that research efforts should now consider theoretical and methodological developments of conservative approximate inference algorithms and present research directions towards this objective. In this regard, we show empirical evidence that ensembles are consistently more reliable.},
author = {Hermans, J. and Delaunoy, A. and Rozet, F. and Wehenkel, A. and Louppe, G.},
file = {:Users/fxbriol/Docs/Work/Papers/Bayesian Statistics/ABC/(2021, ) Averting a crisis in simulation-based inference.pdf:pdf},
journal = {arXiv:2110.06581},
title = {{Averting a crisis in simulation-based inference}},
year = {2021}
}

@article{Beaumont2009,
abstract = {Sequential techniques can enhance the efficiency of the approximate Bayesian computation algorithm, as in Sisson et al.'s (2007) partial rejection control version. While this method is based upon the theoretical works of Del Moral et al. (2006), the application to approximate Bayesian computation results in a bias in the approximation to the posterior. An alternative version based on genuine importance sampling arguments bypasses this difficulty, in connection with the population Monte Carlo method of Capp{\'{e}} et al. (2004), and it includes an automatic scaling of the forward kernel. When applied to a population genetics example, it compares favourably with two other versions of the approximate algorithm. {\textcopyright} 2009 Biometrika Trust.},
author = {Beaumont, M. A. and Cornuet, J-M. and Marin, J-M. and Robert, C. P.},
file = {:Users/fxbriol/Docs/Work/Papers/Bayesian Statistics/ABC/(2009, Beaumont) Adaptive approximate Bayesian computation.pdf:pdf},
issn = {14643510},
journal = {Biometrika},
keywords = {Importance sampling,Markov chain Monte Carlo,Partial rejection control,Sequential Monte Carlo},
number = {4},
pages = {983--990},
title = {{Adaptive approximate Bayesian computation}},
volume = {96},
year = {2009}
}

@article{cherief2019finite,
  title={Finite sample properties of parametric MMD estimation: robustness to misspecification and dependence},
  author={Ch{\'e}rief-Abdellatif, Badr-Eddine and Alquier, Pierre},
  journal={arXiv preprint arXiv:1912.05737},
  year={2019}
}

@article{Alquier2020,
abstract = {Many datasets are collected automatically, and are thus easily contaminated by outliers. In order to overcome this issue, there was recently a regain of interest in robust estimation methods. However, most of these methods are designed for a specific purpose, such as estimation of the mean, or linear regression. We propose estimators based on Maximum Mean Discrepancy (MMD) optimization as a universal framework for robust regression. We provide non-asymptotic error bounds, and show that our estimators are robust to Huber-type contamination. We discuss the optimization of the objective functions via (stochastic) gradient descent in classical examples such as linear, logistic or Poisson regression. These results are illustrated by a set of simulations.},
author = {Alquier, P. and Gerber, M.},
file = {:Users/fxbriol/Docs/Work/Papers/Other Statistics/Generative models/(2020, Alquier & Gerber) Universal Robust Regression via Maximum Mean Discrepancy.pdf:pdf},
journal = {arXiv:2006.00840},
number = {1},
title = {{Universal robust regression via maximum mean discrepancy}},
year = {2020}
}

@article{Cranmer2020frontier,
abstract = {Many domains of science have developed complex simulations to describe phenomena of interest. While these simulations provide high-fidelity models, they are poorly suited for inference and lead to challenging inverse problems. We review the rapidly developing field of simulation-based inference and identify the forces giving additional momentum to the field. Finally, we describe how the frontier is expanding so that a broad audience can appreciate the profound influence these developments may have on science.},
author = {Cranmer, K. and Brehmer, J. and Louppe, G.},
journal = {Proceedings of the National Academy of Sciences of the United States of America},
number = {48},
title = {{The frontier of simulation-based inference}},
volume = {117},
year = {2020}
}

@article{pacchiardi2020score,
  title={Score matched conditional exponential families for likelihood-free inference},
  author={Pacchiardi, Lorenzo and Dutta, Ritabrata},
  journal={arXiv:2012.10903},
  year={2020}
}

@article{beaumont2002approximate,
  title={{Approximate Bayesian computation in population genetics}},
  author={Beaumont, Mark A and Zhang, Wenyang and Balding, David J},
  journal={Genetics},
  volume={162},
  number={4},
  pages={2025--2035},
  year={2002},
  publisher={Oxford University Press}
}

@article{beaumont2010approximate,
  title={{Approximate Bayesian computation in evolution and ecology}},
  author={Beaumont, Mark A},
  journal={Annual Review of Ecology, Evolution, and Systematics},
  volume={41},
  pages={379--406},
  year={2010},
}

@article{bharti2021general,
  title={A general method for calibrating stochastic radio channel models with kernels},
  author={Bharti, Ayush and Briol, Francois-Xavier and Pedersen, Troels},
  journal={IEEE Transactions on Antennas and Propagation},
  year={2021},
  publisher={IEEE}
}

@article{Bernton2017,
abstract = {In purely generative models, one can simulate data given parameters but not necessarily evaluate the likelihood. We use Wasserstein distances between empirical distributions of observed data and empirical distributions of synthetic data drawn from such models to estimate their parameters. Previous interest in the Wasserstein distance for statistical inference has been mainly theoretical, due to computational limitations. Thanks to recent advances in numerical transport, the computation of these distances has become feasible, up to controllable approximation errors. We leverage these advances to propose point estimators and quasi-Bayesian distributions for parameter inference, first for independent data. For dependent data, we extend the approach by using delay reconstruction and residual reconstruction techniques. For large data sets, we propose an alternative distance using the Hilbert space-filling curve, which computation scales as $n\log n$ where $n$ is the size of the data. We provide a theoretical study of the proposed estimators, and adaptive Monte Carlo algorithms to approximate them. The approach is illustrated on four examples: a quantile g-and-k distribution, a toggle switch model from systems biology, a Lotka-Volterra model for plankton population sizes and a L\'evy-driven stochastic volatility model.},
author = {Bernton, E. and Jacob, P. E. and Gerber, M. and Robert, C. P.},
file = {:Users/fxbriol/Library/Application Support/Mendeley Desktop/Downloaded/Bernton et al. - 2017 - Inference in generative models using the Wasserstein distance.pdf:pdf},
journal = {Information and Inference: A Journal of the IMA},
number = {4},
pages = {657--676},
title = {{Inference in generative models using the Wasserstein distance}},
volume = {8},
year = {2017}
}

@article{Peters2016,
abstract = {This paper discusses different classes of loss models in non-life insurance settings. It then overviews the class Tukey transform loss models that have not yet been widely considered in non-life insurance modelling, but offer opportunities to produce flexible skewness and kurtosis features often required in loss modelling. In addition, these loss models admit explicit quantile specifications which make them directly relevant for quantile based risk measure calculations. We detail various parameterizations and sub-families of the Tukey transform based models, such as the g-and-h, g-and-k and g-and-j models, including their properties of relevance to loss modelling. One of the challenges with such models is to perform robust estimation for the loss model parameters that will be amenable to practitioners when fitting such models. In this paper we develop a novel, efficient and robust estimation procedure for estimation of model parameters in this family Tukey transform models, based on L-moments. It is shown to be more robust and efficient than current state of the art methods of estimation for such families of loss models and is simple to implement for practical purposes.},
author = {Peters, G. and Chen, W. and Gerlach, R.},
file = {:Users/fxbriol/Docs/Work/Papers/Other Statistics/Generative models/(2016, Peters) Estimating quantile families of loss distributions for non-life insurance modelling via L-moments.pdf:pdf},
journal = {Risks},
keywords = {1,context of modelling losses,data-driven industry in which,g-and-h distribution,g-and-k distribution,in general,in general insurance,introduction,l-moments,loss distributions,method of moments,one can consider insurance,quantile distributions,tail risk,the,to be principally a,tukey transformations},
number = {2},
pages = {14},
title = {{Estimating quantile families of loss distributions for non-life insurance modelling via L-moments}},
volume = {4},
year = {2016}
}

@article{Rayner2002,
abstract = {Continuing increases in computing power and availability mean that many maximum likelihood estimation (MLE) problems previously thought intractable or too computationally difficult can now be tackled numerically. However, ML parameter estimation for distributions whose only analytical expression is as quantile functions has received little attention. Numerical MLE procedures for parameters of new families of distributions, the g-and-k and the generalized g-and-h distributions, are presented and investigated here. Simulation studies are included, and the appropriateness of using asymptotic methods examined. Because of the generality of these distributions, the investigations are not only into numerical MLE for these distributions, but are also an initial investigation into the performance and problems for numerical MLE applied to quantile-defined distributions in general. Datasets are also fitted using the procedures here. Results indicate that sample sizes significantly larger than 100 should be used to obtain reliable estimates through maximum likelihood.},
author = {Rayner, G. D. and MacGillivray, H. L.},
journal = {Statistics and Computing},
keywords = {G-and-k distribution,Generalized g-and-h distribution,Maximum likelihood estimation (MLE),Numerical maximum likelihood estimation (NMLE),Quantile distributions},
number = {1},
pages = {57--75},
title = {{Numerical maximum likelihood estimation for the g-and-k and generalized g-and-h distributions}},
volume = {12},
year = {2002}
}

@article{Rodrigues2020,
abstract = {Likelihood-free methods such as approximate Bayesian computation (ABC) have extended the reach of statistical inference to problems with computationally intractable likelihoods. Such approaches perform well for small-to-moderate dimensional problems, but suffer a curse of dimensionality in the number of model parameters. We introduce a likelihood-free approximate Gibbs sampler that naturally circumvents the dimensionality issue by focusing on lower-dimensional conditional distributions. These distributions are estimated by flexible regression models either before the sampler is run, or adaptively during sampler implementation. As a result, and in comparison to Metropolis-Hastings-based approaches, we are able to fit substantially more challenging statistical models than would otherwise be possible. We demonstrate the sampler's performance via two simulated examples, and a real analysis of Airbnb rental prices using a intractable high-dimensional multivariate nonlinear state-space model with a 36-dimensional latent state observed on 365 time points, which presents a real challenge to standard ABC techniques.},
author = {Rodrigues, G. S. and Nott, David J. and Sisson, S. A.},
journal = {Statistics and Computing},
keywords = {Approximate Bayesian computation,Gibbs sampler,State- space models},
number = {4},
pages = {1057--1073},
title = {{Likelihood-free approximate Gibbs sampling}},
volume = {30},
year = {2020}
}

@article{Bassetti2006,
author = {Bassetti, F. and Bodini, A. and Regazzini, E.},
file = {:Users/fxbriol/Library/Application Support/Mendeley Desktop/Downloaded/Bassetti, Bodini, Regazzini - 2006 - On minimum Kantorovich distance estimators.pdf:pdf},
journal = {Statistics and Probability Letters},
keywords = {consistency of point estimators,minimum dissimilarity estimators,minimum kantorovich distance estimators},
pages = {1298--1302},
title = {{On minimum Kantorovich distance estimators}},
volume = {76},
year = {2006}
}

@article{bernton2019approximate,
  title={{Approximate Bayesian computation with the Wasserstein distance}},
  author={Bernton, Espen and Jacob, Pierre E and Gerber, Mathieu and Robert, Christian P},
  journal={Journal of the Royal Statistical Society: Series B (Statistical Methodology)},
  volume={81},
  number={2},
  pages={235--269},
  year={2019},
  publisher={Wiley Online Library}
}

@inproceedings{park2016k2,
  title={{K2-ABC: approximate Bayesian computation with kernel embeddings}},
  author={Park, Mijung and Jitkrittum, Wittawat and Sejdinovic, Dino},
  booktitle={Artificial Intelligence and Statistics},
  pages={398--407},
  year={2016},
  organization={PMLR}
}

@article{briol2019statistical,
  title={Statistical inference for generative models with maximum mean discrepancy},
  author={Briol, Francois-Xavier and Barp, Alessandro and Duncan, Andrew B and Girolami, Mark},
  journal={arXiv:1906.05944},
  year={2019}
}

@inproceedings{schmon2020generalized,
  title={{Generalized Posteriors in approximate Bayesian computation}},
  author={Schmon, Sebastian M and Cannon, Patrick W and Knoblauch, Jeremias},
  booktitle={Third Symposium on Advances in Approximate Bayesian Inference},
  year={2020}
}

@article{bissiri2016general,
  title={A general framework for updating belief distributions},
  author={Bissiri, Pier Giovanni and Holmes, Chris C and Walker, Stephen G},
  journal={Journal of the Royal Statistical Society. Series B, Statistical methodology},
  volume={78},
  number={5},
  pages={1103},
  year={2016},
  publisher={Wiley-Blackwell}
}

@article{hooker2014bayesian,
  title={{Bayesian model robustness via disparities}},
  author={Hooker, Giles and Vidyashankar, Anand N},
  journal={Test},
  volume={23},
  number={3},
  pages={556--584},
  year={2014},
  publisher={Springer}
}

@article{wu2020comparison,
  title={{A comparison of learning rate selection methods in generalized Bayesian inference}},
  author={Wu, Pei-Shien and Martin, Ryan},
  journal={arXiv:2012.11349},
  year={2020}
}

@article{ghosh2016robust,
  title={Robust Bayes estimation using the density power divergence},
  author={Ghosh, Abhik and Basu, Ayanendranath},
  journal={Annals of the Institute of Statistical Mathematics},
  volume={68},
  number={2},
  pages={413--437},
  year={2016},
  publisher={Springer}
}

@article{pacchiardi2021generalized,
  title={{Generalized Bayesian likelihood-free inference using scoring rules estimators}},
  author={Pacchiardi, Lorenzo and Dutta, Ritabrata},
  journal={arXiv:2104.03889},
  year={2021}
}

@book{berlinet2011reproducing,
  title={Reproducing kernel Hilbert spaces in probability and statistics},
  author={Berlinet, Alain and Thomas-Agnan, Christine},
  year={2011},
  publisher={Springer Science \& Business Media}
}

@article{sriperumbudur2010hilbert,
  title={Hilbert space embeddings and metrics on probability measures},
  author={Sriperumbudur, Bharath K and Gretton, Arthur and Fukumizu, Kenji and Sch{\"o}lkopf, Bernhard and Lanckriet, Gert RG},
  journal={The Journal of Machine Learning Research},
  volume={11},
  pages={1517--1561},
  year={2010},
  publisher={JMLR. org}
}

@article{muller1997integral,
  title={Integral probability metrics and their generating classes of functions},
  author={M{\"u}ller, Alfred},
  journal={Advances in Applied Probability},
  volume={29},
  number={2},
  pages={429--443},
  year={1997},
  publisher={Cambridge University Press}
}

@inproceedings{dziugaite2015training,
  title={Training generative neural networks via maximum mean discrepancy optimization},
  author={Dziugaite, Gintare Karolina and Roy, Daniel M and Ghahramani, Zoubin},
  booktitle={Proceedings of the Thirty-First Conference on Uncertainty in Artificial Intelligence},
  pages={258--267},
  year={2015}
}

@inproceedings{li2015generative,
  title={Generative moment matching networks},
  author={Li, Yujia and Swersky, Kevin and Zemel, Rich},
  booktitle={International Conference on Machine Learning},
  pages={1718--1727},
  year={2015},
  organization={PMLR}
}

@article{sethuraman1994constructive,
  title={A constructive definition of Dirichlet priors},
  author={Sethuraman, Jayaram},
  journal={Statistica sinica},
  pages={639--650},
  year={1994},
  publisher={JSTOR}
}

@article{muliere1996bayesian,
  title={{Bayesian nonparametric predictive inference and bootstrap techniques}},
  author={Muliere, Pietro and Secchi, Piercesare},
  journal={Annals of the Institute of Statistical Mathematics},
  volume={48},
  number={4},
  pages={663--673},
  year={1996},
  publisher={Springer}
}

@article{ishwaran2002exact,
  title={Exact and approximate sum representations for the Dirichlet process},
  author={Ishwaran, Hemant and Zarepour, Mahmoud},
  journal={Canadian Journal of Statistics},
  volume={30},
  number={2},
  pages={269--283},
  year={2002},
  publisher={Wiley Online Library}
}

@article{knoblauch2020robust,
  title={{Robust Bayesian inference for discrete outcomes with the total variation distance}},
  author={Knoblauch, Jeremias and Vomfell, Lara},
  journal={arXiv:2010.13456},
  year={2020}
}

@article{bradbury2020jax,
  title={JAX: composable transformations of Python+ NumPy programs, 2018},
  author={Bradbury, James and Frostig, Roy and Hawkins, Peter and Johnson, Matthew James and Leary, Chris and Maclaurin, Dougal and Wanderman-Milne, Skye},
  journal={URL http://github. com/google/jax},
  volume={4},
  pages={16},
  year={2020}
}

@article{riesselman2018deep,
  title={Deep generative models of genetic variation capture the effects of mutations},
  author={Riesselman, Adam J and Ingraham, John B and Marks, Debora S},
  journal={Nature methods},
  volume={15},
  number={10},
  pages={816--822},
  year={2018},
  publisher={Nature Publishing Group}
}

@incollection{huber1992robust,
  title={Robust estimation of a location parameter},
  author={Huber, Peter J},
  booktitle={Breakthroughs in statistics},
  pages={492--518},
  year={1992},
  publisher={Springer}
}

@article{kingma2014adam,
  title={Adam: a method for stochastic optimization},
  author={Kingma, Diederik P and Ba, Jimmy},
  journal={arXiv preprint arXiv:1412.6980},
  year={2014}
}

@article{gretton2012kernel,
  title={A kernel two-sample test},
  author={Gretton, Arthur and Borgwardt, Karsten M and Rasch, Malte J and Sch{\"o}lkopf, Bernhard and Smola, Alexander},
  journal={The Journal of Machine Learning Research},
  volume={13},
  number={1},
  pages={723--773},
  year={2012},
  publisher={JMLR. org}
}

@article{Fournier2015,
 abstract = {Let                                                },
 author = {Fournier, N. and Guillin, A.},
 year = {2015},
 title = {On the rate of convergence in Wasserstein distance of the empirical measure},
 pages = {707--738},
 volume = {162},
 number = {3-4},
 issn = {1432-2064},
 journal = {Probability Theory and Related Fields},
 doi = {10.1007/s00440-014-0583-7}
}

@article{Price2017,
abstract = {Having the ability to work with complex models can be highly beneficial. However, complex models often have intractable likelihoods, so methods that involve evaluation of the likelihood function are infeasible. In these situations, the benefits of work- ing with likelihood-free methods become apparent. Likelihood-free methods, such as parametric Bayesian indirect likelihood that uses the likelihood of an alternative parametric auxiliary model, have been explored throughout the literature as a viable alternative when the model of interest is complex. One of these methods is called the synthetic likelihood (SL), which uses a multivariate normal approximation of the distribution of a set of summary statistics. This paper explores the accuracy and computational efficiency of the Bayesian version of the synthetic likelihood (BSL) ap- proach in comparison to a competitor known as approximate Bayesian computation (ABC) and its sensitivity to its tuning parameters and assumptions. We relate BSL to pseudo-marginal methods and propose to use an alternative SL that uses an un- biased estimator of the SL, when the summary statistics have a multivariate normal distribution. Several applications of varying complexity are considered to illustrate the findings of this paper.},
author = {Price, L. F. and Drovandi, C. C. and Lee, A. and Nott, D. J.},
journal = {Journal of Computational and Graphical Statistics},
title = {{Bayesian synthetic likelihood}},
volume = {8600},
year = {2017}
}

@inproceedings{cherief2020mmd,
  title={{MMD-Bayes: robust Bayesian estimation via maximum mean discrepancy}},
  author={Ch{\'e}rief-Abdellatif, Badr-Eddine and Alquier, Pierre},
  booktitle={Symposium on Advances in Approximate Bayesian Inference},
  pages={1--21},
  year={2020},
  organization={PMLR}
}

@inproceedings{DBLP:conf/iclr/SutherlandTSDRS17,
  author    = {Danica J. Sutherland and
               Hsiao{-}Yu Tung and
               Heiko Strathmann and
               Soumyajit De and
               Aaditya Ramdas and
               Alexander J. Smola and
               Arthur Gretton},
  title     = {Generative models and model criticism via optimized maximum mean discrepancy},
  booktitle = {International Conference on Learning Representations},
  year      = {2017},
}

@article{prangle2020gk,
  title={gk: an R Package for the g-and-k and Generalised g-and-h distributions},
  author={Prangle, Dennis},
  journal={The R Journal},
  volume={12},
  number={1},
  pages={7--20},
  year={2020}
}

@article{bonassi2015sequential,
  title={{Sequential Monte Carlo with adaptive weights for approximate Bayesian computation}},
  author={Bonassi, Fernando V and West, Mike},
  journal={Bayesian Analysis},
  volume={10},
  number={1},
  pages={171--187},
  year={2015},
  publisher={International Society for Bayesian Analysis}
}

@article{bonassi2011bayesian,
  title={{Bayesian learning from marginal data in bionetwork models}},
  author={Bonassi, Fernando V and You, Lingchong and West, Mike},
  journal={Statistical applications in genetics and molecular biology},
  volume={10},
  number={1},
  year={2011},
  publisher={De Gruyter}
}

@article{Parr1980,
  title={Minimum distance and robust estimation},
  author={Parr, W C and Schucany, W R},
  journal={Journal of the American Statistical Association},
  volume={75},
  number={371},
  pages={616--624},
  year={1980},
  publisher={Taylor \& Francis}
}

@book{bernardo2009bayesian,
  title={{Bayesian theory}},
  author={Bernardo, Jos{\'e} M and Smith, Adrian FM},
  volume={405},
  year={2009},
  publisher={John Wiley \& Sons}
}

@article{Andrieu2009,
abstract = {We introduce a powerful and flexible MCMC algorithm for stochastic simulation. The method builds on a pseudo-marginal method originally introduced in [Genetics 164 (2003) 1139-1160], showing how algorithms which are approximations to an idealized marginal algorithm, can share the same marginal stationary distribution as the idealized method. Theoretical results are given describing the convergence properties of the proposed method, and simple numerical examples are given to illustrate the promising empirical characteristics of the technique. Interesting comparisons with a more obvious, but inexact, Monte Carlo approximation to the marginal algorithm, are also given.},
author = {Andrieu, Christophe and Roberts, Gareth O.},
journal = {Annals of Statistics},
keywords = {Auxiliary variable,Convergence,Marginal,Markov chain Monte Carlo},
number = {2},
pages = {697--725},
title = {{The pseudo-marginal approach for efficient Monte Carlo computations}},
volume = {37},
year = {2009}
}

@inproceedings{papamakarios2016fast,
  title={{Fast $\varepsilon$-free inference of simulation models with Bayesian conditional density estimation}},
  author={Papamakarios, George and Murray, Iain},
  booktitle={Advances in Neural Information Processing Systems},
  pages={1028--1036},
  year={2016}
}

@InProceedings{papamakarios19a, 
title = {{Sequential neural likelihood: fast likelihood-free inference with autoregressive flows}},
author = {Papamakarios, George and Sterratt, David and Murray, Iain}, 
pages = {837--848}, 
year = {2019},
volume = {89}, 
series = { International Conference on Artificial Intelligence and Statistics}
}

@inproceedings{safeBayes,
  title={{The safe Bayesian}},
  author={Gr{\"u}nwald, Peter},
  booktitle={International Conference on Algorithmic Learning Theory},
  pages={169--183},
  year={2012},
  organization={Springer}
}

@article{flamary2021pot,
  title={Pot: Python optimal transport},
  author={Flamary, R{\'e}mi and Courty, Nicolas and Gramfort, Alexandre and Alaya, Mokhtar Zahdi and Boisbunon, Aur{\'e}lie and Chambon, Stanislas and Chapel, Laetitia and Corenflos, Adrien and Fatras, Kilian and Fournier, Nemo and others},
  journal={Journal of Machine Learning Research},
  volume={22},
  number={78},
  pages={1--8},
  year={2021}
}

@article{lyddon2019general,
  title={{General Bayesian updating and the loss-likelihood bootstrap}},
  author={Lyddon, SP and Holmes, CC and Walker, SG},
  journal={Biometrika},
  volume={106},
  number={2},
  pages={465--478},
  year={2019},
  publisher={Oxford University Press}
}

@article{catoniPACBound,
  title={{Pac-Bayesian supervised classification: the thermodynamics of statistical learning}},
  author={Catoni, O},
  journal={Institute of Mathematical Statistics Lecture Notes—Monograph Series 56},
  year={2007}
}

@inproceedings{ShaweTaylor,
  title={A {PAC} analysis of a {B}ayesian estimator},
  author={Shawe-Taylor, John and Williamson, Robert C},
  booktitle={Proceedings of the tenth annual conference on Computational learning theory},
  volume={6},
  number={09},
  pages={2--9},
  year={1997}
}

@article{virtanen2020scipy,
  title={SciPy 1.0: fundamental algorithms for scientific computing in Python},
  author={Virtanen, Pauli and Gommers, Ralf and Oliphant, Travis E and Haberland, Matt and Reddy, Tyler and Cournapeau, David and Burovski, Evgeni and Peterson, Pearu and Weckesser, Warren and Bright, Jonathan and others},
  journal={Nature methods},
  volume={17},
  number={3},
  pages={261--272},
  year={2020},
  publisher={Nature Publishing Group}
}

@article{galvani2021bayesian,
  title={{A Bayesian nonparametric learning approach to ensemble models using the proper Bayesian bootstrap}},
  author={Galvani, Marta and Bardelli, Chiara and Figini, Silvia and Muliere, Pietro},
  journal={Algorithms},
  volume={14},
  number={1},
  pages={11},
  year={2021},
  publisher={Multidisciplinary Digital Publishing Institute}
}

@article{thomas2020split,
  title={Split-BOLFI for for misspecification-robust likelihood free inference in high dimensions},
  author={Thomas, Owen and Pesonen, Henri and S{\'a}-Le{\~a}o, Raquel and de Lencastre, Herm{\'\i}nia and Kaski, Samuel and Corander, Jukka},
  journal={arXiv preprint arXiv:2002.09377},
  year={2020}
}

@article{thomas2020generalised,
  title={Generalised Bayes Updates with $ f $-divergences through Probabilistic Classifiers},
  author={Thomas, Owen and Pesonen, Henri and Corander, Jukka},
  journal={arXiv preprint arXiv:2007.04358},
  year={2020}
}

@article{pompe2021introducing,
  title={Introducing prior information in Weighted Likelihood Bootstrap with applications to model misspecification},
  author={Pompe, Emilia},
  journal={arXiv preprint arXiv:2103.14445},
  year={2021}
}

@article{huggins2019robust,
  title={Robust inference and model criticism using bagged posteriors},
  author={Huggins, Jonathan H and Miller, Jeffrey W},
  journal={arXiv preprint arXiv:1912.07104},
  year={2019}
}

@article{loaiza2021focused,
  title={{Focused Bayesian prediction}},
  author={Loaiza-Maya, Ruben and Martin, Gael M and Frazier, David T and others},
  journal={Journal of Applied Econometrics},
  volume={36},
  number={5},
  pages={517--543},
  year={2021},
  publisher={John Wiley \& Sons, Ltd.}
}


\clearpage
\appendix

\thispagestyle{empty}

\onecolumn \makesupplementtitle

In Appendix \ref{app:notation} we summarise the notation used throughout the paper. In Appendix \ref{app:proofs} we prove all theoretical results. In Appendix \ref{app:experiments} we provide details on the experiments introduced in the main text and in Appendix \ref{app:additional-exps} we provide some additional experiments.

\section{NOTATION} \label{app:notation}

In the following section, we recall the notation used in the paper: 

\vspace{-10mm}
\printunsrtglossary[type=symbols, title={}]

\section{PROOFS} \label{app:proofs}

We now provide proofs for all theoretical results presented in the main text as well as some additional theory. 
We start with the necessary background material on the MMD that we use in our proofs. 
We continue by proving our generalisation error bound both in the case where the DP posterior sample is represented exactly by an infinite sum of the stick-breaking process (Theorem \ref{gen-error}) and using the approximation of algorithm \ref{alg:MMD_posterior_bootstrap} (Theorem \ref{gen_error_bound}). 
We also provide the corresponding corollaries for the well-specified case (Corollaries \ref{cor-gen-error-well} and \ref{cor-gen-error-well-stick}) for reference. 
We then proceed by proving the general posterior consistency result of Theorem \ref{post_consistency_miss}; along with the special case of a well-specified model  (Corollary \ref{post_consistency_well}). 
Finally, we prove the results in the case of a contaminated data generating process in both the misspecified and well-specified case (Corollaries \ref{cor-contam-mis} and \ref{cor-contam-well}). 
Recall that throughout all our theoretical results we impose Standing Assumptions \ref{as1} and \ref{as2}.

\subsection{MMD Through Kernel Mean Embeddings}
In the following proofs we will use an equivalent definition of the MMD through  \textit{kernel mean embeddings}. Consider an RKHS $\Hk$ indexed by a reproducing kernel $k$. We say $k$ is a reproducing kernel if (i) $k(\cdot, x) \in \Hk$, $\forall x \in \X$, (ii) $\left<f, k(\cdot, x) \right>_{\Hk} = f(x)$ $\forall x \in \X$ and $\forall f \in \Hk$ (reproducing property); and inner product $\left< \cdot, \cdot \right>_{\Hk}$; see \textcite{berlinet2011reproducing}. For a function $f \in \Hk$ the mean embedding $\mu_{\P} \in \Hk$ with respect to probability measure $\P \in \mathcal{P}$ is defined as  
\begin{talign*}
    \mu_{\P}(\cdot) := \mathbb{E}_{X \sim \P}[k(X, \cdot)] \in \Hk 
\end{talign*}
and satisfies 
\begin{talign*}
 \mathbb{E}_{X \sim \P}[f(X)] = \left< f, \mu_{\P} \right>_{\Hk}.
\end{talign*}
The MMD between probability measures $\P$ and $\Q$ takes the form \parencite[see][Lemma 4]{gretton2012kernel}
\begin{talign*}
\MMD(\P,\Q) = \| \mu_{\P} - \mu_{\Q} \|_{\Hk}.
\end{talign*}

\subsection{Generalisation Error for the Stick-Breaking Process} \label{app:gen-error-exact}

We start with a bound of the generalisation error for the case where the DP sample is represented \textit{exactly}---i.e., by the infinite sum obtained from the stick-breaking process. 
We denote by $\nu$ the  probability measure on $\PX$ corresponding to the stick-breaking process given as
\begin{talign} 
   w_{1:\infty} &\sim \GEM(\alpha'), \quad \alpha' := \alpha + n, \label{eq:gem-weights} \\
   z_{1: \infty} &\simiid \F' :=  \frac{\alpha}{\alpha+n} \F + \frac{n}{n+\alpha} \Pn, \label{eq:post-cent-measure} \\
    \P & = \sum_{i=1}^{\infty} w_i \delta_{z_i} \sim \nplexactposterior. \nonumber
\end{talign}
Note that instead of writing the expectation directly over  $\nplexactposterior$, we instead often use the separate expectations $\E_{w_{1:\infty} \sim \GEM(\alpha')}$ and $\E_{z_{1:\infty} \sim \F'}$ induced by $\nplexactposterior$ in the proofs below.

Before stating and proving the main result of this section, we provide three lemmas that bound the expected MMD between the true data generating mechanism $\truedist$ and $\Pn$, the centering measure of the DP posterior $\F'$ and $\P$ and lastly $\F'$ and $\Pn$.
The reason for this is that the main proof will use a decomposition of the MMD using the triangle inequality; and
the following technical Lemmas bound three of the trickier terms arising from said triangle inequality.

\begin{lemma} \label{lemma7.1}
For $x_{1:n} \simiid \truedist$ we have 
\begin{talign*}
\E_{x_{1:n} \simiid \truedist} \left[\MMD(\Pn, \truedist) \right] \leq \frac{1}{\sqrt{n}}.
\end{talign*}
where $\Pn$ denotes the empirical measure of the sample data, i.e. $\Pn = \frac{1}{n} \sum_{i=1}^n \delta_{x_i}$.
\end{lemma}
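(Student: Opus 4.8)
The plan is to work entirely in the RKHS via the kernel mean embedding representation $\MMD(\Pn,\truedist) = \|\mu_{\Pn} - \mu_{\truedist}\|_{\Hk}$ recalled above, and to exploit the fact that $\mu_{\Pn}$ is an empirical average of i.i.d.\ feature maps. The first move is to pass from the expected MMD to the expected \emph{squared} MMD by Jensen's inequality, since the square root is concave:
\begin{talign*}
\E_{x_{1:n} \simiid \truedist}\left[\MMD(\Pn,\truedist)\right]
\;\leq\; \sqrt{\,\E_{x_{1:n}\simiid\truedist}\left[\MMD^2(\Pn,\truedist)\right]\,}.
\end{talign*}
This reduces the task to bounding the expected squared embedding distance by $1/n$.

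Next I would write out $\mu_{\Pn} - \mu_{\truedist} = \frac{1}{n}\sum_{i=1}^n\bigl(k(x_i,\cdot) - \mu_{\truedist}\bigr)$, noting that each summand $k(x_i,\cdot)-\mu_{\truedist}$ is a mean-zero element of $\Hk$ because $\E_{X\sim\truedist}[k(X,\cdot)] = \mu_{\truedist}$, and that the summands are independent since $x_{1:n}\simiid\truedist$. Expanding the squared norm and taking expectations, all cross terms vanish by independence together with the mean-zero property, leaving only the diagonal terms:
\begin{talign*}
\E\left[\|\mu_{\Pn}-\mu_{\truedist}\|_{\Hk}^2\right]
= \frac{1}{n^2}\sum_{i=1}^n \E\left[\|k(x_i,\cdot)-\mu_{\truedist}\|_{\Hk}^2\right]
= \frac{1}{n}\,\E_{X\sim\truedist}\left[\|k(X,\cdot)-\mu_{\truedist}\|_{\Hk}^2\right].
\end{talign*}

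Finally I would bound the remaining variance term. By the reproducing property $\|k(X,\cdot)\|_{\Hk}^2 = k(X,X)$, so
\begin{talign*}
\E_{X\sim\truedist}\left[\|k(X,\cdot)-\mu_{\truedist}\|_{\Hk}^2\right]
= \E_{X\sim\truedist}[k(X,X)] - \|\mu_{\truedist}\|_{\Hk}^2
\leq \E_{X\sim\truedist}[k(X,X)] \leq 1,
\end{talign*}
where the penultimate step drops the nonnegative term $\|\mu_{\truedist}\|_{\Hk}^2$ and the last step invokes Standing Assumption~\ref{as1}. Combining this with the previous display gives $\E[\MMD^2(\Pn,\truedist)]\le 1/n$, and substituting into the Jensen bound yields the claimed $1/\sqrt{n}$. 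There is no genuine obstacle here; the only point requiring minor care is justifying that the expectation and RKHS inner product interchange (Bochner integrability of $k(X,\cdot)$), which is immediate under the boundedness assumption, and ensuring the cross-term cancellation is stated cleanly.
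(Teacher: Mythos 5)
Your proof is correct and takes essentially the same route as the paper: the paper's own proof applies Jensen's inequality and then cites Lemma 7.1 of \textcite{cherief2019finite} for the bound $\E_{x_{1:n} \simiid \truedist}\left[\MMD^2(\Pn,\truedist)\right] \leq 1/n$, which is precisely the mean-zero cross-term cancellation and variance computation you carry out explicitly. Your write-up simply makes that cited step self-contained, with no substantive difference in the argument.
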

\begin{proof}
Proof follows directly from the proof of \textcite{cherief2019finite}, Lemma 7.1 and using the Jensen's inequality to obtain 
\begin{talign*}
    \E_{x_{1:n} \simiid \truedist}\left[ \MMD(\Pn, \truedist)\right] \leq \sqrt{\E_{x_{1:n} \simiid \truedist}\left[ \MMD^2(\Pn, \truedist)\right]} \leq \frac{1}{\sqrt{n}}.
\end{talign*}
\end{proof}
\begin{lemma}\label{mmd_f_gn} 
Let $\P = \sum_{i=1}^{\infty} w_i \delta_{z_i} \sim \nplexactposterior$ denote a sample from the DP posterior, where $z_{1:\infty} \simiid \F'$ and $w_{1:\infty} \sim \GEM(\alpha') $, then
\begin{talign*}
    \E_{w_{1:\infty} \sim \GEM(\alpha')} \left[ \E_{z_{1:\infty} \sim \F'} \left[ \MMD(\P, \F')  \right] \right] \leq \frac{1}{\sqrt{\alpha + n + 2}}.
\end{talign*}
\end{lemma}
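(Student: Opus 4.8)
The plan is to work with the kernel mean embedding characterisation recalled above, writing $\MMD(\P,\F') = \|\mu_{\P} - \mu_{\F'}\|_{\Hk}$, and then to pass to the squared norm via Jensen's inequality, since $\E[\MMD(\P,\F')] \le \sqrt{\E[\MMD^2(\P,\F')]}$. Because $\P = \sum_{i=1}^{\infty} w_i \delta_{z_i}$ is discrete, its embedding is the weighted sum $\mu_{\P} = \sum_{i=1}^{\infty} w_i\, k(z_i,\cdot)$, and since the $\GEM(\alpha')$ weights satisfy $\sum_i w_i = 1$ almost surely, I can centre each atom and write $\mu_{\P} - \mu_{\F'} = \sum_{i=1}^{\infty} w_i\,(k(z_i,\cdot) - \mu_{\F'})$.

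First I would expand the squared RKHS norm of this sum and take the expectation over the atoms $z_{1:\infty} \simiid \F'$ with the weights held fixed. Writing $g_i := k(z_i,\cdot) - \mu_{\F'}$, each $g_i$ has mean zero in $\Hk$ because $\E_{z_i \sim \F'}[k(z_i,\cdot)] = \mu_{\F'}$, and the $g_i$ are i.i.d., so all cross terms vanish, $\E_{z}[\langle g_i, g_j\rangle_{\Hk}] = 0$ for $i \neq j$, and only the diagonal survives:
\begin{talign*}
\E_{z_{1:\infty}\sim\F'}\Big[\big\| \sum_i w_i g_i\big\|_{\Hk}^2\Big] = \sigma^2 \sum_{i=1}^{\infty} w_i^2, \qquad \sigma^2 := \E_{z\sim\F'}\big[\|k(z,\cdot)-\mu_{\F'}\|_{\Hk}^2\big].
\end{talign*}
By the reproducing property $\|k(z,\cdot)\|_{\Hk}^2 = k(z,z)$, so $\sigma^2 = \E_{z\sim\F'}[k(z,z)] - \|\mu_{\F'}\|_{\Hk}^2 \le 1$ by Standing Assumption \ref{as1}. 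This step crucially uses the independence of the $\GEM$ weights from the atom locations in Sethuraman's construction.

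It remains to take the expectation over $w_{1:\infty} \sim \GEM(\alpha')$, which reduces the whole problem to the single identity $\E_{w\sim\GEM(\alpha')}[\sum_{i} w_i^2] = \tfrac{1}{1+\alpha'}$. I would establish this directly from stick-breaking: with $w_i = V_i \prod_{j<i}(1-V_j)$ and $V_j \simiid \mathrm{Beta}(1,\alpha')$, independence gives $\E[w_i^2] = \E[V_i^2]\prod_{j<i}\E[(1-V_j)^2]$, and substituting $\E[V^2] = \tfrac{2}{(1+\alpha')(2+\alpha')}$ and $\E[(1-V)^2] = \tfrac{\alpha'}{2+\alpha'}$ yields a geometric series summing to $\tfrac{1}{1+\alpha'}$ (equivalently, this is the familiar Dirichlet process tie-probability). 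Combining, $\E_{w}\E_{z}[\MMD^2(\P,\F')] \le \tfrac{\sigma^2}{1+\alpha'} \le \tfrac{1}{\alpha+n+1}$, and Jensen then produces a bound of the stated order $(\alpha+n)^{-1/2}$.

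I expect the main obstacle to be twofold: justifying the manipulations with the infinite sum (interchanging expectation and the series, which is legitimate because every term is nonnegative so Tonelli applies, and the cross-term cancellation requires the i.i.d. mean-zero structure of the $g_i$ together with weight--atom independence), and pinning down the exact constant through the $\GEM$ second-moment identity. The bound on $\sigma^2$ is immediate from the boundedness of the kernel, so essentially all of the difficulty is concentrated in the probabilistic bookkeeping of the stick-breaking weights rather than in any analytic estimate.
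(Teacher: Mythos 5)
Your proposal follows essentially the same route as the paper's proof: the kernel mean embedding form of the MMD, centring via $\sum_i w_i = 1$, cancellation of cross terms from the i.i.d.\ atoms $z_i \sim \F'$, bounding the diagonal variance by $1$ using Standing Assumption \ref{as1}, then the GEM second-moment geometric series and Jensen's inequality. One remark worth recording: your evaluation of the series, $\E_{w_{1:\infty} \sim \GEM(\alpha')}\left[\sum_i w_i^2\right] = \frac{1}{1+\alpha'} = \frac{1}{\alpha+n+1}$ (the familiar DP tie probability), is the correct one, whereas the paper's own proof simplifies the identical series to $\frac{1}{\alpha+n+2}$, which is an arithmetic slip; consequently your argument yields the bound $\frac{1}{\sqrt{\alpha+n+1}}$ rather than the stated $\frac{1}{\sqrt{\alpha+n+2}}$ --- same rate, and in fact the constant the lemma should carry.
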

\begin{proof}
First note that 
\begin{talign*}
    \MMD^2( \P, \F') &= \| \mu_{\P} - \mu_{\F'} \|_{\Hk}^2 \\
    &= \| \sum_{i=1}^{\infty} w_i k(z_i, \cdot) - \mu_{\F'} \|_{\Hk}^2 \\
    &= \| \sum_{i=1}^{\infty} w_i \left[ k(z_i, \cdot) - \mu_{\F'}  \right] \|_{\Hk}^2 \\
    &= \sum_{i=1}^{\infty} w_i^2 \| k(z_i, \cdot) - \mu_{\F'} \|_{\Hk}^2 + 2 \sum_{i \neq j} w_i w_j \left< k(z_i, \cdot) - \mu_{\F'}, k(z_j, \cdot) - \mu_{\F'} \right>
\end{talign*}
Note that since $z_{1:\infty} \simiid \F'$ we have that for any $i \neq j$:
\begin{talign}
    & \E_{z_i, z_j \simiid \F'} \left[ \left< k(z_i, \cdot) - \mu_{\F'}, k(z_j, \cdot) - \mu_{\F'} \right> \right] \nonumber \\
    &= \E_{z_i, z_j \simiid \F'} \left[\left< k(z_i, \cdot) - \E_{z_i \sim \F'}[k(z_i, \cdot)], k(z_j, \cdot) - \E_{z_j \sim \F'}[(k(z_j, \cdot)] \right> \right] \nonumber \\
    &= \E_{z_i, z_j \simiid \F'}[k(z_i, z_j)] - \E_{z_i \sim \F'}[k(z_i, \cdot)] \E_{z_j \sim \F'}[k(z_j, \cdot)] \notag \nonumber\\
     &\qquad - \E_{z_j \sim \F'}[k(z_j, \cdot)] \E_{z_i \sim \F'}[k(z_i, \cdot)] + \E_{z_j \sim \F'}[k(z_j, \cdot)] \E_{z_i \sim \F'}[k(z_i, \cdot)] \nonumber\\
    &= \E_{z_i, z_j \simiid \F'}[k(z_i, z_j)] - \E_{z_i, z_j \simiid \F'}[k(z_i, z_j)] - \E_{z_i, z_j \simiid \F'}[k(z_i, z_j)] + \E_{z_i, z_j \simiid \F'}[k(z_i, z_j)] \nonumber \\
    &= 0. \label{eq:zeroeq}
\end{talign}
Moreover, for any $i = 1, 2, \dots$
\begin{talign}
    \E_{z_i \sim \F'} \left[ \| k(z_i, \cdot) - \mu_{\F'} \|_{\Hk}^2    \right] &= \E_{z_i \sim \F'} \left[ \| k(z_i, \cdot) - \E_{z_i \sim \F'}[ k(z_i, \cdot)] \|_{\Hk}^2 \right] \nonumber \\
    &= \E_{z_i \sim \F'} \left[ \| k(z_i, \cdot) \|_{\Hk}^2 - 2 \left< k(z_i, \cdot), \E_{z_i \sim \F'}[k(z_i, \cdot)] \right> + \|\E_{z_i \sim \F'} [k(z_i, \cdot)] \|_{\Hk}^2 \right] \nonumber\\
    &= \E_{z_i \sim \F'} \left[ \| k(z_i, \cdot) \|_{\Hk}^2 \right] - 2 \left< \E_{z_i \sim \F'} [k(z_i, \cdot)], \E_{z_i \sim \F'}[k(z_i, \cdot)] \right> + \|\E_{z_i \sim \F'} [k(z_i, \cdot)] \|_{\Hk}^2 \nonumber\\
    &= \E_{z_i \sim \F'} \left[ \| k(z_i, \cdot) \|_{\Hk}^2 \right] - 2   \|\E_{z_i \sim \F'} [k(z_i, \cdot)] \|_{\Hk}^2 + \|\E_{z_i \sim \F'} [k(z_i, \cdot)] \|_{\Hk}^2 \nonumber\\
    &= \E_{z_i \sim \F'} [\| k(z_i, \cdot) \|_{\Hk}^2] - \|\E_{z_i \sim \F'} [k(z_i, \cdot)] \|_{\Hk}^2 \nonumber\\
    &\leq \E_{z_i \sim \F'} [\| k(z_i, \cdot) \|_{\Hk}^2] \nonumber\\
    &= \E_{z_i \sim \F'} [ | k(z_i, z_i) |] \label{eq:ineq1}
\end{talign}
where the last equality follows from the fact that 
$$
\| k(z_i, \cdot) \|_{\Hk}^2 = |\left< k(z_i, \cdot), k(\cdot, z_i) \right>| = |k(z_i, z_i)|  
$$
using the reproducing property of the RKHS which says that $\forall f \in \Hk \left<f, k(x,\cdot) \right>_{\Hk} = f(x)$. We then obtain:
\begin{talign*}
    \E_{w_{1:\infty} \sim \GEM(\alpha')} \left[ \E_{z_{1:\infty} \sim \F'} \left[ \MMD^2 (\P, \F')  \right] \right] &\leq   \sum_{i=1}^{\infty}  \E_{w_{1:\infty} \sim \GEM(\alpha')} [w_i^2] \E_{z_i \sim \F'} [\| k(z_i, \cdot) - \mu_{\F'} \|_{\Hk}^2 ] + 2 \sum_{i \neq j} 0 \\
    &\leq  \sum_{i=1}^{\infty}  \E_{w_{1:\infty} \sim \GEM(\alpha')} [w_i^2] \E_{z_i \sim \F'}  [| k(z_i, z_i) | ]  \\
    &\leq \sum_{i=1}^{\infty} \E_{w_{1:\infty} \sim \GEM(\alpha')} [ w_i^2].
\end{talign*}
The first inequality above follows from equation \eqref{eq:zeroeq}, the second inequality follows from equation \eqref{eq:ineq1}, and the last inequality follows from the boundedness of the kernel in standing assumption \ref{as1}.

From the properties of the GEM and Beta distributions we have that since $w_k = \beta_k \prod_{i=1}^{k-1} (1 - \beta_i)$ where $\beta_{1:\infty} \simiid \text{Beta}(1, \alpha + n)$ then 
\begin{talign*}
    \E_{w_{1:\infty} \sim \GEM(\alpha')}[w_i] = \frac{(\alpha+n)^{i-1}}{(1 + \alpha + n)^i} \qquad \text{ and }\qquad 
     \E_{w_{1:\infty} \sim \GEM(\alpha')}[w_i^2] = \frac{2 (\alpha +n)^{i-1}}{(\alpha + n + 2)^i (\alpha + n + 1)}.
\end{talign*}
Therefore 
\begin{talign*}
      \E_{w_{1:\infty} \sim \GEM(\alpha')} \left[ \E_{z_{1:\infty} \simiid \F'} \left[ \MMD^2 (\P, \F')  \right] \right] &\leq \sum_{i=1}^{\infty} \E_{w_{1:\infty} \sim \GEM(\alpha')} [ w_i^2] \\
     &=  \frac{2}{(\alpha + n + 2) (\alpha + n + 1)}\sum_{i=1}^{\infty} \left(\frac{\alpha + n}{\alpha + n + 2}\right)^{i-1} \\
     &= \frac{2}{(\alpha + n + 2) (\alpha + n + 1)} \frac{1}{1 - \frac{\alpha + n}{\alpha + n + 2}} \\
     &= \frac{1}{\alpha + n + 2}
\end{talign*}
Finally, by Jensen's inequality, 
\begin{talign*}
     \E_{w_{1:\infty} \sim \GEM(\alpha')} \left[ \E_{z_{1:\infty} \simiid \F'} \left[ \MMD(\P, \F')  \right] \right] \leq \sqrt{ \E_{w_{1:\infty} \sim \GEM(\alpha')} \left[ \E_{z_{1:\infty} \simiid \F'} \left[ \MMD^2 (\P, \F')  \right] \right]} \leq \frac{1}{\sqrt{\alpha + n + 2}}.
\end{talign*}
\end{proof}

\begin{lemma}\label{mmd_pn_gn}
Let $\Pn$ denote the empirical measure $\Pn = \frac{1}{n} \sum_{i=1}^n \delta_{x_i}$ then 
\begin{talign*}
    \E_{x_{1:n} \simiid \truedist} \left[ \E_{z_{1:\infty} \simiid \F'} \left[ \MMD ( \Pn, \F') \right] \right] \leq \frac{2 \alpha}{\alpha + n}.
\end{talign*}
\end{lemma}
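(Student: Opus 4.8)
The plan is to notice first that the quantity $\MMD(\Pn,\F')$ does not actually depend on the latent samples $z_{1:\infty}$: both $\Pn$ and $\F'$ are determined by the data $x_{1:n}$ alone (recall $\F' = \frac{\alpha}{\alpha+n}\F + \frac{n}{\alpha+n}\Pn$), so the inner expectation over $z_{1:\infty} \simiid \F'$ is vacuous. It therefore suffices to bound the deterministic-in-$z$ quantity $\E_{x_{1:n}\simiid\truedist}\!\left[\MMD(\Pn,\F')\right]$.

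The key algebraic step I would carry out is to exploit the linearity of the kernel mean embedding $\mu_{(\cdot)}$ in its measure argument. Since $\F'$ is the convex combination $\frac{\alpha}{\alpha+n}\F + \frac{n}{\alpha+n}\Pn$, we have $\mu_{\F'} = \frac{\alpha}{\alpha+n}\mu_{\F} + \frac{n}{\alpha+n}\mu_{\Pn}$, and hence a direct cancellation gives
\begin{talign*}
\mu_{\Pn} - \mu_{\F'} = \frac{\alpha}{\alpha+n}\left(\mu_{\Pn} - \mu_{\F}\right).
\end{talign*}
Using the embedding form of the MMD from the previous subsection, this yields the exact identity $\MMD(\Pn,\F') = \frac{\alpha}{\alpha+n}\MMD(\Pn,\F)$.

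It then remains only to control $\MMD(\Pn,\F)$, and here I would simply invoke the boundedness of the kernel (Standing Assumption \ref{as1}). Since $\|\mu_{\P}\|_{\Hk}^2 = \E_{X,X'\simiid\P}[k(X,X')] \leq 1$ for any $\P$, every mean embedding has norm at most $1$, so by the triangle inequality $\MMD(\Pn,\F) = \|\mu_{\Pn} - \mu_{\F}\|_{\Hk} \leq 2$. Combining this with the identity above gives $\MMD(\Pn,\F') \leq \frac{2\alpha}{\alpha+n}$ deterministically in the data, and taking the expectation over $x_{1:n}$ preserves the bound.

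There is no genuine obstacle here: the whole lemma hinges on the one-line cancellation in the second step, after which the estimate is a crude but sufficient application of kernel boundedness. The only point I would state carefully is that the factor $\frac{\alpha}{\alpha+n}$ comes out \emph{exactly}---i.e.\ not via a further triangle inequality passing through $\truedist$---which is precisely what makes the clean constant $\frac{2\alpha}{\alpha+n}$ appear rather than a looser bound.
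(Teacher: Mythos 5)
Your proposal is correct, and it shares the paper's key step: the exact cancellation $\mu_{\Pn} - \mu_{\F'} = \frac{\alpha}{\alpha+n}(\mu_{\Pn} - \mu_{\F})$ coming from linearity of the mean embedding, which is precisely how the paper extracts the factor $\frac{\alpha^2}{(\alpha+n)^2}$ before doing anything else. Where you diverge is in how the remaining quantity is controlled. The paper bounds $\E\left[\|\mu_{\Pn}-\mu_{\F}\|_{\Hk}^2\right]$ by expanding the squared norm into four kernel-expectation terms, bounding each by $1$ (or $2$ for the cross term) via Standing Assumption \ref{as1} to get the constant $4$, and then recovering the MMD bound through Jensen's inequality. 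You instead bound $\|\mu_{\Pn}-\mu_{\F}\|_{\Hk} \leq \|\mu_{\Pn}\|_{\Hk} + \|\mu_{\F}\|_{\Hk} \leq 2$ directly, using $\|\mu_{\P}\|_{\Hk}^2 = \E_{X,X'\simiid\P}[k(X,X')] \leq 1$. This is more elementary and slightly stronger: it gives the bound $\MMD(\Pn,\F') \leq \frac{2\alpha}{\alpha+n}$ \emph{pointwise} in the data, so no Jensen step and indeed no expectation at all is needed, and the lemma's outer expectations are preserved trivially. Your observation that the inner expectation over $z_{1:\infty}$ is vacuous is also accurate — the paper carries $\E_{z_{1:\infty}\simiid\F'}$ through its display even though nothing inside depends on $z$ — and making that explicit is a small clarity gain. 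Both routes land on the same constant, so nothing is lost by your shortcut.
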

\begin{proof}
By definition of $\F'$ and linearity of expectations, it follows that $\E_{\F'}[\cdot] = \frac{\alpha}{\alpha+n} \E_{\F}[\cdot] + \frac{n}{n+\alpha} \E_{\Pn}[\cdot]$ and hence
\begin{talign} \label{eq:muFprime}
\mu_{\F'} = \E_{\F'} [ k(z, \cdot)] = \frac{\alpha}{\alpha+n} \E_{\F}[k(z,\cdot)] + \frac{n}{n+\alpha} \E_{\Pn}[k(z, \cdot)] = \frac{\alpha}{\alpha+n} \mu_{\F} + \frac{n}{n+\alpha} \mu_{\Pn}.
\end{talign}
Therefore, 
\begin{talign*}
    & \E_{x_{1:n} \simiid \truedist} \left[ \E_{z_{1:\infty} \simiid \F'} \left[ \MMD^2(\Pn, \F') \right] \right] \\
    &= \E_{x_{1:n} \simiid \truedist} \left[ \E_{z_{1:\infty} \simiid \F'} \left[ \| \mu_{\Pn} -  \mu_{\F'}  \|^2_{\Hk} \right] \right] \\
    &= \E_{x_{1:n} \simiid \truedist} \left[ \E_{z_{1:\infty} \simiid \F'} \left[ \left\| \mu_{\Pn} -  \frac{\alpha}{\alpha+n} \mu_{\F} - \frac{n}{n + \alpha} \mu_{\Pn} \right\|^2_{\Hk}  \right] \right]\\
    &= \frac{\alpha^2}{(\alpha + n)^2 } \E_{x_{1:n} \simiid \truedist} \left[ \E_{z_{1:\infty} \simiid \F'} \left[ \| \mu_{\Pn} - \mu_{\F} \|^2_{\Hk} \right] \right] \\
    &= \frac{\alpha^2}{(\alpha + n)^2 } \E_{x_{1:n} \simiid \truedist} \left[ \E_{z_{1:\infty} \simiid \F'} \left[ \left\| \frac{1}{n} \sum_{i=1}^n k(x_i, \cdot) \right\|^2_{\Hk} - \frac{2}{n} \sum_{i=1}^n \left< k(x_i, \cdot), \E_{z \sim \F} [ k(z, \cdot) ] \right>+ \left\| \mu_{\F} \right\|^2_{\Hk} \right] \right]\\
    &= \frac{\alpha^2}{(\alpha + n)^2 } \E_{x_{1:n} \simiid \truedist} \left[ \E_{z_{1:\infty} \simiid \F'} \left[\frac{1}{n^2} \sum_{i=1}^n k(x_i, x_i) + \frac{2}{n^2} \sum_{i\neq j} k(x_i, x_j)  - \frac{2}{n} \sum_{i=1}^n  \E_{z \sim \F} [k(x_i, z)] \right. \right. \\  &\quad \quad \left. \left. + \| \E_{z \sim \F} [k(z,\cdot) ]\|^2_{\Hk} \right] \right]\\
    &= \frac{\alpha^2}{(\alpha + n)^2 } \E_{x_{1:n} \simiid \truedist} \left[ \E_{z_{1:\infty} \simiid \F'} \left[ \frac{1}{n^2} \sum_{i=1}^n k(x_i, x_i) + \frac{2}{n^2} \sum_{i\neq j} k(x_i, x_j) - \frac{2}{n} \E_{z \sim \F} \left[\sum_{i=1}^n   k(x_i, z) \right] \right. \right. \\   &\quad \quad \left. \left. + \E_{z,z' \simiid \F} [k(z,z')]\right] \right]\\
    &\leq \frac{\alpha^2}{(\alpha + n)^2 } \E_{x_{1:n} \simiid \truedist} \left[ \E_{z_{1:\infty} \simiid \F'} \left[ \frac{1}{n} + \frac{n(n-1)}{n^2} + 2 + 1 \right] \right]\\
    &= \frac{4 \alpha^2}{(\alpha + n)^2}.
\end{talign*}
The first three equalities above follow from the definition of the MMD in terms of the kernel mean embeddings and equation \eqref{eq:muFprime}. The following three equalities use the definition of the kernel mean embedding as well as the characteristic property which ensures that 
\begin{talign*}
\| \E_{z \sim \F} [k(z,\cdot) ]\|^2_{\Hk} &= \left< \E_{z \sim \F} k(z, \cdot),  \E_{z \sim \F} k(z, \cdot) \right>_{\Hk} \\
&= \left< \E_{z \sim \F} k(z, \cdot),  \E_{z' \sim \F} k(\cdot, z') \right>_{\Hk} \\
&= \E_{z,z' \sim \F} \left<  k(z, \cdot), k(\cdot, z') \right>_{\Hk} \\
&= \E_{z,z' \sim \F} [k(z,z')].
\end{talign*} 
For the inequality above we used again the standing assumption \ref{as1} about the boundedness of the kernel. To conclude, by Jensen's inequality we have
\begin{talign*}
    \E_{x_{1:n} \simiid \truedist} \left[ \E_{z_{1:\infty} \sim \F'} \left[ \MMD ( \Pn, \F') \right] \right] \leq \sqrt{\E_{x_{1:n} \simiid \truedist} \left[ \E_{z_{1:\infty} \sim \F'} \left[ \MMD^2(\Pn, \F') \right] \right]} \leq \frac{2 \alpha}{\alpha + n}.
\end{talign*}
\end{proof}
We can now formulate and prove the generalisation error bound below.
\begin{theorem} \label{gen-error}
Assume $x_{1:n} \simiid \truedist$ and let $\P$ be a sample from the exact DP posterior with law $\nplexactposterior$. Then
\begin{talign*}
    &\E_{x_{1:n} \simiid \truedist} \left[ \E_{\P \sim \nplexactposterior} \left[ \MMD( \truedist, \P_{\theta^\ast(\P)}) \right] \right] 
    \leq  \inf_{\theta \in \Theta}  \MMD( \P_{\theta}, \truedist) 
    + \frac{2}{\sqrt{n}} 
    + \frac{2}{\sqrt{\alpha+n+2}} 
    + \frac{4\alpha}{\alpha + n}. 
\end{talign*}
\end{theorem}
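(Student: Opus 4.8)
The plan is to combine the metric structure of the MMD---recall from the preamble that $\MMD(\P,\Q) = \|\mu_\P - \mu_\Q\|_{\Hk}$, so the triangle inequality holds---with the defining optimality of the map $\theta^\ast$, and then to reduce everything to the three preceding lemmas.

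First I would dispose of the parametric estimation step. Since $\theta^\ast(\P) = \arg\inf_{\theta\in\Theta}\MMD(\P,\P_\theta)$, for every fixed $\theta\in\Theta$ we have $\MMD(\P,\P_{\theta^\ast(\P)}) \leq \MMD(\P,\P_\theta)$. Two applications of the triangle inequality then give, for any fixed $\theta$,
\begin{talign*}
\MMD(\truedist, \P_{\theta^\ast(\P)}) &\leq \MMD(\truedist, \P) + \MMD(\P, \P_{\theta^\ast(\P)}) \\
&\leq 2\,\MMD(\truedist, \P) + \MMD(\truedist, \P_\theta).
\end{talign*}
The left-hand side does not depend on $\theta$, so I can take the infimum over $\theta$ on the right, obtaining the pointwise-in-$\P$ bound $\MMD(\truedist, \P_{\theta^\ast(\P)}) \leq 2\,\MMD(\truedist, \P) + \inf_{\theta\in\Theta}\MMD(\truedist, \P_\theta)$. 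Taking the double expectation $\E_{x_{1:n}\simiid\truedist}\E_{\P\sim\nplexactposterior}$ isolates the irreducible term $\inf_{\theta}\MMD(\truedist,\P_\theta)$, which is deterministic and so passes through both expectations unchanged, reducing the problem to controlling $\E_{x_{1:n}}\E_{\P}[\MMD(\truedist,\P)]$.

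Second, I would bound this remaining quantity by inserting the empirical measure $\Pn$ and the posterior centering measure $\F'$ as intermediate anchors:
\begin{talign*}
\MMD(\truedist, \P) \leq \MMD(\truedist, \Pn) + \MMD(\Pn, \F') + \MMD(\F', \P).
\end{talign*}
Taking $\E_{x_{1:n}}\E_{\P}$ and noting that the first two terms carry no dependence on $\P$, the three summands are controlled exactly by \Cref{lemma7.1}, \Cref{mmd_pn_gn}, and \Cref{mmd_f_gn} respectively, yielding $\E_{x_{1:n}}\E_{\P}[\MMD(\truedist,\P)] \leq \tfrac{1}{\sqrt{n}} + \tfrac{2\alpha}{\alpha+n} + \tfrac{1}{\sqrt{\alpha+n+2}}$. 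Multiplying by the factor $2$ inherited from the first step and adding back $\inf_{\theta}\MMD(\truedist,\P_\theta)$ reproduces the claimed bound.

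The decompositions themselves are routine; the genuine work lives in the three lemmas, which are already established. The only point requiring care is the order of operations in the first step: optimality of $\theta^\ast(\P)$ must be invoked \emph{before} taking $\inf_\theta$, and the infimum must be pulled outside the expectation (legitimate precisely because $\inf_\theta\MMD(\truedist,\P_\theta)$ is a constant). The choice of anchors $\Pn$ and $\F'$ is engineered so that each of the three resulting terms matches a lemma verbatim.
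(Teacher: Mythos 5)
Your proposal is correct and follows essentially the same route as the paper's own proof: the identical combination of triangle inequalities with the optimality of $\theta^\ast(\P)$, the same insertion of $\Pn$ and $\F'$ as anchors, and the same reduction to Lemmas \ref{lemma7.1}, \ref{mmd_pn_gn}, and \ref{mmd_f_gn}. The only difference is cosmetic grouping of the inequalities, so there is nothing to add.
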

\begin{proof}
 Using the triangle inequality we have that for any $\theta \in \Theta$:

\begin{talign*}
    \MMD( \truedist, \P_{\theta^\ast(\P)}) &\leq \MMD(\P_{\theta^\ast(\P)}, \P) + \MMD(\truedist, \P) \\
    &\leq \MMD( \P_{\theta}, \P) +  \MMD(\truedist, \P)  \\
    &\leq \MMD( \P_{\theta}, \truedist) + 2 \MMD(\truedist, \P) \\
    &\leq \MMD( \P_{\theta}, \truedist) + 2 \MMD(\Pn, \truedist) + 2 \MMD(\P, \Pn) \\
    &\leq \MMD( \P_{\theta}, \truedist) + 2 \MMD(\Pn, \truedist) + 2 \MMD(\P, \F') + 2 \MMD(\Pn, \F').
\end{talign*}
For the first step above we used the triangle inequality and for the second we used the fact that since $\theta^\ast = \arg\inf_{\theta \in \Theta} \MMD(\truedist, \P_{\theta})$ it follows that $\MMD(\P_{\theta^\ast(\P)}, \P) \leq \MMD(\Ptheta, \P)$ for any $\theta \in \Theta$. The remaining inequalities are obtained by reapplying the triangle inequality on $\MMD(\Ptheta, \P)$, $\MMD(\truedist, \P)$ and $\MMD(\P, \Pn)$ respectively. Since the above is true for any $\theta \in \Theta$ it follows that
\begin{talign*}
\MMD( \truedist, \P_{\theta^\ast(\P)}) \leq \inf_{\theta \in \Theta}  \MMD( \P_{\theta}, \truedist) + 2 \MMD(\Pn, \truedist) + 2 \MMD(\P, \F') + 2 \MMD(\Pn, \F').
\end{talign*}
Taking expectations on both sides we obtain 
\begin{talign*}
    \E_{x_{1:n} \simiid \truedist} \left[ \E_{w_{1:\infty} \sim \GEM(\alpha')} \left[ \E_{z_{1:\infty} \simiid \F'} \left[ \MMD( \P, \P_{\theta^\ast(\P)}) \right] \right] \right] \leq  &\inf_{\theta \in \Theta}  \MMD( \P_{\theta}, \truedist) + 2 \E_{x_{1:n} \simiid \truedist} \left[ \MMD(\Pn, \truedist)  \right] \\
    &+ 2 \E_{w_{1:\infty} \sim \GEM(\alpha')} \left[ \E_{z_{1:\infty} \simiid \F'} \left[ \MMD(\P, \F')  \right] \right] \\
    &+ 2 \E_{x_{1:n} \simiid \truedist} \left[ \E_{z_{1:\infty} \simiid \F'} \left[ \MMD(\Pn, \F') \right] \right]. 
\end{talign*}
The result now follows by Lemmas \ref{lemma7.1}, \ref{mmd_f_gn} and \ref{mmd_pn_gn}.

\end{proof}

The well-specified case is an immediate consequence of the Theorem:

\begin{corollary} \label{cor-gen-error-well-stick}
Suppose that the model is well-specified, i.e. $\inf_{\theta \in \Theta}  \MMD( \P_{\theta}, \truedist) = 0$ then
\begin{talign*}
    \E_{x_{1:n} \simiid \truedist} \left[ \E_{\P \sim \nplexactposterior} \left[ \MMD( \truedist, \P_{\theta^\ast(\P)}) \right] \right] 
    \leq \frac{2}{\sqrt{n}} 
    + \frac{2}{\sqrt{\alpha+n+2}} 
    + \frac{4\alpha}{\alpha + n}. 
\end{talign*}
\end{corollary}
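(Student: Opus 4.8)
The plan is to obtain this as an immediate specialisation of the stick-breaking generalisation bound in \Cref{gen-error}, rather than redoing any estimates. Recall that \Cref{gen-error} establishes, for a sample $\P$ from the exact DP posterior with law $\nplexactposterior$,
\begin{talign*}
    \E_{x_{1:n} \simiid \truedist} \left[ \E_{\P \sim \nplexactposterior} \left[ \MMD( \truedist, \P_{\theta^\ast(\P)}) \right] \right]
    \leq \inf_{\theta \in \Theta}  \MMD( \P_{\theta}, \truedist)
    + \frac{2}{\sqrt{n}}
    + \frac{2}{\sqrt{\alpha+n+2}}
    + \frac{4\alpha}{\alpha + n}.
\end{talign*}
The three $n,\alpha$-dependent terms on the right already coincide exactly with the three terms claimed in the corollary, so the only step is to evaluate the leading approximation-error term under well-specification.

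Concretely, I would invoke the hypothesis, which states precisely that $\inf_{\theta \in \Theta}\MMD(\P_{\theta},\truedist) = 0$ (this is the formal meaning of well-specification: if some $\theta_0 \in \Theta$ satisfies $\P_{\theta_0} = \truedist$ then $\MMD(\P_{\theta_0},\truedist) = 0$, and since the MMD is nonnegative the infimum is attained at $0$). Substituting this value into the bound of \Cref{gen-error} kills the first summand and leaves exactly $\tfrac{2}{\sqrt{n}} + \tfrac{2}{\sqrt{\alpha+n+2}} + \tfrac{4\alpha}{\alpha+n}$, which is the assertion. It is worth emphasising that the three bounding Lemmas \ref{lemma7.1}, \ref{mmd_f_gn}, and \ref{mmd_pn_gn} which generate these terms (bounding $\E[\MMD(\Pn,\truedist)]$, $\E[\MMD(\P,\F')]$, and $\E[\MMD(\Pn,\F')]$ respectively) never use whether the model is correctly specified, so nothing in their application needs to be revisited.

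There is essentially no obstacle here: the result is a one-line corollary obtained by setting the misspecification gap to zero in the already-proven theorem. The only point requiring any care is the justification that well-specification forces $\inf_{\theta \in \Theta}\MMD(\P_{\theta},\truedist) = 0$, and even this is supplied directly by the corollary's own hypothesis. Hence the proof amounts to a clean substitution into \Cref{gen-error}.
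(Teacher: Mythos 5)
Your proof is correct and matches the paper's own argument exactly: the paper likewise obtains \Cref{cor-gen-error-well-stick} by substituting $\inf_{\theta \in \Theta}\MMD(\P_{\theta},\truedist) = 0$ into the bound of \Cref{gen-error}. Your additional remarks (that the MMD is nonnegative so the infimum is zero under well-specification, and that Lemmas \ref{lemma7.1}, \ref{mmd_f_gn}, \ref{mmd_pn_gn} never use correct specification) are accurate but not needed beyond the one-line substitution.
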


\begin{proof}
The corollary follows directly from Theorem \ref{gen-error} by setting $\inf_{\theta \in \Theta}  \MMD( \P_{\theta}, \truedist) = 0$.
\end{proof}

\subsection{Theorem \ref{gen_error_bound}}
We start with a lemma bounding the expected MMD between the empirical measure and the approximated DP posterior sample. Recall that the probability measure $\nplposterior$  refers to the law on $\PX$ induced by the sampling process:
\begin{talign}
    (w_{1:n}, \tilde{w}_{n+T}) &\sim \Dir\left(1,\dots,1,\frac{\alpha}{T}, \dots, \frac{\alpha}{T}\right) \label{eq:dir-weights} \\ 
    \tilde{x}_{1:T} &\simiid \F \label{eq:fake-data}\\ 
    \P := \sum_{i=1}^n w_i \delta_{x_i} + \sum_{k=1}^T \tilde{w}_k \delta_{\tilde{x}_k} &\sim \nplposterior. \nonumber
\end{talign}
For clarity we use the individual expectations $\E_{(w_{1:n}, \tilde{w}_{n+T}) \sim \Dir\left(1,\dots,1,\frac{\alpha}{T}, \dots, \frac{\alpha}{T}\right)}$, which we denote by $\E_{w \sim \Dir}$, and $\E_{\tilde{x}_{1:T} \simiid \F}$ arising from Equations \eqref{eq:dir-weights}-\eqref{eq:fake-data} in the proofs below---rather than a single expectation over $\nplposterior$.
\begin{lemma} \label{ptildephat}
Let $\P \sim \nplposterior$ denote a single draw from the approximated DP posterior, i.e.
\begin{talign*} 
\P = \sum_{i=1}^n w_i \delta_{x_i} + \sum_{k=1}^T \tilde{w}_k \delta_{\tilde{x}_k}.
\end{talign*}
Then
\begin{talign*}
    \E_{x_{1:n} \simiid \truedist} \left[\E_{\P \sim \nplposterior} \left[\MMD( \P, \Pn) \right] \right] \leq  2\sqrt{\frac{\alpha(1+\alpha)}{(\alpha+n)(\alpha+n+1)}}
\end{talign*}
\end{lemma}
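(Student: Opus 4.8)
The plan is to mirror the structure of Lemmas~\ref{mmd_f_gn} and~\ref{mmd_pn_gn}: first pass to the squared MMD via Jensen's inequality, then expand it through the kernel mean embeddings, and finally take expectations using the moments of the Dirichlet weights together with the boundedness of the kernel. Writing $\mu_{\P} = \sum_{i=1}^n w_i k(x_i,\cdot) + \sum_{k=1}^T \tilde{w}_k k(\tilde{x}_k, \cdot)$ and $\mu_{\Pn} = \frac{1}{n}\sum_{i=1}^n k(x_i,\cdot)$, and exploiting that the Dirichlet weights are independent of both the data $x_{1:n}$ and the pseudo-samples $\tilde{x}_{1:T}$, I would reduce everything to expectations of products $w_i w_j$, $w_i \tilde{w}_k$ and $\tilde{w}_k \tilde{w}_l$ evaluated against kernel entries that are bounded by $1$ in absolute value under Standing Assumption~\ref{as1}.

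The central simplification is to collect the total mass placed on the pseudo-samples, $W := \sum_{k=1}^T \tilde{w}_k$. By the aggregation property of the Dirichlet distribution, $W \sim \mathrm{Beta}(\alpha, n)$ independently of $T$, so that $\E[W^2] = \frac{\alpha(\alpha+1)}{(\alpha+n)(\alpha+n+1)}$. I would decompose $\mu_{\P} - \mu_{\Pn}$ into a contribution arising from the reweighting of the real atoms and a contribution carrying the total mass $W$ on the pseudo-atoms; since every mean embedding of a probability measure has RKHS norm at most $1$ under Standing Assumption~\ref{as1}, the pseudo-atom contribution is controlled by a constant multiple of $W$. Taking expectations and applying Jensen's inequality to $W$ then produces a bound of the form $2\sqrt{\E[W^2]} = 2\sqrt{\frac{\alpha(1+\alpha)}{(\alpha+n)(\alpha+n+1)}}$, with the outer square root supplied, as in the preceding lemmas, by Jensen's inequality relating $\E[\MMD]$ to $\sqrt{\E[\MMD^2]}$.

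The delicate point, and the step I expect to be the main obstacle, is the treatment of the reweighting of the real atoms. Even ignoring the pseudo-samples, the Bayesian-bootstrap weights $w_{1:n}$ do not collapse $\P$ back to $\Pn$, so the term $\| \sum_{i=1}^n (w_i - 1/n)\, k(x_i,\cdot) \|_{\Hk}$ does not vanish and must be handled with care if the clean $\alpha$-dependence above is to survive. One must argue precisely how this reweighting contribution interacts with the pseudo-mass $W$ rather than discarding it; verifying that it is genuinely absorbed into (or otherwise correctly added to) the stated bound is where the bookkeeping is most error-prone, which is consistent with the correction flagged for the downstream Theorem~\ref{gen_error_bound}.
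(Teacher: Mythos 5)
Your decomposition is essentially the paper's own: split $\mu_{\P}-\mu_{\Pn}$ by the triangle inequality into the real-atom reweighting term $\left\|\sum_{i=1}^n w_i k(x_i,\cdot)-\mu_{\Pn}\right\|_{\Hk}$ and a pseudo-atom term, then control each via Dirichlet moments, Standing Assumption \ref{as1}, and Jensen's inequality. Your aggregation of the pseudo-mass into $W:=\sum_{k=1}^T\tilde{w}_k\sim\mathrm{Beta}(\alpha,n)$ is a cleaner equivalent of the paper's coordinate-wise computation of $\E[\tilde{w}_j^2]$ and $\E[\tilde{w}_j\tilde{w}_k]$; both routes give exactly $\sqrt{\E[W^2]}=\sqrt{\alpha(1+\alpha)/((\alpha+n)(\alpha+n+1))}$ for that contribution.

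The obstacle you flag, however, is not bookkeeping you failed to finish; it is a genuine defect of the statement, and your instinct not to discard the reweighting term is correct. That term cannot be absorbed into the $\alpha$-dependent bound. The paper's own proof does not establish the displayed inequality: it concludes with
\begin{talign*}
\E_{x_{1:n}\simiid\truedist}\left[\E_{\P\sim\nplposterior}\left[\MMD(\P,\Pn)\right]\right]\leq\sqrt{\frac{2(n-1)+\alpha(\alpha+1)}{(\alpha+n)(n+\alpha+1)}}+\sqrt{\frac{\alpha(1+\alpha)}{(\alpha+n)(\alpha+n+1)}},
\end{talign*}
and it is this two-term bound, doubled, that is carried into Theorem \ref{gen_error_bound}. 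The single-term bound in the lemma statement is in fact false in general: at $\alpha=0$ it equals zero, whereas $\P$ is then the Bayesian-bootstrap measure $\sum_{i=1}^n w_i\delta_{x_i}$ with $w_{1:n}\sim\Dir(1,\dots,1)$, for which
$\E_{\P\sim\nplposterior}\left[\MMD^2(\P,\Pn)\right]=\frac{n-1}{n(n+1)}\left(\frac{1}{n}\sum_{i=1}^n k(x_i,x_i)-\frac{1}{n(n-1)}\sum_{i\neq j}k(x_i,x_j)\right)$,
a strictly positive quantity for, say, a Gaussian kernel and distinct observations; hence the expected MMD is strictly positive, contradicting a bound of zero. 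The printed statement is evidently the pre-correction version acknowledged in the footnote to Theorem \ref{gen_error_bound}. Completing your outline with the reweighting term retained and bounded as you describe yields the corrected two-term bound above, which is the inequality actually used downstream.
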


\begin{proof}

The mean embedding for $\P$ is $\mu_{\P} = \sum_{i=1}^n w_i k(x_i, \cdot) + \sum_{j=1}^T \tilde{w}_j k(\tilde{x}_j, \cdot)$. Hence using the triangle inequality:
\begin{talign}
    \MMD \left( \P, \Pn \right) &= \| \mu_{\P} - \mu_{\Pn} \|_{\Hk} \nonumber \\
    &= \left\|  \sum_{i=1}^n w_i k(x_i, \cdot) + \sum_{j=1}^T \tilde{w}_j k(\tilde{x}_j, \cdot) - \mu_{\Pn} \right\|_{\Hk} \nonumber \\
    &\leq \left\|  \sum_{i=1}^n w_i k(x_i, \cdot) - \mu_{\Pn} \right\|_{\Hk} + \left\| \sum_{j=1}^T \tilde{w}_j k(\tilde{x}_j, \cdot) \right\|_{\Hk}. \label{eq:ineq5}
\end{talign}
To bound $\E_{x_{1:n} \simiid \truedist} \left[\E_{\P \sim \nplposterior} \left[\| \sum_{i=1}^n w_i k(x_i, \cdot) - \mu_{\Pn} \|_{\Hk} \right]\right]$, first we note that from the moments of a Dirichlet distribution we have that for any $i,j \in \{1, \dots, n\}$:
\begin{talign}
    \E_{w \sim \Dir}[w_i] = \frac{1}{n+\alpha}, \qquad \E_{w \sim \Dir}[w_i^2] = \frac{2}{(\alpha+n+1)(\alpha+n)} \qquad \text{ and } \qquad
    \E_{w \sim \Dir}[w_i w_j] &= \frac{1}{(\alpha+n) (\alpha + n + 1)}. \label{eq:exp-dir}
\end{talign}
Hence 
\begin{talign*}
    &\E_{x_{1:n} \simiid \truedist} \left[ \E_{\P \sim \nplposterior} \left[ \left\| \frac{1}{n} \sum_{i=1}^n k(x_i, \cdot) - \sum_{i=1} w_i k(x_i, \cdot) \right\|_{\Hk}^2 \right] \right]  \\
    &= \E_{x_{1:n} \simiid \truedist} \left[ \E_{\P \sim \nplposterior} \left[ \left\| \sum_{i=1}^n \left(\frac{1}{n} - w_i \right) k(x_i, \cdot) \right\|_{\Hk}^2 \right] \right]\\
    &= \E_{x_{1:n} \simiid \truedist} \left[ \E_{\P \sim \nplposterior} \left[ \sum_{i=1}^n \left(\frac{1}{n} - w_i \right)^2 | k(x_i, x_i) |   + 2 \sum_{1 \leq i < j \leq n}  \left(\frac{1}{n} - w_i \right) \left(\frac{1}{n} - w_j \right)   k(x_i, x_j)  \right] \right]\\
    &\leq  \E_{\P \sim \nplposterior} \left[\sum_{i=1}^n \left(\frac{1}{n} - w_i \right)^2 \right] + 2 \E_{x_{1:n} \simiid \truedist} \left[ \E_{\P \sim \nplposterior} \left[ \sum_{1 \leq i < j \leq n}  \left(\frac{1}{n} - w_i \right) \left(\frac{1}{n} - w_j \right) k(x_i,x_j) \right] \right]  \\
    &= \sum_{i=1}^n \left[ \frac{1}{n^2} - \frac{2}{n} \E_{w \sim \Dir}(w_i) + \E_{w \sim \Dir}(w_i^2) \right] \\ 
    &\quad \quad + 2 \sum_{1 \leq i < j \leq n} \left[\left(\frac{1}{n^2} - \frac{1}{n} \E_{w \sim \Dir}(w_i) - \frac{1}{n} \E_{w \sim \Dir}(w_j) + \E_{w \sim \Dir} (w_i w_j) \right) \mathbb{E}_{x_{1:n} \simiid \truedist} (k(x_i,x_j)) \right] \\
    &= \frac{1}{n} - \frac{2n}{n(n+\alpha)} + \frac{2n}{(n+\alpha)(n + \alpha +1)} + 2\sum_{1 \leq i < j \leq n} \left[\left(\frac{1}{n^2} - \frac{2}{n(n+\alpha)} + \frac{1}{(n+\alpha)(n+\alpha+1)} \right) \mathbb{E}_{x_{1:n} \simiid \truedist} (k(x_i,x_j))\right] \\
    &= \frac{n^2 - n + \alpha^2 + \alpha}{n(n + \alpha)(n + \alpha + 1)} + 2 \sum_{1 \leq i < j \leq n} \left[\frac{\alpha^2 + \alpha - n}{n^2(n+\alpha)(n+ \alpha + 1)}\mathbb{E}_{x_{1:n} \simiid \truedist} (k(x_i,x_j)) \right] \\
    &\leq \frac{n^2 - n + \alpha^2 + \alpha}{n(n + \alpha)(n + \alpha + 1)} + 2 \sum_{1 \leq i < j \leq n} \left[\frac{|\alpha^2 + \alpha - n|}{n^2(n+\alpha)(n+ \alpha + 1)}\right] \\
    &\leq \frac{n^2 - n + \alpha^2 + \alpha}{n(n + \alpha)(n + \alpha + 1)} + 2 \sum_{1 \leq i < j \leq n} \left[\frac{\alpha^2 + \alpha + n}{n^2(n+\alpha)(n+ \alpha + 1)}\right] \\
    &= \frac{2(n-1) + \alpha(\alpha+1)}{(n+\alpha)(n+\alpha+1)}
\end{talign*}
Here the inequalities follow again from standing assumption \ref{as1} and the result is obtained by using the expectations in \eqref{eq:exp-dir}. Hence, by Jensen's inequality, 
\begin{talign} \label{eq:ineq3}
\E_{x_{1:n} \simiid \truedist} \left[\E_{\P \sim \nplposterior} \left[\left\| \sum_{i=1}^n w_i k(x_i, \cdot) - \mu_{\Pn} \right\|_{\Hk} \right]\right]  \leq \sqrt{\frac{2(n - 1) + \alpha(\alpha+1)}{(\alpha+n) (n + \alpha + 1)}}.
\end{talign}
Similarly due to standing assumption \ref{as1}, to bound $ \E_{\P \sim \nplposterior} [\| \sum_{j=1}^T \tilde{w}_j k(\tilde{x}_j, \cdot) \|_{\Hk} ]$, we note that 
\begin{talign*}
  \left\| \sum_{j=1}^T \tilde{w}_j k(\tilde{x}_j, \cdot) \right\|_{\Hk}^2 
    &= \sum_{j=1}^T \tilde{w}_j^2 k(\tilde{x}_j, \tilde{x}_j) + 2 \sum_{1 \leq j < k \leq T} \tilde{w}_j \tilde{w}_j k(\tilde{x}_j, \tilde{x}_k) \\
    &\leq \sum_{j=1}^T \tilde{w}_j^2 + 2 \sum_{1 \leq j < k \leq T} \tilde{w}_j \tilde{w}_k
\end{talign*}
hence 
\begin{talign}
   \E_{\P \sim \nplposterior} \left[\| \sum_{j=1}^T \tilde{w}_j k(\tilde{x}_j, \cdot) \|^2_{\Hk} \right]  &\leq \E_{w \sim \Dir} \left[\sum_{j=1}^T \tilde{w}_j^2 \right] + \E_{w \sim \Dir} \left[2 \sum_{1 \leq j < k \leq T} \tilde{w}_j \tilde{w}_k\right]. \label{eq:ineq2}
\end{talign}
Now from the moments of a Dirichlet distribution we have that for any $j,k \in \{1, \dots, T\}$:
\begin{talign*}
   \E_{w \sim \Dir}[\tilde{w}_j] = \frac{\alpha}{T(n+\alpha)}, \qquad  \E_{w \sim \Dir}[\tilde{w}_j^2] = \frac{\alpha^2 + T \alpha}{T^2(\alpha+n+1)(\alpha+n)} \qquad \text{ and } \qquad \E_{w \sim \Dir}[\tilde{w}_j \tilde{w}_k] &= \frac{\alpha^3 + \alpha^2 n}{T^2 (\alpha+n)^2 (\alpha + n + 1)}.
\end{talign*}
Substituting these in equation \eqref{eq:ineq2} we obtain
\begin{talign*}
    \E_{\P \sim \nplposterior} \left[\left\| \sum_{j=1}^T \tilde{w}_j k(\tilde{x}_j, \cdot) \right\|^2_{\Hk} \right] &\leq \frac{\alpha^2 + T \alpha}{T(\alpha+n+1)(\alpha+n)} + (T-1) \frac{\alpha^3 + \alpha^2 n}{T (\alpha+n)^2 (\alpha + n + 1)} \\
    &= \frac{T(\alpha+n)(\alpha+1)\alpha}{T(\alpha+n+1)(\alpha+n)} \\
    &= \frac{\alpha (1 + \alpha)}{(\alpha +n)(\alpha+n+1)}.
\end{talign*}
Hence, by Jensen's inequality
\begin{talign} \label{eq:ineq4}
\E_{\P \sim \nplposterior} \left[\left\| \sum_{j=1}^T \tilde{w}_j k(\tilde{x}_j, \cdot) \right\|_{\Hk} \right] \leq \sqrt{\frac{\alpha (1 + \alpha)}{(\alpha +n)(\alpha+n+1)}} 
\end{talign}
for some fixed $\alpha$. Therefore, by substituting equations \eqref{eq:ineq3} and \eqref{eq:ineq4} in equation \eqref{eq:ineq5} we obtain the required result:
\begin{talign*}
    \E_{x_{1:n} \simiid \truedist} \left[ \E_{\P \sim \nplposterior} \left[\MMD( \P, \Pn) \right]  \right] &\leq \E_{x_{1:n} \simiid \truedist} \left[  \left\|  \sum_{i=1}^n w_i k(x_i, \cdot) - \mu_{\Pn} \right\|_{\Hk} \right]  +  \E_{\P \sim \nplposterior} \left[ \left\| \sum_{j=1}^T \tilde{w}_j k(\tilde{x}_j, \cdot) \right\|_{\Hk} \right] \\
    &\leq \sqrt{\frac{2(n - 1) + \alpha(\alpha+1)}{(\alpha+n) (n + \alpha + 1)}} + \sqrt{\frac{\alpha (1 + \alpha)}{(\alpha +n)(\alpha+n+1)}}.
\end{talign*}
\end{proof}
We now proceed with the proof of the generalisation error in Theorem \ref{gen_error_bound}.
\begin{proof}
We have that for any $\theta \in \Theta$:
\begin{talign*}
    \MMD( \truedist, \P_{\theta^\ast(\P)}) &\leq \MMD(\P_{\theta^\ast(\P)}, \P) + \MMD(\truedist, \P) \\
    &\leq \MMD( \P_{\theta}, \P) +  \MMD(\truedist, \P)  \\
    &\leq \MMD( \P_{\theta}, \truedist) + 2 \MMD(\truedist, \P) \\
    &\leq \MMD( \P_{\theta}, \truedist) + 2 \MMD(\Pn, \truedist) + 2 \MMD(\P, \Pn)
\end{talign*}
For the first inequality above we used the triangle inequality and for the second inequality we used the fact that $\theta^\ast(\P) = \arg\inf_{\theta \in \Theta} \MMD(\P, \P_{\theta})$. The last two inequalities follow again by repeatedly applying the triangle inequality to $\MMD(\Ptheta, \P)$ and $\MMD(\truedist, \P)$. The above statements hold for any $\theta \in \Theta$ hence we have
\begin{talign*}
\MMD( \truedist, \P_{\theta^\ast(\P)}) \leq \inf_{\theta \in \Theta}  \MMD( \P_{\theta}, \truedist) + 2 \MMD(\Pn, \truedist) + 2 \MMD(\truedist, \Pn).
\end{talign*}
Taking double expectation on both sides we obtain 
\begin{talign*}
     \E_{x_{1:n} \simiid \truedist} \left[ \E_{\P \sim \nplposterior} \left[ \MMD(\truedist, \P_{\theta^\ast(\P)}) \right] \right] \leq &\inf_{\theta \in \Theta}  \MMD( \P_{\theta}, \truedist) + 2 \E_{x_{1:n} \simiid \truedist}\left[ \MMD(\Pn, \truedist)\right] \\
     &+ 2  \E_{x_{1:n} \simiid \truedist} \left[ \E_{\P \sim \nplposterior} \left[ \MMD(\P, \Pn) \right] \right].
\end{talign*}
The result then follows from Lemmas \ref{lemma7.1} and \ref{ptildephat}.
\end{proof}

The well-specified case is an immediate consequence of the Theorem:

\begin{corollary} \label{cor-gen-error-well}
Suppose that the model is well-specified, i.e. $\inf_{\theta \in \Theta}  \MMD( \P_{\theta}, \truedist) = 0$ then
\begin{talign*}
    \E_{x_{1:n} \simiid \truedist} \left[ \E_{\P \sim \nplposterior} \left[\MMD(\truedist, \P_{\theta^\ast(\P)}) \right] \right] 
    \leq \frac{2}{\sqrt{n}} + 2\sqrt{\frac{2(n - 1) + \alpha(\alpha+1)}{(\alpha+n) (n + \alpha + 1)}} + 2\sqrt{\frac{\alpha (1 + \alpha)}{(\alpha +n)(\alpha+n+1)}}.
\end{talign*}
\end{corollary}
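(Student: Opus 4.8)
The plan is to obtain this corollary as an immediate specialisation of Theorem \ref{gen_error_bound}, in exactly the way Corollary \ref{cor-gen-error-well-stick} is derived from Theorem \ref{gen-error}. The entire quantitative content has already been established in Theorem \ref{gen_error_bound}, which holds for an arbitrary parametric family $\PTheta$ under Standing Assumptions \ref{as1} and \ref{as2}; the only observation needed here is that the well-specified hypothesis is precisely the statement that the leading term of that bound vanishes.

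Concretely, first I would invoke Theorem \ref{gen_error_bound}, giving
$\E_{x_{1:n} \simiid \truedist}[\E_{\P \sim \nplposterior}[\MMD(\truedist, \P_{\theta^\ast(\P)})]] \leq \inf_{\theta \in \Theta} \MMD(\truedist, \P_\theta) + 2/\sqrt{n} + 2\sqrt{(2(n-1)+\alpha(\alpha+1))/((\alpha+n)(n+\alpha+1))} + 2\sqrt{\alpha(1+\alpha)/((\alpha+n)(\alpha+n+1))}$.
Then I would substitute the hypothesis $\inf_{\theta \in \Theta} \MMD(\P_\theta, \truedist) = 0$, which is exactly the first summand on the right-hand side. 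The three remaining terms are left untouched, and one recovers the claimed inequality verbatim.

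There is no genuine obstacle: the corollary merely records the well-specified consequence of the master bound. The only point worth noting is interpretive rather than technical, namely that when the kernel is characteristic the $\MMD$ is a metric on $\PkX$, so the usual well-specification condition $\truedist \in \PTheta$ coincides with the vanishing of the infimum $\inf_{\theta \in \Theta} \MMD(\P_\theta, \truedist) = 0$; since the corollary already takes this vanishing infimum as its hypothesis, characteristicness is not actually required for the argument, and the substitution step suffices.
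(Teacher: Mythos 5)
Your proposal is correct and matches the paper's own proof exactly: the corollary is obtained by invoking Theorem \ref{gen_error_bound} and setting $\inf_{\theta \in \Theta} \MMD(\P_{\theta}, \truedist) = 0$, which removes the leading term and leaves the stated bound. Your additional remark that characteristicness of the kernel is not required for this substitution is accurate but, as you note, purely interpretive.
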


\begin{proof}
The corollary follows directly from Theorem \ref{gen_error_bound} by setting $\inf_{\theta \in \Theta}  \MMD( \P_{\theta}, \truedist) = 0$.
\end{proof}

\subsection{Theorem \ref{post_consistency_miss}}
We can prove that the posterior consistency result indeed holds by adapting the arguments in \textcite{cherief2020mmd} (Theorem 2) from equation (\ref{Pi-case-consistency}) to the required statement of Theorem \ref{post_consistency_miss}.
\begin{proof}
Using Proposition \ref{gen_error_bound} and Markov's inequality we have:

\begin{talign*}
    &\E_{x_{1:n} \simiid \truedist} \left[ \nplposterior \left( \MMD( \P_{\theta^\ast(\P)}, \truedist) - \inf_{\theta \in \Theta} \MMD(\P_{\theta}, \truedist) > M_n \cdot n^{-1/2}\right)\right] \\
    &\leq \frac{\E_{x_{1:n} \simiid \truedist} \left[ \E_{\P \sim \nplposterior} \left[ \MMD( \P_{\theta^\ast(\P)}, \truedist) - C    \right]  \right]}{M_n \cdot n^{-1/2}} = \frac{\E_{x_{1:n} \simiid \truedist} \left[ \E_{\P \sim \nplposterior} \left[ \MMD( \P_{\theta^\ast(\P)}, \truedist)  \right]  \right] - C}{M_n \cdot n^{-1/2}} \\
    &\leq \frac{2 n^{-1/2}}{M_n \cdot n^{-1/2}} + \frac{2\sqrt{2n(n-1)+\alpha n(\alpha+1)}}{M_n\sqrt{(\alpha+n)(\alpha+n+1)}} + \frac{2 \sqrt{n \alpha(1 + \alpha)}}{M_n \sqrt{(\alpha + n)(\alpha + n + 1)}}  \overset{n\to\infty}{\longrightarrow} 0.
\end{talign*}

\end{proof}

Again, the well-specified case is a special case of Theorem \ref{post_consistency_miss} for $C = 0$.

\begin{corollary} \label{post_consistency_well}
Suppose we have a well-specified model, i.e. $\inf_{\theta \in \Theta} \MMD(\P_{\theta}, \truedist) = 0$ and let $\P = \sum_{i=1}^n w_i \delta_{x_i} + \sum_{k=1}^T \tilde{w}_k \delta_{\tilde{x}_k} \sim \nplposterior$ be a sample from the approximate posterior. Then for any $M_n \rightarrow + \infty$ such that $M_n \cdot n^{-1/2} \rightarrow 0$ as $n \rightarrow \infty$
\begin{talign*} 
    \nplposterior \biggl(  \P \in \PX :
    &\MMD(\truedist, \P_{\theta^\ast(\P)}) >  \frac{M_n}{n^{1/2}} \biggr) \overset{n\to\infty}{\longrightarrow} 0.
\end{talign*}
\end{corollary}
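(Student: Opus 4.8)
The plan is to observe that this corollary is simply the well-specified specialization of Theorem \ref{post_consistency_miss}, recovered by setting the irreducible error $C = \inf_{\theta \in \Theta}\MMD(\P_\theta, \truedist)$ equal to zero. No fresh machinery is needed; the entire content is a transcription of the parent theorem with $C = 0$.

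First I would recall that the proof of Theorem \ref{post_consistency_miss} rests on two ingredients: the generalisation bound and Markov's inequality applied to the non-negative quantity $\MMD(\P_{\theta^\ast(\P)}, \truedist) - C$ under $\nplposterior$. When $C = 0$, the relevant generalisation bound is exactly Corollary \ref{cor-gen-error-well} (equivalently Theorem \ref{gen_error_bound} with the infimum term dropped), which certifies that $\E_{x_{1:n} \simiid \truedist}[\E_{\P \sim \nplposterior}[\MMD(\truedist, \P_{\theta^\ast(\P)})]]$ is bounded above by a sum of terms each vanishing at the $n^{-1/2}$ rate.

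Applying Markov's inequality to this expectation then gives, for the threshold $M_n n^{-1/2}$, a bound of order $O(n^{-1/2})/(M_n n^{-1/2}) = O(1/M_n)$, which tends to zero precisely because $M_n \to +\infty$. The only conceptual point, already settled within Theorem \ref{post_consistency_miss}, is that because $\nplposterior(\cdot)$ is a probability lying in $[0,1]$, the vanishing of its expectation over the sampling of $x_{1:n}$ delivers the claimed convergence of $\nplposterior(\cdot)$ itself (in probability and hence in $L^1$). I therefore do not anticipate any genuine obstacle beyond reproducing the parent theorem's argument with $C = 0$; the extra hypothesis $M_n n^{-1/2} \to 0$ plays no role in securing the convergence and serves only to guarantee that the shrinking threshold collapses onto the true MMD-minimiser, so that the statement genuinely expresses contraction to $\theta^\ast(\truedist)$.
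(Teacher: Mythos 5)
Your proposal is correct and follows essentially the same route as the paper: the paper's proof likewise applies Markov's inequality to $\E_{x_{1:n} \simiid \truedist}\left[\E_{\P \sim \nplposterior}\left[\MMD(\truedist, \P_{\theta^\ast(\P)})\right]\right]$, invokes Theorem \ref{gen_error_bound} with the infimum term equal to zero (i.e.\ Corollary \ref{cor-gen-error-well}), and concludes from the resulting $O(1/M_n)$ bound. If anything, you are slightly more explicit than the paper in noting that vanishing of the data-averaged posterior probability is what justifies the stated convergence, and in observing that the hypothesis $M_n n^{-1/2} \to 0$ is needed only for the statement to express genuine contraction rather than for the proof itself.
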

\begin{proof}
Using Theorem \ref{gen_error_bound} in the well specified case and Markov's inequality we have that 

\begin{talign*}
    \E_{x_{1:n} \simiid \truedist} \left[ \nplposterior \left( \MMD(\truedist, \P_{\theta^\ast(\P)}) > M_n \cdot n^{-1/2}\right)\right] &\leq \frac{\E_{x_{1:n} \simiid \truedist} \left[ \E_{\P \sim \nplposterior} \left[ \MMD(\truedist, \P_{\theta^\ast(\P)}) \right]  \right]}{M_n \cdot n^{-1/2}} \\
    &\leq \frac{2 n^{-1/2}}{M_n \cdot n^{-1/2}} + \frac{2\sqrt{2n(n-1)+\alpha n(\alpha+1)}}{M_n\sqrt{(\alpha+n)(\alpha+n+1)}} + \frac{2 \sqrt{n \alpha(1 + \alpha)}}{M_n \sqrt{(\alpha + n)(\alpha + n + 1)}}  \overset{n\to\infty}{\longrightarrow} 0.
\end{talign*}

\end{proof}

\subsection{Corollary \ref{cor-contam-mis}}
Before we present the generalisation bound we will use the following lemma from \textcite{cherief2019finite} which bounds the absolute difference of the MMD between the parametric model and each of the contaminated and non-contaminated probability measures. 

\begin{lemma}[\textcite{cherief2019finite} Lemma 3.3]\label{outliers_ineq}
For any $\theta \in \Theta$,
\begin{talign*}
    | \MMD(\P_{\theta}, \truedist) - \MMD(\P_{\theta}, \tilde{\P}) | \leq 2 \epsilon.
\end{talign*}
\end{lemma}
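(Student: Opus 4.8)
The plan is to reduce the claimed bound to a statement about $\MMD(\truedist, \tilde{\P})$ alone, and then compute the latter exactly. Since $\MMD(\P,\Q) = \|\mu_{\P} - \mu_{\Q}\|_{\Hk}$ is the distance induced by a Hilbert-space norm, it obeys the triangle inequality; the reverse triangle inequality then gives, for every $\theta \in \Theta$, that $|\MMD(\P_\theta, \truedist) - \MMD(\P_\theta, \tilde{\P})| \leq \MMD(\truedist, \tilde{\P})$, with the $\P_\theta$ terms cancelling. It therefore suffices to prove the $\theta$-free bound $\MMD(\truedist, \tilde{\P}) \leq 2\epsilon$.

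For this I would exploit the linearity of the kernel mean embedding $\P \mapsto \mu_\P$. Substituting the contamination decomposition $\truedist = (1-\epsilon)\tilde{\P} + \epsilon\Q$ gives $\mu_{\truedist} = (1-\epsilon)\mu_{\tilde{\P}} + \epsilon\mu_\Q$, whence $\mu_{\truedist} - \mu_{\tilde{\P}} = \epsilon(\mu_\Q - \mu_{\tilde{\P}})$. Taking $\Hk$-norms yields the exact identity $\MMD(\truedist, \tilde{\P}) = \epsilon\,\MMD(\Q, \tilde{\P})$, so the contamination fraction factors out cleanly.

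The final step is to bound $\MMD(\Q, \tilde{\P}) \leq 2$ using Standing Assumption \ref{as1}. Boundedness $|k(x,y)| \leq 1$ forces $\|\mu_\P\|_{\Hk}^2 = \E_{X, X' \sim \P}[k(X, X')] \leq 1$ for any probability measure $\P$, so $\|\mu_\P\|_{\Hk} \leq 1$; the triangle inequality then gives $\MMD(\Q, \tilde{\P}) \leq \|\mu_\Q\|_{\Hk} + \|\mu_{\tilde{\P}}\|_{\Hk} \leq 2$. Chaining the three steps produces $\MMD(\truedist, \tilde{\P}) = \epsilon\,\MMD(\Q, \tilde{\P}) \leq 2\epsilon$, which delivers the lemma.

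There is no genuine obstacle here: each step is an elementary consequence of the Hilbert-space structure of the MMD together with the bounded-kernel assumption. The only point needing care is recognising that the mixture decomposition of $\truedist$ passes through the embedding by linearity—this is precisely what converts the apparently $\theta$-dependent left-hand side into the uniform constant $2\epsilon$—and applying the reverse triangle inequality in the correct direction so that the $\P_\theta$ contributions vanish.
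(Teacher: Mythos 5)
Your proof is correct and follows essentially the same route as the cited source: the paper imports this result directly from \textcite{cherief2019finite} (Lemma 3.3), whose argument is exactly your chain of the reverse triangle inequality, linearity of the mean embedding giving $\MMD(\truedist,\tilde{\P}) = \epsilon\,\MMD(\Q,\tilde{\P})$, and the bound $\|\mu_{\P}\|_{\Hk}\leq 1$ from Standing Assumption \ref{as1}. Nothing is missing.
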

Using this lemma and the previously obtained generalisation bounds we can show the required statement of Corollary \ref{cor-contam-mis}.
\begin{proof}
From Lemma \ref{outliers_ineq} we have that for any $\theta \in \Theta$:
\begin{talign} 
    \MMD(\P_{\theta}, \tilde{\P}) \leq 2 \epsilon + \MMD(\P_{\theta}, \P^\ast_{0}) \label{ineq1} \\
    \MMD(\P_{\theta}, \truedist) \leq 2 \epsilon + \MMD(\P_{\theta}, \tilde{\P}) \label{ineq2}
\end{talign}
Hence, using equations \ref{ineq1}, \ref{ineq2} and Proposition  \ref{gen_error_bound} we have that
\begin{talign*}
  \E_{x_{1:n} \simiid \truedist} \left[ \E_{\P \sim \nplposterior} \left[ \MMD(\tilde{\P}, \P_{\theta^\ast(\P)}) \right] \right] &\leq 2 \epsilon + \E_{x_{1:n} \simiid \truedist} \left[ \E_{\P \sim \nplposterior} \left[ \MMD(\truedist, \P_{\theta^\ast(\P)}) \right] \right] \\
  &\leq 2 \epsilon + \inf_{\theta \in \Theta} \MMD(\truedist, \P_{\theta}) + \frac{2}{\sqrt{n}} + 2\sqrt{\frac{2(n - 1) + \alpha(\alpha+1)}{(\alpha+n) (n + \alpha + 1)}} + 2\sqrt{\frac{\alpha (1 + \alpha)}{(\alpha +n)(\alpha+n+1)}} \\ 
  &\leq 4 \epsilon + \inf_{\theta \in \Theta} \MMD(\tilde{\P}, \P_{\theta}) + \frac{2}{\sqrt{n}} + 2\sqrt{\frac{2(n - 1) + \alpha(\alpha+1)}{(\alpha+n) (n + \alpha + 1)}} + 2\sqrt{\frac{\alpha (1 + \alpha)}{(\alpha +n)(\alpha+n+1)}}.
\end{talign*}

\end{proof}

The well-specified case is an immediate consequence:
\begin{corollary}[Well-specified Case]\label{cor-contam-well} 
Suppose the model is well specified in terms of the target $\tilde{\P}$, i.e. $\exists \theta \in \Theta$ such that $\P_{\theta} = \tilde{\P}$. Then
\begin{talign*}
     \E_{x_{1:n} \simiid \truedist} \left[ \E_{\P \sim \nplposterior} \left[ \MMD( \tilde{\P}, \P_{\theta^\ast(\P)}) \right] \right] \leq 4 \epsilon + \frac{2}{\sqrt{n}} 
    + 2\sqrt{\frac{2(n - 1) + \alpha(\alpha+1)}{(\alpha+n) (n + \alpha + 1)}} + 2\sqrt{\frac{\alpha (1 + \alpha)}{(\alpha +n)(\alpha+n+1)}}.
\end{talign*}
\end{corollary}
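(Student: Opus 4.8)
The plan is to obtain this bound as the well-specified specialisation of Corollary \ref{cor-contam-mis}, so that no new estimates are required. Recall that for the contaminated model $\truedist = (1-\epsilon)\tilde{\P} + \epsilon\Q$, Corollary \ref{cor-contam-mis} already furnishes the inequality
\begin{talign*}
\E_{x_{1:n} \simiid \truedist} \left[ \E_{\P \sim \nplposterior} \left[ \MMD(\tilde{\P}, \P_{\theta^\ast(\P)}) \right] \right] \leq \inf_{\theta \in \Theta} \MMD(\tilde{\P}, \P_\theta) + 4\epsilon + \frac{2}{\sqrt{n}} + 2\sqrt{\frac{2(n-1)+\alpha(\alpha+1)}{(\alpha+n)(n+\alpha+1)}} + 2\sqrt{\frac{\alpha(1+\alpha)}{(\alpha+n)(\alpha+n+1)}}.
\end{talign*}
All of the $\epsilon$-, $n$- and $\alpha$-dependent terms on the right-hand side already coincide with those claimed in the corollary, so the only remaining task is to argue that the leading approximation term $\inf_{\theta\in\Theta}\MMD(\tilde{\P},\P_\theta)$ vanishes under the well-specification hypothesis.

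To that end I would use that the MMD is a non-negative quantity, being the RKHS-norm $\|\mu_{\tilde{\P}} - \mu_{\P_\theta}\|_{\Hk}$ of a difference of kernel mean embeddings (as recalled in \Cref{app:proofs}). By the well-specification assumption there exists $\theta_0 \in \Theta$ with $\P_{\theta_0} = \tilde{\P}$, whence $\MMD(\tilde{\P},\P_{\theta_0}) = \|\mu_{\tilde{\P}} - \mu_{\tilde{\P}}\|_{\Hk} = 0$. Combining non-negativity of the MMD with the existence of this exact minimiser forces $\inf_{\theta\in\Theta}\MMD(\tilde{\P},\P_\theta) = 0$. Substituting this into the displayed bound from Corollary \ref{cor-contam-mis} then deletes the leading term and immediately yields the stated inequality.

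I expect essentially no obstacle in this last step: the entire analytic burden has already been discharged in establishing Corollary \ref{cor-contam-mis}, which itself rests on the triangle-inequality decomposition, Lemma \ref{outliers_ineq}, and Theorem \ref{gen_error_bound}; well-specification merely removes the single non-vanishing target term. Should a self-contained argument be preferred, the same chain of triangle inequalities used for Corollary \ref{cor-contam-mis} could be rerun and evaluated at the particular $\theta = \theta_0$ with $\P_{\theta_0} = \tilde{\P}$, but this reproduces the identical computation and is unnecessary once the more general misspecified corollary is available.
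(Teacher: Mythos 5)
Your proposal is correct and matches the paper's own proof exactly: both deduce the result directly from \Cref{cor-contam-mis} by observing that well-specification forces $\inf_{\theta \in \Theta} \MMD(\tilde{\P}, \P_{\theta}) = 0$, since the MMD is non-negative and vanishes at the $\theta_0$ with $\P_{\theta_0} = \tilde{\P}$. No gap; your extra remark about rerunning the triangle-inequality chain at $\theta_0$ is, as you note, unnecessary.
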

\begin{proof}
The corollary follows from Corollary \ref{cor-contam-mis} noting that in the well-specified case $\inf_{\theta \in \Theta} \MMD(\P_{\theta}, \tilde{\P}) = 0$.
\end{proof}


\section{EXPERIMENTAL DETAILS} \label{app:experiments}

We now provide further experimental details of our method. We first explain how gradient-based numerical optimisation can be used to minimise the MMD objective in algorithm \ref{alg:MMD_posterior_bootstrap}. Next, we provide the experimental setup necessary for the reproduction of the experiments presented in the main text. We then give some additional details on the G-and-k and Toggle-Switch models. We finally provide results of all experiments over a number of independent runs.   

\subsection{Numerical Optimisation for the MMD Objective}
We first explain how the MMD objective in equation \eqref{eq:thetamap} can be approximated and minimized through gradient-based methods. For two probability measures $\P, \Q \in \mathcal{P}$ recall that the squared MMD is defined as 
\begin{talign*}
    \MMD^2 (\P, \Q) 
     = \int_{\X \times \X} &k(x,y) \P(dx) \P(dy)  
     - 2 \int_{\X \times \X} k(x,y) \P(dx) \Q(dy) \\
     &+ \int_{\X \times \X} k(x,y) \Q(dx) \Q(dy).
\end{talign*}
Since these integrals are usually intractable we can easily approximate this by sampling $x_{1:N} \simiid \P$ and $y_{1:m} \simiid \Q$ and using the U-statistic:
\begin{talign*}
   \widehat{\MMD}^2(\P, \Q) = \frac{1}{N(N-1)} &\sum_{i \neq i'}^N k(x_i, x_{i'}) - \frac{2}{Nm} \sum_{i=1}^N \sum_{j=1}^m k(x_i, y_j) \\
   &+ \frac{1}{m(m-1)} \sum_{j \neq j'} k(y_j, y_{j'}).
\end{talign*}
In our case we are interested in minimising the MMD between the parametric model $\Ptheta \in \PTheta$ defined through the simulator $G_{\theta}: \U \rightarrow \X$ and an approximated sample from the DP posterior $\P$. Hence, for samples $y_{1:N} \simiid \P$ and $u_{1:M} \simiid \U$ the approximated squared MMD is 
\begin{talign} 
\widehat{\MMD}^2(\P, \Ptheta) = \frac{1}{N(N-1)} &\sum_{i \neq i'}^N k(y_i, y_{i'}) - \frac{2}{NM} \sum_{j=1}^M \sum_{i=1}^N k(G_{\theta}(u_j), y_i) \nonumber \\
&+ \frac{1}{M(M-1)} \sum_{j \neq j'} k(G_{\theta}(u_j), G_{\theta}(u_{j'})). \label{eq:mmdapprox}
\end{talign}
The gradient of equation \eqref{eq:mmdapprox} can also be approximated using a U-statistic as follows: assuming that the generator is differentiable with respect to $\theta$ with gradient $\nabla_{\theta} G_{\theta}$ then the U-statistic of the gradient of the squared MMD is 
\begin{talign*}
   \hat{J}_{\theta}(u_{1:M}, y_{1:N}) := \nabla_{\theta} \widehat{\MMD}^2(\P, \P_{\theta}) = \frac{2}{M(M-1)} &\sum_{j \neq j'}^M \nabla_{\theta} G_{\theta}(u_j) \nabla_1 k(G_{\theta}(u_j), G_{\theta}(u_{j'})) \\
   &- \frac{2}{NM} \sum_{j=1}^M \sum_{i=1}^N \nabla_{\theta} G_{\theta}(u_j) \nabla_1 k(G_{\theta}(u_j), y_i)
\end{talign*}
where $\nabla_1$ denotes the partial derivative with respect to the first argument and by noting that the first term in \eqref{eq:mmdapprox} does not depend on $\theta$. This is an unbiased statistic in the sense that 
\begin{talign*}
\E_{u_{1:M} \simiid \U} \left[ \E_{y_{1:N \simiid \P}} [\nabla_{\theta} \widehat{\MMD}^2 (\P, \Ptheta)]\right] = \nabla_{\theta} \MMD^2(\P, \Ptheta).
\end{talign*}
Using this approximation we can use gradient-based methods to minimise the required objective. For example, one can use stochastic gradient descent as follows; for learning rate $\eta$, at each iteration $k = 1, 2, \dots$:
\begin{enumerate}
    \item Sample $u_{1:M} \simiid \U$ and $y_{1:N} \simiid \P$ 
    \item Set $\theta_k = \theta_{k-1} - \eta  \hat{J}_{\theta_{k-1}}(u_{1:M}, y_{1:N})$
\end{enumerate}

\subsection{Experimental Setup} \label{app:exp-details}

We provide details on the experimental setup required to produce figures \ref{fig-post-marg-gauss}, \ref{fig-post-marg-gnk} and \ref{fig-post-marg-togswitch}. For the WABC method we make use of the \texttt{winference R} package and the experimental setup provided in \textcite{bernton2019approximate}. For all experiments with our method we use \texttt{JAX} \parencite{bradbury2020jax} to parallelize the bootstrap sampling and perform the optimisation step. 

For all three experiments we use the Adam optimiser \parencite{kingma2014adam} for the minimization of the MMD at each bootstrap estimation. The learning rate is set to $\eta = 0.1$ for the Gaussian location and g-and-k models and to $\eta = 0.04$ for the toggle-switch model. To ensure convergence, we perform 2000 optimizations steps in the toggle-switch experiment and 1000 steps for the rest. 

The prior distributions set in the WABC method serve as an indication for the initialisation point of the optimization. In the Gaussian location model, a centered normal prior is imposed on the parameter, hence we initialize the optimization step at $\theta = 0$ at each bootstrap iteration. Similarly, the g-and-k model uses a Uniform prior on $[0,10]$ for all parameters so we initialise at the midpoint $(5,5,5,5)$. Finally, for the toggle-switch model we employ a random restart method as follows; 500 random values are sampled from the prior uniform distributions set in \textcite{bonassi2015sequential} and \textcite{bernton2019approximate}. The approximated MMD loss is computed for all of them and the three initial values with the smallest loss are chosen as initial points. The optimization is performed for all three starting points for each bootstrap iteration and the estimator with the smallest loss is retained. 

Finally, the length scale $l$ of the Gaussian kernel is set using the median heuristic $l = \sqrt{\text{median}_{1 \leq 1,j \leq n}(\| x_i - x_j \|^2_2)}$  suggested in \textcite{gretton2012kernel} and \textcite{dziugaite2015training} in the Gaussian location example while for the g-and-k model we set $l = 0.15$. For the toggle-switch model, we suggest an unweighted mixture of Gaussian kernels discussed in \textcite{DBLP:conf/iclr/SutherlandTSDRS17}, i.e. $k(x,x') = \sum_{i=1}^I \exp(-\|x-x'\|_2^2/(2 l_i^2))$ for a range of values $l_{i} \in \{1,10,20,40,80,100,130,200,400,800,1000\}$.

\subsection{The G-and-k Distribution}
We visualise the observed data from the G-and-k distribution used in the contaminated model example of section \ref{sec:experiments} for an increasing number of outliers with $\theta_0 = (3,1,1,-\log(2))$. We further provide the density obtained by generating 2000 samples from the model using the posterior mean $\theta = \frac{1}{B} \sum_{j=1}^B \theta^{(j)}$ for each method. The outliers of the contaminated models are visualised on the histograms of figure \ref{fig:gnkhists}. The densities of the fitted models show the sensitivity of the two methods to the two degrees of outliers; the densities fitted by the NPL-MMD remain similar and close to $\P_{\theta_0}$ whereas the densities fitted by WABC are affected by the outliers in the misspecifed cases.
 \begin{figure*}[t!]
\centering
\includegraphics[width=\textwidth]{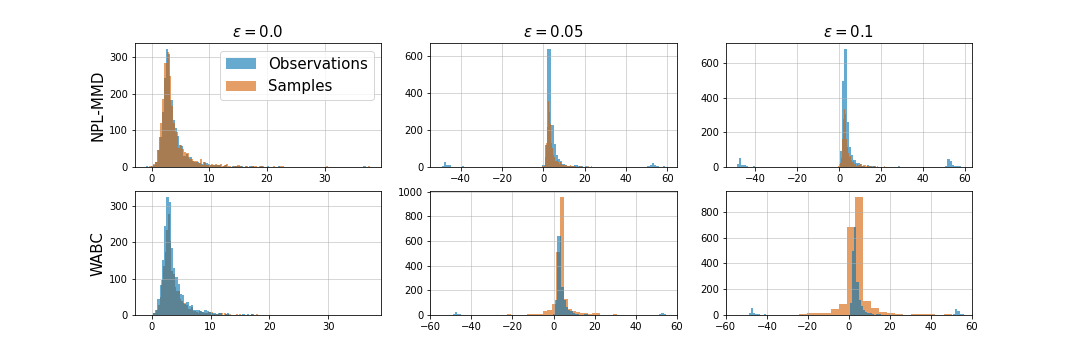}
\caption{Histogram of 2000 samples from the g-and-k distribution with $\theta_0 = (3,1,1,-\log(2))$ (blue) and 2000 samples from the fitted models using the posterior mean (orange) for the NPL-MMD method (above) and the WABC method (below).}
\label{fig:gnkhists}
\end{figure*}
 Figure \ref{fig:gnkboot} visualises the densities obtained at each bootstrap iteration of the NPL-MMD method. Here, each density corresponds to 2000 samples from the g-and-k distribution and parameter $\theta^{(j)}$ for $j = 1, \dots, B$. It is hence clear how each sample $\P^{(j)}$ from the DP posterior is mapped through $\theta^{*}(\cdot)$ to a value $\theta^{(j)}$ which gives rise to a different density. As we would expect, there is more variability in the densities for higher contamination levels. 
 \begin{figure*}[t!]
\centering
\includegraphics[width=\textwidth]{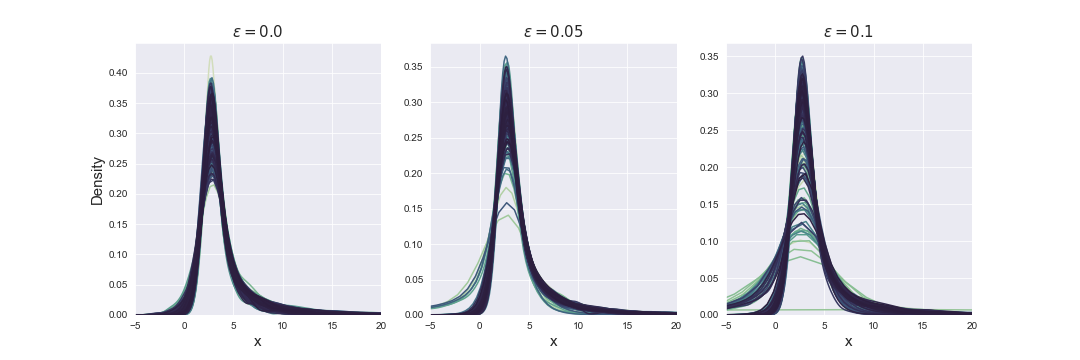}
\caption{Realisations of the G-and-k distribution with each $\theta^{(j)}$ where $j = 1, \dots, B$ obtained from the bootstrap iteration of the MMD posterior bootstrap algorithm for an increasing degree of contamination in the dataset.}
\label{fig:gnkboot}
\end{figure*}
Finally, we illustrate the rate obtained in Corollary \ref{cor-contam-mis} in the absence of outliers for the G-and-k distribution example. Since we take $\epsilon = \alpha = 0$, the Corollary implies that the expected MMD is bounded above by $\frac{2}{\sqrt{n}}$. To obtain an approximation of $\E_{x_{1:n} \simiid \truedist} \big[ \E_{\P \sim \nplposterior} \big[ \MMD^2 \big( \tilde{\P}, \P_{\theta^\ast(\P)}\big) \big] \big]$ we perform 10 runs of the algorithm and take the posterior mean $\hat{\theta} = \frac{1}{B} \sum_{i=1}^B \theta^{(i)}$ as the estimate of $\theta$. We then sample $15\times 10^3$ instances from $\truedist = \P_{\theta_{0}}$ and $\P_{\hat{\theta}}$ and estimate the squared MMD using the U-statistic in equation \eqref{eq:mmdapprox}. We repeat this experiment for ten values of sample size $n \in [250,4000]$ and plot $\sqrt{\E[\widehat{\MMD}^2(\P_{\hat{\theta}}, \truedist)]}$ against $\frac{2}{\sqrt{n}}$ in figure \ref{fig:sqrtnexp}. We observe that the estimate of the square root of the expected squared MMD is bounded above by the curve $\frac{2}{\sqrt{n}}$ and by Jensen's inequality it is implied that the estimate of the approximate MMD will also be bounded above since: 
$$
\E[\widehat{\MMD}(\P_{\hat{\theta}}, \truedist)] \leq \sqrt{\E[\widehat{\MMD}^2(\P_{\hat{\theta}}, \truedist)]} \leq \frac{2}{\sqrt{n}}.
$$

\subsection{The Toggle-Switch Model}\label{appendix:toggle_switch}

\subsubsection{Description of the Simulator}

For cell $i$ and unknown parameters $\theta = (\alpha_1,\alpha_2,\beta_1,\beta_2, \mu,\sigma, \gamma)^\top$, the simulator input is $u_i = (u_{i,1,1},u_{i,1,2},\ldots,u_{i,T,1},u_{i,T,2},u_{i,T+1,1})^\top \sim \text{Unif}([0,1]^{2T+1})$. The simulator $G_\theta$ is defined through:
\begin{talign*} 
G_\theta(u_i) & = \Phi^{-1}\Big(\Phi\Big(\frac{-(\mu +v_{i,T}) v_{i,T}^{\gamma}}{\mu \sigma}\Big)+u_{i,T+1,1} \Big( 1 - \Phi\Big(\frac{-(\mu +v_{i,T}) v_{i,T}^{\gamma}}{\mu \sigma}\Big) \Big) \Big) \frac{\mu \sigma}{v_{i,T}^{\gamma}} + (\mu +v_{i,T}) 
\end{talign*}
where for $t=1,\dots,T-1$, we have
\begin{talign*}
    \tilde{v}_{i,t+1} &= v_{i,t} + \frac{\alpha_1}{1+w_{i,t}^{\beta_1}} -(1 +0.03 v_{i,t}) \\
    \tilde{w}_{i,t+1} &= w_{i,t} + \frac{\alpha_2}{1+v_{i,t}^{\beta_2}} -(1 +0.03 w_{i,t}) \\
    v_{i,t+1} &= \tilde{v}_{i,t+1} + 0.5 \Phi^{-1}\big(\Phi(-2 \tilde{v}_{i,t+1}) + u_{i,t,1} (1-\Phi(-2\tilde{v}_{i,t+1}))\big) \\
    w_{i,t+1} &= \tilde{w}_{i,t+1} +0.5 \Phi^{-1}\big(\Phi(-2 \tilde{w}_{i,t+1})  + u_{i,t,2} (1-\Phi(-2\tilde{w}_{i,t+1}))\big) 
 \end{talign*} 
and $\Phi$ denotes the CDF of the standard Gaussian distribution. We use the initial conditions $v_{i,0} = 10$, $w_{i,0} = 10$. 
 \begin{figure*}[t!]
\centering
\includegraphics[width=0.5\textwidth]{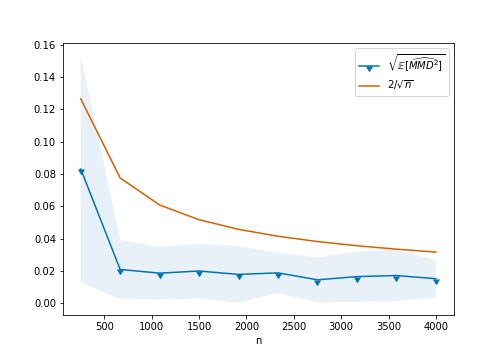}
\caption{Illustration of the upper bound of the generalisation error in Corollary \ref{cor-contam-mis} in the absence of outliers for the G-and-k distribution.}
\label{fig:sqrtnexp}
\end{figure*}

\subsection{Results Over Multiple Independent Runs} \label{app:table-results}
We provide results for a number of independent runs of the experiments in section \ref{sec:experiments}. For each run, a new dataset was generated and $B$ posterior samples were obtained for each parameter $\theta$. For each such sample, the mean estimator is recorded and the normalized mean squared error between the estimator and the true value of $\theta$ are presented with their standard deviations in tables \ref{table-gaussian}, \ref{table-gandk} and \ref{table-togswitch} for the Gaussian location, G-and-k and Toggle-Switch models respectively.  

\begin{table}[h!]
\caption{Experiment results for the Gaussian model over 10 runs} \label{table-gaussian}
\begin{center}
\begin{tabular}{c||c|c|c}
      \textbf{Method}  & \multicolumn{3}{c}{\textbf{NMSE (std)}} \\
      \hline \hline
     &  \textbf{$\epsilon=0$} & \textbf{$\epsilon=0.05$} &\textbf{$\epsilon=0.1$} \\ 
      \hline
      \textbf{NPL-MMD} & 0.0107 (0.00527)  & 0.00889 (0.00699) & 0.0113 (0.00538) \\
      \textbf{WABC} & 0.743 (0.0358) & 0.768 (0.0624) & 0.757 (0.0242) \\
      \textbf{NPL-WLL} & 0.00510 (0.00293) & 0.945 (0.0572) & 3.57 (0.0967) \\
      \textbf{NPL-WAS} & 0.00689 (0.00333) & 0.0189 (0.00627) & 0.0397 (0.0129) \\
      \textbf{MMD-ABC} & 0.750 (0.105) & 0.755 (0.0451)  &  0.760 (0.0411)
    \end{tabular}
    \end{center}
\end{table}
\begin{table}[!t]
\caption{Experiment results for the G-and-k distribution over 10 runs} \label{table-gandk}
\begin{center}
\begin{tabular}{c||c|c|c}
      \textbf{Method}  & \multicolumn{3}{c}{\textbf{NMSE (std)}} \\
      \hline \hline
     &  \textbf{$\epsilon=0$} & \textbf{$\epsilon=0.05$} &\textbf{$\epsilon=0.1$} \\ 
      \hline
      \textbf{NPL-MMD} & 0.00791 (0.00524) & 0.0128 (0.0117) & 0.0593 (0.0255) \\
      \textbf{WABC} & 0.00142 (0.000965) & 0.585 (0.0188)  & 0.532 (0.0109) \\
    \end{tabular}
    \end{center}
    \vspace{1cm}
\caption{Experiment results for the Toggle Switch model over 5 runs} \label{table-togswitch}
\begin{center}
\begin{tabular}{c||c}
      \textbf{Method}  & \textbf{NMSE (std)} \\
      \hline \hline
      \textbf{NPL-MMD} & 0.338 (0.277) \\
      \textbf{WABC} & 13.99 (13.8)  \\
    \end{tabular}
    \end{center}
\end{table}

\section{ADDITIONAL EXPERIMENTS} \label{app:additional-exps}
We provide some additional experiments for our method by considering a type of misspecification \texttt{other} than a contamination model, comparison with MMD-Bayes in \textcite{pacchiardi2021generalized} and exploring sensitivity to the DP prior (through $\alpha$ and  $\mathbb{F}$), Gaussian kernel length scale $l$ and truncation limit $T$. 
\subsection{Misspecified Gaussian Location Model}
So far in our empirical experiments we have considered the contamination model which is canonical for analysing misspecification in robust statistics, mostly because of theoretical convenience. While the model is simple, methods that are robust against contamination models usually fare well for more practically relevant alternatives, too.
Figure \ref{fig: mispecified_example} illustrates this on a new numerical example: We generate Cauchy-distributed data, but wrongly fit a Gaussian to it. The plot shows the posterior marginals for the location parameter, and vertically marks its true location. The MSEs over 20 runs was 0.0289 (NPL-MMD), 48.2 (NPL-WLL) and 28.6 (Standard Bayes).

\begin{figure}[t!]
\centering
	\includegraphics[width = 0.5\textwidth]{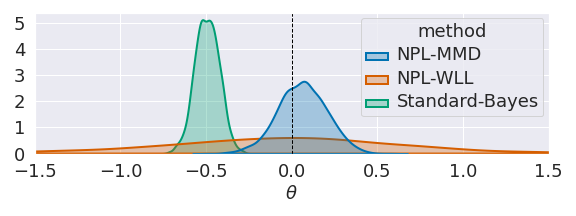}
	\caption{Marginal posterior distribution for mean of Normal location model with Cauchy data.}
	\label{fig: mispecified_example}
\end{figure}

\subsection{Comparison to MMD-Bayes for the Gaussian Location Model}	

We further compare our method to MMD-Bayes (\texttt{KernelScore}) in \textcite{pacchiardi2021generalized} on an $\varepsilon$-contaminated Gaussian location model with outliers at location $z$, where the weight is chosen using the grid search as in Section 4.2 of \textcite{pacchiardi2021generalized}. Figure \ref{fig: mmd_bayes} below, shows the marginal posterior distributions for Standard Bayes, Kernel Score and the MMD Posterior bootstrap methods for an increasing number of outliers and location parameter. 
\begin{figure}[t!]
\centering
        \includegraphics[width = 0.8 \textwidth]{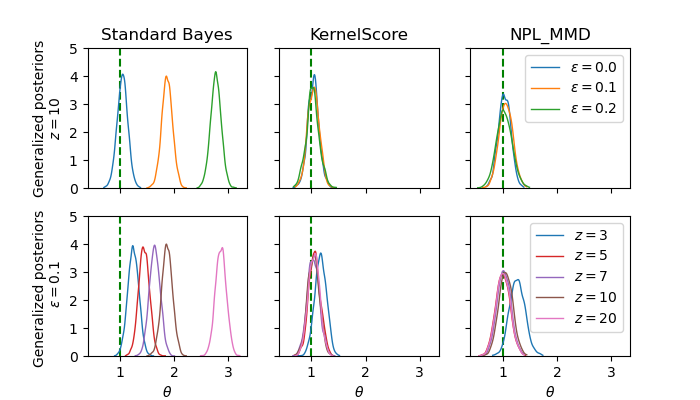}
	\caption{Contaminated Normal location model.}
	\label{fig: mmd_bayes}
\end{figure}

\subsection{Sensitivity to Hyperparameters}
In this section we empirically examine the sensitivity of the proposed method to several hyperparameters for the G-and-k distribution model. 

\subsubsection{Sensitivity to the DP Prior} \label{sec:alpha}
We first set $T = n$ in \eqref{eq-DP-post_approx} and $\mathbb{F}$ to be the Normal distribution with parameters equal to the mean and standard deviation of the observed data. Here we examine the effect of the prior in two ways; first by altering the hyperparameter $\alpha$ of the DP prior which characterizes how much certainty we impose on $\mathbb{F}$ and second by choosing an increasingly `worse' prior $\mathbb{F}$, since a higher proportion of outliers leads to a worse empirical estimates of the mean and standard deviation in the Normal prior. 

We generate $B = 2^9$ bootstrap samples $\theta_{1}, \dots, \theta_{B}$ for different values of $\alpha$ ranging in $[0.01, 300]$ and take the mean estimator $\hat{\theta} = \frac{1}{B} \sum_{b=1}^B \theta_b$ for each value of $\alpha$. In Figure \ref{fig:exp_hyp_alpha}, we plot (a) the normalised mean squared error for each estimator $\hat{\theta}$ for an increasing value of $\alpha$ and (b) samples from the generator with parameter $\hat{\theta}$. We observe that in the well-specified case, $\alpha$ has no significant effect in inference, however as we would expect, a larger value of $\alpha$, in combination with a worse prior centering measure increasingly affects the parameter inference. 

\begin{figure}[t!]
\centering
\subfloat[Normalised mean squared error of the obtained estimator for an increasing value of $\alpha$, corresponding to a higher confidence in the prior centering measure $\mathbb{F}$.]{
\includegraphics[width=0.4\textwidth]{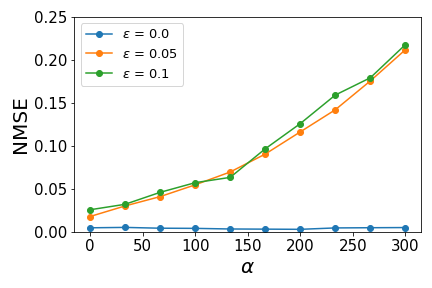}} \hfill
\subfloat[Generator samples obtained from each estimator $\hat{\theta}$. Lighter (resp. darker) curves correspond to a smaller (resp. larger) value of $\alpha$. The dotted line curve corresponds to the observed dataset in the well-specified case.]{
\includegraphics[width=\textwidth]{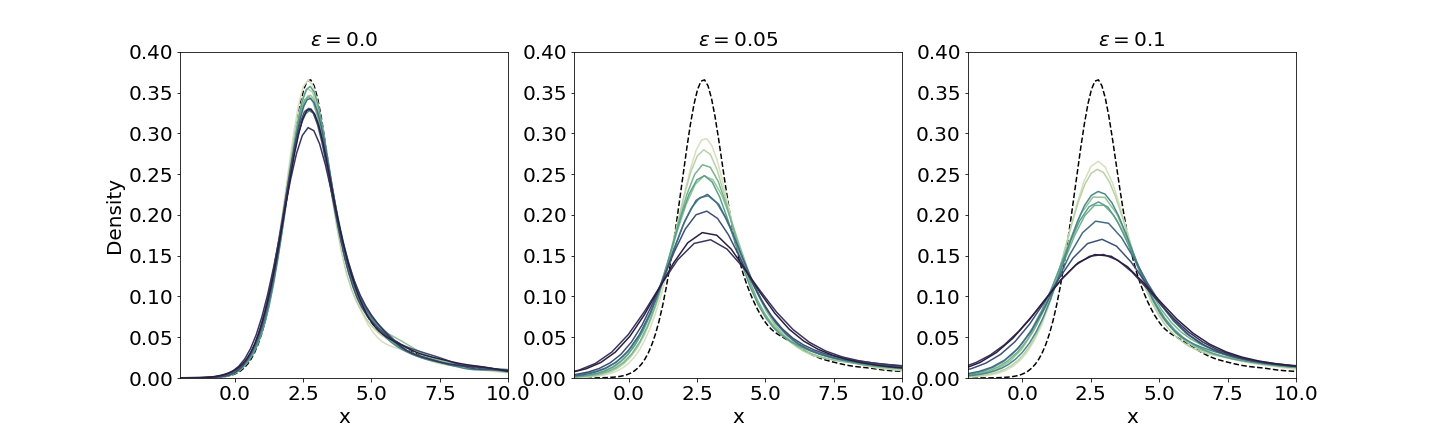}}
\caption{Sensitivity to DP prior.}
\label{fig:exp_hyp_alpha}
\end{figure}

\subsubsection{Sensitivity to $T$}
We further illustrate the effect of the truncation limit $T$ in the approximation of the DP posterior measure in \eqref{eq-DP-post_approx}. We fix $\alpha = 0.1$ and set $\mathbb{F}$ as in section \ref{sec:alpha}. We generate $B = 2^9$ bootstrap samples $\theta_{1}, \dots, \theta_{B}$ for different values of $T$ ranging in $[10,5000]$ and take the mean estimator $\hat{\theta} = \frac{1}{B} \sum_{b=1}^B \theta_b$ for each value of $T$. Figure \ref{fig:exp_hyp_T} shows (a) the normalised mean squared error for each estimator $\hat{\theta}$ for an increasing value of $T$ and (b) samples from the generator with parameter $\hat{\theta}$. We observe that in this example, the method is not significantly sensitive to the choice of $T$ for all values of $\epsilon$. 

\begin{figure}[t]
\centering
\subfloat[Normalised mean squared error of the obtained estimator for an increasing value of $T$.]{
\includegraphics[width=0.4\textwidth]{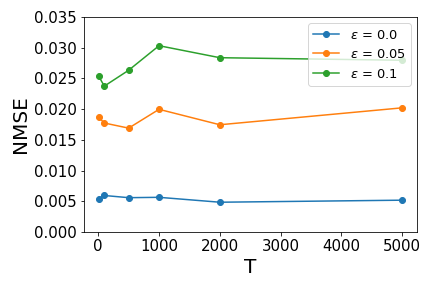}}
\hfill
\subfloat[Generator samples obtained from each estimator $\hat{\theta}$. Lighter (resp. darker) curves correspond to a smaller (resp. larger) value of $T$. The dotted line curve corresponds to the observed dataset in the well-specified case.]{\includegraphics[width=\textwidth]{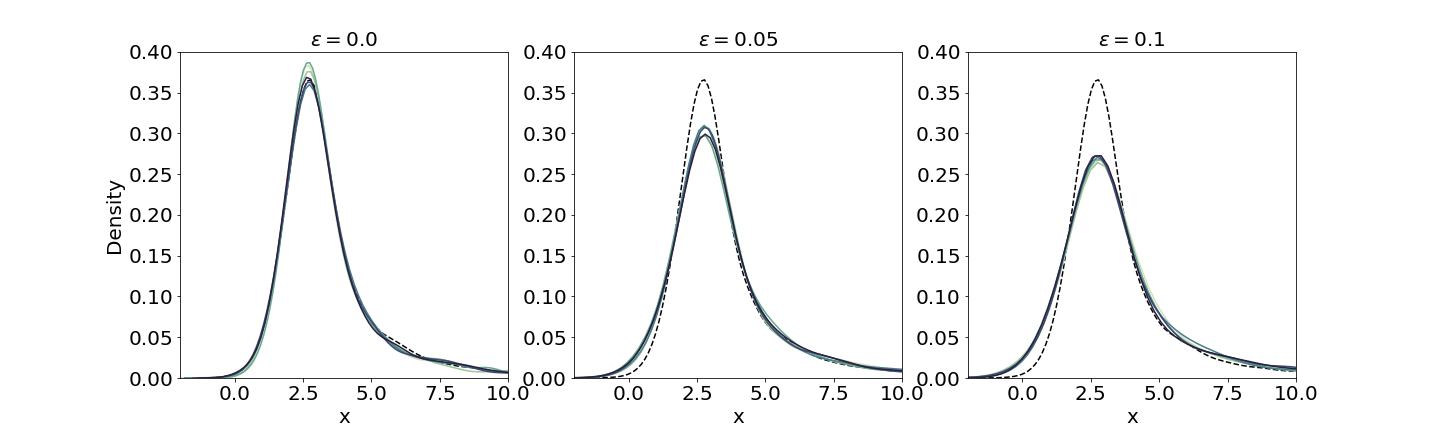}}
\caption{Sensitivity to parameter $T$, the truncation limit of the DP posterior.}
\label{fig:exp_hyp_T}
\end{figure}

\subsubsection{Sensitivity to the Hyperparameter of Gaussian Kernel}

We lastly investigate the effect of the length scale $l$ of the Gaussian kernel. We fix $\alpha = 0.1$, $T = n$ and set $\mathbb{F}$ as in \ref{sec:alpha}. We generate $B = 2^9$ bootstrap samples $\theta_{1}, \dots, \theta_{B}$ for different values of the length scale $l$ of the Gaussian kernel ranging in $[10^{-1},10^{2}]$ and take the mean estimator $\hat{\theta} = \frac{1}{B} \sum_{b=1}^B \theta_b$ for each value of $l$. Figure \ref{fig:exp_hyp_l} shows (a) the normalised mean squared error for each estimator $\hat{\theta}$ for an increasing value of $l$ in logarithmic scale and (b) samples from the generator with parameter $\hat{\theta}$. 

\begin{figure}[t!]
\centering
\subfloat[Normalised mean squared error of the obtained estimator for an increasing value of length scale (in logarithmic scale).]{\includegraphics[width=0.4\textwidth]{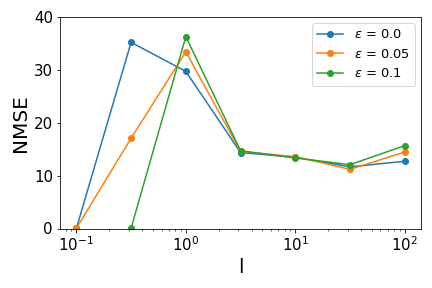}} \hfill
\subfloat[Generator samples obtained from each estimator $\hat{\theta}$. Lighter (resp. darker) curves correspond to a smaller (resp. larger) value of $l$. The dotted line curve corresponds to the observed dataset in the well-specified case.]{\includegraphics[width=\textwidth]{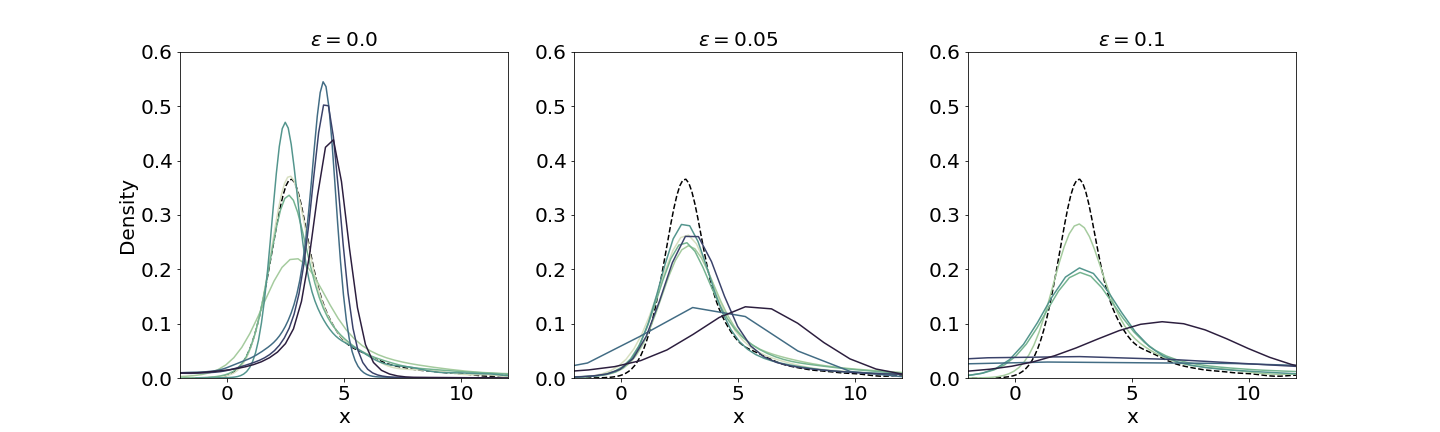}}
\caption{Sensitivity to the length scale of the Gaussian kernel.}
\label{fig:exp_hyp_l}
\end{figure}

To get some more intuition we plot the MMD loss as a function of $\theta_3 = g$ around a neighborhood of the true parameter value $\theta_0 = 1$. Since there is noise in our estimate of the MMD, it is possible that we get a global minimum for some value of $\theta_3$ far away from one. This is unlikely to happen for $l<1$ because there is a much bigger dip near $\theta_0$ as can be seen in figure \ref{fig:exp_hyp_l_loss}. Of course, this is just a projection of the loss landscape during optimisation, however it gives some intuition as to why a small choice of length scale in this model leads to better results. 
\begin{figure}[t!]
\centering
\includegraphics[width=0.5\textwidth]{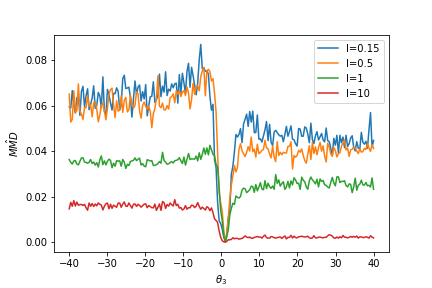}
\caption{Estimate of the MMD loss as a function of $\theta_3$ in the G-and-k distribution for different values of the length scale of the Gaussian kernel.}
\label{fig:exp_hyp_l_loss}
\end{figure}

\end{document}